\definecolor{darkgray}{gray}{0.85} 
\newcommand{\declarecolor}[2]{\definecolor{#1}{RGB}{#2}\expandafter\newcommand\csname #1\endcsname[1]{\textcolor{#1}{##1}}}
\definecolor{mydarkblue}{rgb}{0,0.08,0.45}
\let\E\relax
\let\log\relax
\let\poly\relax
\newcommand{\log}{\mathsf{log}}
\newcommand{\poly}{\mathsf{poly}}
\newcommand{\dens}{\mathrm{dens}}
\newcommand{\supp}{\text{supp}}
\newcommand{\NPtilde}{\mathsf{N}\widetilde{\mathsf{P}}}
\newcommand{\sparsecce}{\textsc{OptimalSparseCCE}}
\newcommand{\unsparsecce}{\textsc{SparseCCE}}
\newcommand{\uniquesparsecce}{\textsc{UniqueUniformSparseCCE}}
\newcommand{\subsetsparsecce}{\textsc{SubsetUniformSparseCCE}}
\newcommand{\newaction}{\mathsf{O}}
\newcommand{\hatvx}{\widehat{\vec{x}}}
\newcommand{\hatvy}{\widehat{\vec{y}}}
\newcommand{\epseq}{\epsilon}
\newcommand{\epswel}{\hat{\epsilon}}
\newcommand{\opt}{\mathsf{OPT}}
\newcommand{\uni}{\vec{u}}
\newcommand{\maxclique}{\textsc{MaxClique}}
\newcommand{\plantedclique}{\textsc{PlantedClique}}
\newcommand{\TO}{T_{\newaction}}
\newcommand{\barTO}{\overline{T}_{\newaction}}
\newcommand{\hatvmu}{\widehat{\vmu}}
\DeclareMathOperator{\pr}{\mathbb{P}}
\newcommand{\reg}{\mathsf{Reg}}
\newcommand{\vu}{\vec{u}}
\newcommand*{\Q}{{\mathbb{Q}}}
\newcommand*{\N}{{\mathbb{N}}}
\let\R\relax
\newcommand*{\R}{{\mathbb{R}}}
\newcommand*{\E}{{\mathbb{E}}}
\newcommand*{\cG}{{\mathcal{G}}}
\newcommand*{\cA}{{\mathcal{A}}}
\newcommand{\defeq}{\coloneqq}
\newcommand{\nakedcite}[1]{\citeauthor{#1}, \citeyear{#1}}
\newcommand{\vx}{\vec{x}}
\newcommand{\vy}{\vec{y}}
\newcommand{\vmu}{\vec{\mu}}
\DeclareMathOperator{\sw}{\mathsf{SW}}
\newcommand{\vxstar}{\vec{x}^\star}
\renewcommand{\vec}[1]{\bm{#1}}
\newcommand{\mat}[1]{\mathbf{#1}}
\theoremstyle{plain}
\newtheorem{theorem}{Theorem}[section]
\newtheorem{lemma}[theorem]{Lemma}
\newtheorem{corollary}[theorem]{Corollary}
\newtheorem{proposition}[theorem]{Proposition}
\newtheorem{conjecture}[theorem]{Conjecture}
\newtheorem{claim}[theorem]{Claim}
\theoremstyle{definition}
\newtheorem{definition}[theorem]{Definition}
\theoremstyle{remark}
\title{Barriers to Welfare Maximization with\\ No-Regret Learning}
\author[1]{Ioannis Anagnostides}
\author[2]{Alkis Kalavasis}
\author[3]{Tuomas Sandholm}
\affil[1,3]{Carnegie Mellon University}
\affil[2]{Yale University}
\affil[3]{Strategy Robot, Inc.}
\affil[3]{Strategic Machine, Inc.}
\affil[3]{Optimized Markets, Inc.}
\affil[ ]{\texttt{\{ianagnos,sandholm\}}\texttt{@cs.cmu.edu}, \texttt{alkis.kalavasis}\texttt{@yale.edu}}
\begin{document}

\maketitle

\begin{abstract}
    A celebrated result in the interface of online learning and game theory guarantees that the repeated interaction of \emph{no-regret} players leads to a \emph{coarse correlated equilibrium (CCE)}---a natural game-theoretic solution concept. Despite the rich history of this foundational problem and the tremendous interest it has received in recent years, a basic question still remains open: how many iterations are needed for no-regret players to approximate an equilibrium? In this paper, we establish the first \emph{computational} lower bounds for that problem in two-player (general-sum) games under the constraint that the CCE reached approximates the optimal social welfare (or some other natural objective). From a technical standpoint, our approach revolves around proving lower bounds for computing a near-optimal \emph{$T$-sparse} CCE---a mixture of $T$ product distributions, thereby circumscribing the iteration complexity of no-regret learning even in the centralized model of computation. Our proof proceeds by extending a classical reduction of Gilboa and Zemel [1989] for optimal Nash to sparse (approximate) CCE. In particular, we show that the inapproximability of maximum clique precludes attaining \emph{any non-trivial sparsity} in polynomial time. Moreover, we strengthen our hardness results to apply in the low-precision regime as well via the \emph{planted clique conjecture}.
\end{abstract}


\pagenumbering{gobble}

\clearpage

\pagenumbering{arabic}


\section{Introduction}
\label{sec:intro}

One of the most influential results in the interface of algorithmic game theory and online learning is the realization that repeated play under \emph{no-regret}---a basic notion of hindsight rationality--- leads to a natural game-theoretic solution concept known as \emph{coarse correlated equilibrium (CCE)}~\citep{Hart00:Simple,Foster97:Calibrated}. Many ubiquitous algorithms guarantee the no-regret property, including (online) gradient descent and multiplicative weights update, and so one should expect CCE to arise from the repeated interaction of rational players---as it has been corroborated empirically~\citep{Nekipelov15:Econometrics,Kolumbus22:Auctions}. From an algorithmic standpoint, perhaps the most well-studied question that emerged from that connection concerns the number of iterations needed to approximate an equilibrium. Remarkably, although this problem traces back to the early pioneering works of~\citet{Blackwell56:analog} and ~\citet{Robinson51:iterative} in the 1950s, it remains poorly understood. This stands in contrast to the so-called adversarial regime, wherein a learner engages repeatedly with an adversarial environment acting so as to maximize the player's regret; in that setting, the minimax regret of the learner has long been resolved in the online learning literature~\citep{Littlestone94:Weighted}. Yet, it turns out that substantially improved guarantees are possible when the learner is instead competing against other learning players, as witnesses by a flurry of recent results (\emph{e.g.},~\citep{Daskalakis15:Near,Syrgkanis15:Fast,Rakhlin13:Optimization,Daskalakis21:Near,Piliouras22:Beyond}). Indeed, such learning dynamics have emerged as a key component in practical equilibrium computation~\citep{Brown19:Superhuman,Bowling15:Heads,Bakhtin22:Human,Perolat22:Mastering}, proving to be more scalable than traditional linear programming-based approaches.

In this paper, we study the iteration-complexity of no-regret learning in games under the constraint that the equilibrium reached approximates the optimal social welfare (or some other natural objective); henceforth, we will simply refer to such equilibria as near-optimal. Taking a step back, there has been tremendous interest in understanding the performance of no-regret learning in terms of welfare, primarily stemming from the \emph{price of anarchy} literature (\emph{e.g.},~\citep{Roughgarden15:Intrinsic,Blum08:Regret}), but here we ask an entirely different but equally fundamental question: 
\begin{quote}
\begin{center}
\emph{How many iterations are needed so that no-regret players converge to a near-optimal (approximate) equilibrium?}
\end{center}
\end{quote}
In terms of proving a lower bound, one natural approach revolves around the premise that players initially possess no information about the game, and in each iteration they receive only some limited utility feedback. We argue that there are certain caveats to such an approach. First, it does not apply to the usual centralized model of computation where the underlying game is given as part of the input. Furthermore, even in decentralized settings there is often a centralized party endeavoring to intervene and guide players to desirable outcomes~\citep{Mguni19:Coordinating,Kempe20:Inducing,Li20:End,Liu22:Inducing,Balcan13:Circumventing,Balcan14:Near}. Many models have been proposed that differ based on the amount of information gathered by the centralized party, as well as the way communication occurs between the different entities; each such model is arguably reasonable depending on the application. The question thus is how to come up with a lower bound that is not brittle to assumptions regarding the way information is distributed, and thereby applies to all such settings.

We address this by resorting to \emph{computational} lower bounds, thereby ruling out fast convergence to a near-optimal equilibrium even when the entire game is known in advance---subsuming the so-called \emph{full feedback} setting (recalled in~\Cref{sec:prels})---and players can fully coordinate; at first glance, it might seem counterintuitive that non-trivial lower bounds can be established under such permissive assumptions. In particular, we focus on perhaps the simplest class of games for which such questions become meaningful: two-player (general-sum) games represented in normal form; since we are aiming to prove lower bounds, concentrating on a simple class of games only makes the result stronger.

\subsection{Our results}

We establish tight computational lower bounds for the number of iterations needed for no-regret players to reach a near-optimal equilibrium in two-player games. To do so, a key observation that drives our approach is that no-regret learning produces, essentially by definition, a CCE with a particular structure: one expressed as a \emph{mixture} (that is, a convex combination) of product distributions. In particular, $T$ rounds of learning results in a mixture of $T$ product distributions. We call such a distribution \emph{$T$-sparse} (\Cref{def:sparse}). In this context, our main contribution is to prove hardness results for the problem of computing a near-optimal $T$-sparse CCE, which---by virtue of the observation above---immediately circumscribes the number of iterations for no-regret learning as well, even in the centralized model of computation. Even though this is a fundamental problem, to our knowledge, we are the first to examine its computational complexity as a function of $T$.


One special case of this problem is well-understood: a near-optimal $1$-sparse CCE is nothing other than a near-optimal \emph{Nash equilibrium}, treated in the seminal work of~\citet{Gilboa89:Nash} (and subsequently extended by~\citet{Conitzer08:New} and~\citet{Kothari18:Sum}), and shown to be $\NP$-complete. On the other end of the spectrum, assuming that each player has $n$ available actions, it is easy to see that any correlated distribution---and in particular any CCE---is $n$-sparse. In light of the well-known fact that the optimal CCE can be computed in polynomial time via a linear program (\Cref{prop:LP}), we see that there is a phase transition dictated by the sparsity parameter. In fact, our first main result shows that attaining non-trivial sparsity in polynomial time is impossible (subject to $\P \neq \NP$). Below, for an $n \times n$ two-player game $\cG$, we denote by $\sparsecce(\cG, T, \epsilon, \epswel)$ the problem of computing a $T$-sparse CCE with equilibrium gap at most $\epsilon$ (in an additive sense; see~\Cref{def:cce}) and welfare at least $\opt - \epswel$, where $\opt$ is the welfare attained by the optimal $T$-sparse CCE. (Further background is given later in~\Cref{sec:prels}.)

\begin{restatable}{theorem}{main}
    \label{theorem:main1}
    $\sparsecce(\cG, n^{1 - \epsilon}, n^{-c}, n^{-c})$ with respect to $n \times n$ games is $\NP$-hard for any constant $\epsilon > 0$ and some constant $c$.
\end{restatable}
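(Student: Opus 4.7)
The plan is to reduce from the hardness of approximating \maxclique. By H\aa stad's theorem, for any constant $\epsilon' > 0$ it is $\NP$-hard to distinguish $m$-vertex graphs $G$ with $\omega(G) \geq m^{1-\epsilon'}$ from those with $\omega(G) \leq O(1)$. Given such a graph $G$, we construct in polynomial time a two-player symmetric game $\cG = \cG(G)$ of size $n \times n$ with $n = \Theta(m)$, extending the classical Gilboa--Zemel template: each player has one action per vertex of $G$, with payoff $1$ on adjacent pairs and $0$ on non-adjacent pairs, augmented by a constant number of auxiliary ``gadget'' actions. The gadget payoffs are calibrated so that every CCE is forced to have approximately symmetric marginals and, crucially, so that this symmetry is inherited---at least on average---by every component of a $T$-sparse decomposition.

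The plan then establishes a dichotomy. In the \emph{completeness} direction, if $\omega(G) \geq m^{1-\epsilon'}$ then the $1$-sparse CCE in which both players play the uniform distribution over a maximum clique achieves social welfare at least $1 - m^{-(1-\epsilon')}$; since any $1$-sparse CCE is a fortiori $T$-sparse, the optimal $T$-sparse welfare satisfies $\opt \geq 1 - m^{-(1-\epsilon')}$. In the \emph{soundness} direction, if $\omega(G) \leq O(1)$, then for every $T$-sparse CCE $\mu = \sum_{t=1}^{T} \lambda_t\, \vec{x}_t \otimes \vec{y}_t$, the gadget-enforced symmetry yields $\vec{x}_t \approx \vec{y}_t$ for each component, and by the Motzkin--Straus theorem the welfare of each (approximately) symmetric product distribution is at most $1 - 1/\omega(G) + o(1)$, a bound that transfers to $\mu$ by averaging. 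The welfare gap between the two cases is therefore $\Omega(1)$, far larger than the prescribed tolerance $\epswel = n^{-c}$ for any constant $c$ once $n$ is large. Taking $\epsilon = 2\epsilon'$, a polynomial-time algorithm for $\sparsecce(\cG, n^{1-\epsilon}, n^{-c}, n^{-c})$ would distinguish the two cases of the \maxclique\ promise problem, contradicting $\P \neq \NP$.

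The hard part will be the soundness step. In the plain Gilboa--Zemel game, a product distribution $\vec{x} \otimes \vec{y}$ can achieve welfare close to $1$ by placing $\vec{x}$ on one shore and $\vec{y}$ on the other shore of a \emph{complete bipartite} subgraph of $G$---a quantity entirely unrelated to $\omega(G)$. Consequently, Motzkin--Straus by itself does not bound the welfare of asymmetric product components, and the reduction must introduce gadget actions whose payoffs either force symmetry outright or penalize any bipartite-like asymmetry. A further subtlety is that the CCE best-response constraint binds only the mixture $\mu$, not its individual components; the argument must leverage those constraints to propagate ``approximate symmetry'' to each of the $T$ components simultaneously, since allowing even one asymmetric component to evade the Motzkin--Straus bound would break the soundness claim. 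Controlling this propagation while preserving the completeness direction, and doing so up to sparsity as large as $T = n^{1-\epsilon}$, is the technical core of the extension from Gilboa and Zemel's Nash setting to the $T$-sparse CCE setting.
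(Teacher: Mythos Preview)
Your plan has a genuine structural gap: as written, the sparsity parameter $T$ never enters the argument. Your completeness step uses a $1$-sparse CCE (uniform over a clique), and your soundness step aims to force $\vx_t \approx \vy_t$ for \emph{every} component of an arbitrary $T$-sparse decomposition and then apply Motzkin--Straus componentwise. If that worked, it would yield the same welfare upper bound for every $T$, hence hardness for $T = n$ as well---contradicting the fact that the welfare-optimal CCE is computable in polynomial time via linear programming. The breakdown is precisely where you suspect: CCE constraints bind only the marginals of the mixture $\vmu$, and there is no gadget that forces per-component symmetry. For instance, $\vmu = \frac12(\vx\otimes\vy) + \frac12(\vy\otimes\vx)$ has perfectly symmetric marginals for arbitrary $\vx,\vy$, so no constraint on $\vmu$ alone can pin down the individual factors. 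You acknowledge this as ``the hard part'' but do not resolve it, and the direction you propose (CCE-constraint gadgets enforcing componentwise symmetry) cannot succeed in general.

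The paper's route is quite different, and the $T$-dependence enters on the \emph{completeness} side. Adding a small bonus $\gamma\mat{I}_n$ to the adjacency block makes diagonal mass strictly more valuable; if $G$ has a $k$-clique, then partitioning it into $T$ pieces of size $k/T$ and taking one uniform-over-a-piece product per mixture yields a uniform $T$-sparse CCE with welfare $1 + \gamma T/k$---strictly increasing in $T$. This is the mechanism that makes the problem $T$-sensitive. For soundness, the paper does not try to enforce symmetry via best-response gadgets; instead, the $\pm k\mat{I}_n$ blocks give an anti-concentration bound $\vx^{(t)}_i,\vy^{(t)}_j \lesssim 1/(k\alpha^{(t)})$ for every component, and the $\gamma\mat{I}_n$ bonus forces any product with welfare exceeding $1$ to have large $\sum_i \vx_i\vy_i$, which together with anti-concentration makes $\|\vx-\vy\|$ small (a \emph{welfare}-based, not constraint-based, route to symmetry). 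Picking the heaviest component ($\alpha^{(t^\star)}\geq 1/T$) and running this argument extracts a clique of size $\approx k/T$, i.e., a $2T$-approximation to $\maxclique$. Zuckerman's $n^{1-\epsilon}$-inapproximability then gives the theorem. In short, you should replace the ``gadgets force per-component symmetry'' idea with the paper's two ingredients: a $T$-dependent completeness via clique partitioning, and a per-component extraction that yields a $\Theta(T)$-approximation to $\maxclique$.
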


This means that roughly $n$ iterations are needed for (computationally bounded) no-regret learners to converge to a CCE with $\poly(1/n)$ equilibrium and optimality gap; that is, the trivial upper bound of $n$ is essentially the best one can hope for. Further, a slightly stronger complexity assumption precludes even a sparsity of $n/2^{(\log n)^{1-\gamma}}$ for some constant $\gamma > 0$ (\Cref{cor:quasihardness}). It is worth noting that $T \defeq n$ iterations also represent a natural information-theoretic threshold: in the full feedback setting, to which our lower bounds readily apply, there is a trivial exploration protocol that enables each player to fully determine its own payoff matrix (by simply iterating over all rows or columns)---trivializing the problem at least in (two-player) zero-sum games.

In addition, \Cref{theorem:main1} establishes a complexity separation between $\sparsecce$ and $\unsparsecce$---the latter problem lifts the welfare constraint imposed by the former. Namely, since $\unsparsecce(\cG, T, 0)$ is in $\PPAD$ even for $T = 1$~\citep{Papadimitriou94:On}, $\sparsecce$ is harder (subject to $\coNP \neq \NP$~\citep{Johnson88:How}) for any sparsity $T \leq n^{1 - \epsilon}$.

Moreover, we strengthen~\Cref{theorem:main1} in two key aspects. First, we show that it applies under a broad class of objectives, beyond (utilitarian) welfare, which additionally includes the \emph{egalitarian} social welfare and each player's (individual) utility (\Cref{cor:egal-wel,cor:oneutil}). Second, \NP-hardness persists for any \emph{multiplicative} approximation to the objective (\Cref{cor:inapprox}). The key construction behind those results, \Cref{theorem:basic-emb}, implies similar hardness results for two other natural problems pertaining to sparse CCE: deciding uniqueness (\Cref{theorem:uniqueness}), and determining existence after excluding certain (joint) action profiles (\Cref{theorem:subset}); those two latter problems do not hinge on any underlying objective.

\paragraph{Technical approach} Compared to Nash equilibria, the crux in proving lower bounds for sparse CCE lies in introducing correlation between the players. Many natural reductions designed for Nash equilibria are of little use even for sparsity $T = 2$, which partly explains why the complexity of sparse CCE remains poorly understood. The key challenge is to identify a basic construction that handles near-optimal $T$-sparse CCE even when $T \gg 1$.

In this context, to prove~\Cref{theorem:main1}, we extend the reduction of~\citet{Gilboa89:Nash}  who proved $\NP$-hardness only when $T = 1$. In particular, they came up with a reduction from the decision version of the maximum clique problem ($\maxclique$) to $\sparsecce(\cG, 1, 0, 0)$. We establish a natural generalization of their reduction (\Cref{alg:maxclique}); namely, we show that computing a \emph{$2T$-approximation} to $\maxclique$ polynomially reduces to $\sparsecce(\cG, T, n^{-c}, n^{-c})$ (\Cref{theorem:algorithm1}). That is, the sparsity of the underlying CCE translates to a degradation in the resulting approximation factor. We are then able to rely on the celebrated inapproximability of $\maxclique$~\citep{Zuckerman07:Linear} (\Cref{theorem:clique-hardness}) to arrive at~\Cref{theorem:main1}. The overall reduction has various new technical aspects, discussed in~\Cref{sec:maxcliqueReduction}. The refinements to~\Cref{theorem:main1} described earlier are established by suitably adjusting this basic reduction (\Cref{sec:further-impl,sec:inapprox-obj}).

\paragraph{Low-precision regime} So far, we have focused on the regime where both the equilibrium and the optimality gap scale as $\poly(1/n)$---a common setting when it comes to equilibrium computation. No-regret learning is often employed in the so-called low-precision regime, which we identify with $\epseq, \epswel \geq 1/\polylog n$. In that setting, even Nash equilibria admit a quasipolynomial-time algorithm~\citep{Lipton03:Playing}, and so one cannot hope to prove---barring major complexity breakthroughs---\NP-hardness results. Instead, following an earlier work by~\citet{Hazan11:How}, we rely on the so-called \emph{planted clique conjecture} from average-case complexity (\Cref{conj:planted}; \Cref{sec:prels} provides a self-contained overview). We are then able to show the following quasipolynomial lower bounds.

\begin{theorem}
    \label{theorem:main2}
    Assuming that~\Cref{conj:planted} holds, the following problems require $n^{\Omega(\log n)}$ time with respect to $n \times n$ games:
    \begin{itemize}[noitemsep,leftmargin=0.5cm]
        \item $\sparsecce(\cG, T, (\log n)^{-c}, (\log n)^{-c})$ for any $T = \polylog n$ and some constant $c = c(T)$,
        \item $\sparsecce(\cG, T, c, c)$ for any $T = O(1)$ and some constant $c = c(T)$.
    \end{itemize}
\end{theorem}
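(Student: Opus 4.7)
The plan is to instantiate the reduction underlying~\Cref{theorem:algorithm1} on instances drawn from the planted-clique distribution. Given a graph sampled from $G(n, 1/2, k)$ (that is, $G(n, 1/2)$ augmented with a planted clique of size $k$), I would form the bimatrix game $\cG_G$ by running~\Cref{alg:maxclique}; this produces a game with $N = \poly(n)$ actions per player. By~\Cref{theorem:algorithm1}, any $T$-sparse $\epseq$-CCE of $\cG_G$ whose welfare is within $\epswel$ of $\opt$ can be post-processed in polynomial time into a clique of the input graph of size at least $\omega/(2T) \geq k/(2T)$, where $\omega$ denotes the clique number. Hence an algorithm for $\sparsecce$ running in time $n^{o(\log n)}$ would contradict~\Cref{conj:planted} for appropriate parameter settings.

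For the first bullet I would set $T = \poly(\log n)$ and choose $k = T \cdot (\log n)^{1+\alpha}$ for a small constant $\alpha > 0$; this keeps $k = \poly(\log n)$ while ensuring $k/(2T) = \omega(\log n)$. Under~\Cref{conj:planted}, recovering any clique of size $\omega(\log n)$ in $G(n, 1/2, k)$ requires $n^{\Omega(\log n)}$ time, which directly yields the claimed lower bound. For the second bullet I would take $T = O(1)$ and $k = C \log n$ for a sufficiently large constant $C = C(T)$, so that $k/(2T) = \Omega(\log n)$; the same conjecture yields the $n^{\Omega(\log n)}$ lower bound, with the constant $c = c(T)$ controlling both the sparsity-dependent approximation loss and the precision.

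The principal subtlety is matching the precision parameters: the reduction of~\Cref{theorem:algorithm1} is stated with $\epsilon = n^{-c}$, whereas~\Cref{theorem:main2} asks for $\epsilon = (\log n)^{-c}$ or even constant slack. To bridge this gap I would revisit the welfare calculation in the Gilboa--Zemel template and argue that the relevant welfare separation is governed by the sparsity parameter $T$, not by the ambient game size. Intuitively, a $T$-sparse CCE of welfare within $o(1/T)$ of the optimum must concentrate its marginal support almost entirely on a large clique, so the welfare gap between ``$\cG_G$ comes from a planted $k$-clique instance'' and ``the underlying graph has clique number $O(\log n)$'' should be $\Omega(1/T)$ rather than $1/\poly(n)$. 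For $T = O(1)$ this already yields constant slack; for $T = \poly(\log n)$ it yields $1/\poly(\log n)$ slack. The hard part will be verifying that this robust gap survives the Gilboa--Zemel construction intact, in the spirit of the constant-gap amplification of~\citet{Hazan11:How} for near-optimal Nash, and in particular showing that the pseudorandomness of $G(n,1/2)$ rules out spurious $T$-sparse CCEs with welfare close to $\opt$ but not supported on the planted clique.
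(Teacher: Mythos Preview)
Your plan for the first bullet is essentially the paper's, but you have manufactured a difficulty that is not there. The precision in \Cref{alg:maxclique} is governed by $k$, not by the ambient game size: Line~\ref{line:gamma} sets $\gamma = \Theta(1/k^8)$ and Line~\ref{line:eps} sets $\epseq = k\gamma/2$, $\epswel = \gamma^2 T/2$. The $n^{-c}$ in the corollary to \Cref{theorem:algorithm1} appears only because there $k$ ranges over $[n]$. Once you fix $k = \Theta(T \log n) = \polylog n$, the precision is automatically $(\log n)^{-c}$, and no ``bridging'' argument is needed. The paper does exactly this, then invokes the boosting lemma of \citet{Hazan11:How} (finding a $100\log n$-clique suffices to recover the planted one) to finish.

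For the second bullet there is a genuine gap, and your proposed fix does not close it. The Gilboa--Zemel game cannot deliver constant precision: the extraction step (\Cref{lemma:induced-clique}) requires $\gamma < 1/\poly(k)$, and both $\epseq$ and $\epswel$ are $O(k\gamma)$, so with $k = \Theta(\log n)$ you are stuck at $(\log n)^{-c}$. Your assertion that the relevant welfare gap is $\Omega(1/T)$ is incorrect for this game; by \Cref{lemma:optimalCCE} the separation is $\gamma T/k = 1/\polylog n$. The paper instead uses a \emph{different} construction for this bullet, adapted from \citet{Hazan11:How}: the $k\mat{I}_n$ block in~\eqref{eq:game} is replaced by a random $\{0,M\}$-valued matrix $\mat{B}$ with $M = \Theta(T) = O(1)$, and the game is blown up to $N = n^{\Theta(1)}$ actions per side. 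The key point is that anti-concentration is now enforced by \emph{fresh} randomness in $\mat{B}$, not by the pseudorandomness of $G(n,1/2)$: with high probability every set of $O(\log n)$ columns of $\mat{B}$ contains an all-$M$ row, so a marginal that puts $\Omega(1)$ mass on any $O(\log n)$ coordinates admits a deviation with \emph{constant} gain. This is what lets $\epseq$ and $\epswel$ be absolute constants, and it is the idea your proposal is missing.
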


The first lower bound is shown by relying on our previous construction behind~\Cref{theorem:main1}. The second one, which concerns the more permissive regime in which $\epsilon, \epswel = \Theta(1)$, adapts the reduction of~\citet{Hazan11:How} pertaining to optimal Nash equilibria. We provide the technical details in~\Cref{sec:low-precision}.

\subsection{Further related work}
\label{sec:related}

The notion of a sparse CCE---a mixture of product distributions (\Cref{def:sparse})---was recently studied by~\citet{Foster23:Hardness} in the context of Markov (aka. stochastic) games to rule out the existence of polynomial-time no-regret algorithms---with respect to potentially non-Markovian deviations (see also the work of~\citet{Peng24:Complexity}). This stands in contrast to games represented in normal form, where the existence of efficient no-regret algorithms has been long known tracing back to~\citet{Blackwell56:analog}. Yet, establishing non-trivial lower bounds for sparse CCE in normal-form games remains an open problem even for sparsity $T = 2$. It is worth highlighting that even though a CCE (without the sparsity constraint) can be computed exactly by solving a linear program~\citep{Papadimitriou08:Computing}, by far the most well-studied approach in the literature revolves around no-regret learning. This can be mostly attributed to the scalability, the minimal memory footprint, as well as the amenability to a distributed implementation of the latter approach, motivating the problem of sparse CCE. Besides this connection with no-regret learning, we argue that sparse CCE is a natural notion, worth examining in its own right, and ties to a long line of work on low-rank approximation in machine learning. 

A related notion of sparsity imposes instead a bound on the number of nonzero elements of the distribution---that is, the size of its support. Unlike~\Cref{def:sparse}, that latter notion is well-studied and understood (\emph{e.g.}, \citep{Babichenko14:Simple}). It is clear that a distribution $\vmu$ with $T$ nonzero entries is $T$-sparse per~\Cref{def:sparse}, but the opposite does not hold in general. From the viewpoint of no-regret learning, proving lower bounds pertaining to distributions with small support translates to the setting where each player selects a \emph{pure} strategy, while \Cref{def:sparse} accounts for mixed strategies as well. To further elaborate on this difference, it is known that $\Omega(\log n /\epsilon^2 )$ nonzero elements in the support are information-theoretically necessary for even the existence of an $\epsilon$-Nash equilibrium in zero-sum games~\citep{Feder07:Approximating}, in turn implying that $\Omega(\log n /\epsilon^2)$ iterations are needed when players select pure strategies. On the other hand, a consequence of the minimax theorem is that a $1$-sparse equilibrium always exists, and can also be computed efficiently in such games via linear programming; this means that no superpolynomial computational lower bounds for no-regret learning in zero-sum games can be shown in mixed strategies.

A long-standing challenge in general-sum games is \emph{equilibrium selection}: there could be a multiplicity of equilibria, and some are more reasonable than others. A common antidote---albeit certainly not the only one---is to identify an equilibrium maximizing the social welfare (or some other natural objective). This can be achieved in polynomial time even in multi-player (normal-form) games represented explicitly, which motivates investigating the complexity of $\sparsecce$---the focus of our work. However, this is no longer the case in succinct games, where maximizing welfare is typically \NP-hard~\citep{Papadimitriou08:Computing} (\emph{cf.}~\citet{Barman15:Finding})---let alone sparsity constraints. This is also the case for two-player \emph{extensive-form} games (\emph{e.g.},~\citep{Zhang22:Optimal}).

Another motivation for proving lower bounds revolving around $T$-sparse CCE is that while many techniques that accelerate equilibrium computation rely on no-regret learning dynamics, they do not strictly comply with the traditional online nature of the framework. A notable example is \emph{alternation}~\citep{Tammelin15:Solving,Wibisono22:Alternating,Cevher23:Alternation}, whereby players update their strategies sequentially---as opposed to simultaneous updates. Importantly, such techniques are captured through sparse CCE. Beyond computational considerations, it is worth pointing out an orthogonal line of work that has focused on query complexity aspects of (coarse) correlated equilibria (\emph{e.g.},~\citep{Goldberg16:Bounds,Babichenko15:Query,Maiti23:Query,Goldberg23:Lower}, and references therein).
\section{Preliminaries}
\label{sec:prels}

In this section, we introduce some necessary background on games and pertinent equilibrium concepts (\Cref{sec:games}), as well as the problem of approximating the maximum clique of a graph and the related planted clique problem (\Cref{sec:max-planted}).

\paragraph{Notation} We use boldface letters, such as $\vx$ and $\vy$, to denote vectors. Matrices are represented using boldface capital letters, such as $\mat{R}$ and $\mat{C}$. In particular, $\mat{I}_{n} \in \R^{n \times n}$ denotes the identity matrix, while $\mat{0}_{n} \in \R^{n \times n}$ denotes the all-zeroes matrix. We use subscripts to access the coordinates of a vector or the entries of a matrix. Superscripts (together with parantheses) are typically reserved for the discrete time index. For a vector $\vx$, we use $\|\vx\|$ to denote its (Euclidean) $\ell_2$ norm and $\|\vx\|_1$ for the $\ell_1$ norm. If $\vx \in \R^{n + m}$, for $n, m \in \N \defeq \{1, 2, \dots \}$, we let $\R^n \ni \vx_{i \leq n} \defeq (\vx_1, \dots, \vx_n)$ and $\R^m \ni \vx_{i \geq n+1} \defeq (\vx_{n+1}, \dots, \vx_{n+m})$. For $T \in \N$, we use the shorthand notation $[T] \defeq \{1, \dots, T\}$. For simplicity, we sometimes employ the notation $O(\cdot), \Omega(\cdot), \Theta(\cdot)$ to suppresses (absolute) constants.

\subsection{Two-player games and equilibrium concepts}
\label{sec:games}

We first provide the definition of a \emph{sparse} coarse correlated equilibrium (CCE) (\Cref{def:sparse,def:cce}). Below and throughout, we focus on two-player games, but those definitions can be readily generalized to multi-player games as well (\emph{e.g.},~\citep{Cesa-Bianchi06:Predictiona}).

\paragraph{Two-player games} By convention, we refer to the players as Player $x$ (for the ``row'' player) and Player $y$ (for the ``column'' player). We operate under the usual \emph{normal form} representation of a (two-player) game. Here, Player $x$ and Player $y$ have a finite (and nonempty) set of actions $\cA_x$ and $\cA_y$, respectively. The \emph{utility} of Player $x$ and Player $y$ under a pair of actions $(a_x, a_y) \in \cA_x \times \cA_y$ is given by $\mat{R}_{a_x, a_y}$ and $\mat{C}_{a_x, a_y}$, respectively, where $\mat{R}, \mat{C} \in \Q^{\cA_x \times \cA_y}$ are the payoff matrices of the game, which are given as part of the input. Players can randomize by specifying a \emph{mixed} strategy---a probability distribution over their set of actions. For a pair of mixed strategies $(\vx, \vy) \in \Delta(\cA_x) \times \Delta(\cA_y)$, the (expected) utility of Player $x$ and Player $y$ is given by $\E_{(a_x, a_y) \sim (\vx, \vy)} \mat{R}_{a_x, a_y} =  \langle \vx, \mat{R} \vy \rangle$ and $\E_{(a_x, a_y) \sim (\vx, \vy)} \mat{C}_{a_x, a_y} = \langle \vx, \mat{C} \vy \rangle$, respectively.

\paragraph{Sparse CCE} We are now ready to introduce the notion of a sparse CCE. In the sequel, we will denote by $\vx \otimes \vy = \vx \vy^\top$ the outer (tensor) product of $\vx$ and $\vy$.

\begin{definition}[Sparse (correlated) distributions]
    \label{def:sparse}
    Let $\vmu \in \Delta(\cA_x \times \cA_y)$ be a (correlated) distribution supported on $\cA_x \times \cA_y$. We say that $\vmu$ is \emph{$T$-sparse}, for $T \in \N$, if there exist $\vx^{(1)}, \dots, \vx^{(T)} \in \Delta(\cA_x)$, $\vy^{(1)}, \dots, \vy^{(T)} \in \Delta(\cA_y)$ and $(\alpha^{(1)}, \dots, \alpha^{(T)}) \in \Delta([T])$ such that $\vmu = \sum_{t=1}^T \alpha^{(t)} (\vx^{(t)} \otimes \vy^{(t)})$; that is, $\vmu$ is a \emph{mixture of $T$ product distributions}. If $\alpha^{(1)} = \alpha^{(2)} = \dots = \dots = \alpha^{(T)} = 1/T$, we say that $\vmu$ is \emph{uniform} $T$-sparse.
\end{definition}

It is worth noting here a related notion of sparsity which imposes a bound on the support of $\vmu$, discussed earlier in~\Cref{sec:related}. We next recall the definition of a coarse correlated equilibrium.

\begin{definition}[Coarse correlated equilibrium]
    \label{def:cce}
    A (correlated) distribution $\vmu \in \Delta(\cA_x \times \cA_y)$ is an \emph{$\epsilon$-coarse correlated equilibrium ($\epsilon$-CCE)} if for any deviation $(a_x', a_y') \in \cA_x \times \cA_y$,
    \begin{align}
        \E_{ (a_x, a_y) \sim \vmu} [\mat{R}_{a_x, a_y} ] &\geq \E_{(\cdot, a_y) \sim \vmu}[ \mat{R}_{a_x', a_y}] - \epsilon, \label{align:ccex} \\
        \E_{ (a_x, a_y) \sim \vmu} [\mat{C}_{a_x, a_y} ] &\geq \E_{(a_x, \cdot) \sim \vmu}[ \mat{C}_{a_x, a_y'}] - \epsilon.\label{align:ccey}
    \end{align}
\end{definition}
A $0$-CCE will be referred to as a CCE. Coarse correlated equilibria relax the notion of a \emph{correlated equilibrium}~\citep{Aumann74:Subjectivity}. In the latter, Player $x$ can deviate with respect to any possible function $\phi_x : \cA_x \to \cA_x $ (and similarly for Player $y$); that is, $a_x'$ and $a_y'$ in \eqref{align:ccex} and~\eqref{align:ccey} are to be replaced with $\phi_x(a_x)$ and $\phi_y(a_y)$, respectively. We further remark that a $1$-sparse CCE is, by definition, a \emph{Nash equilibrium} (in particular, when $\vmu$ is supported on a single element, we have a \emph{pure} Nash equilibrium).

\paragraph{Sparse equilibria and regret} There is an immediate but important connection between sparse CCE and the framework of no-regret learning. Namely, suppose that at every time $t \in [T]$ players select strategies $\vx^{(t)} \in \Delta(\cA_x) $ and $\vy^{(t)} \in \Delta(\cA_y)$. In the \emph{full feedback} setting, the players at time $t$ receive as feedback $\vu_x^{(t)} \defeq \mat{R} \vy^{(t)}$ and $\vu_y^{(t)} \defeq \mat{C}^\top \vx^{(t)}$, respectively.\footnote{In online learning, it is customary to assume that the utility feedback is revealed to the players upon selecting their strategies, but our results do not rest on such assumptions (see~\Cref{sec:related} for a further discussion).} The \emph{regret} of Player $x$ is defined as $\reg^T_x \defeq \max_{\vxstar \in \Delta(\cA_x)} \sum_{t=1}^T \langle \vxstar - \vx^{(t)}, \vu_x^{(t)} \rangle$, and similarly for Player $y$. The key connection now, which follows readily from the definitions, is that no-regret learning produces a uniform $T$-sparse $\max(\reg_x^T/T, \reg_y^T/T)$-CCE. As a concrete example, when both players employ multiplicative weights update, one obtains a uniform $O(\log ( |\cA| ) / \epsilon^2)$-sparse $\epsilon$-CCE, where $|\cA| \defeq \max(|\cA_x|, |\cA_y|)$.

Further, the non-uniform version of~\Cref{def:sparse} corresponds to a weighted notion of regret in which a weight $\alpha^{(t)}$ is attached to the $t$th summand, with $(\alpha^{(1)}, \dots, \alpha^{(T)}) \in \Delta([T])$. The latter has received ample of interest in prior work (\emph{e.g.},~\citep{Abernethy18:Faster,Brown19:Solving,Zhang22:Equilibrium}), not least because it typically performs better in practice. Of course, precluding efficient computation of sparse CCE under non-uniform mixtures can only be stronger.

\paragraph{Social welfare} The \emph{expected (social) welfare}, denoted by $\sw(\cdot)$, under a pair of mixed strategies $(\vx, \vy) \in \Delta(\cA_x) \times \Delta(\cA_y)$ is the sum of the players' utilities: $\sw(\vx \otimes \vy) \defeq \langle \vx, (\mat{R} + \mat{C}) \vy \rangle$. More generally, the welfare of a distribution $\vmu$ on $\cA_x \times \cA_y$ is $\sw(\vmu) \defeq \E_{(a_x, a_y) \sim \vmu} [ \mat{R}_{a_x, a_y} + \mat{C}_{a_x, a_y} ]$. While welfare is perhaps the most common objective considered in the literature, some of our lower bounds apply to other natural objectives as well (\Cref{cor:oneutil,cor:egal-wel}).

Without any essential loss, we will henceforth switch our attention to games in which $|\cA_x| = |\cA_y|$; this can always be enforced by introducing ``dummy'' actions. For notational convenience, we will also let $\cA_x, \cA_y \defeq \{ 1, 2, \dots, n \}$; that is, the action of each player is represented via an integer (the corresponding player will be made clear from the context). Such a game $\cG$ will oftentimes be referred to as $n \times n$. We now highlight the following fact.

\begin{proposition}
    \label{prop:LP}
    For any $n \times n$ (two-player) game, there is a polynomial-time algorithm that computes a CCE maximizing the welfare. Further, any distribution $\vmu \in \Delta([n] \times [n])$ is $n$-sparse.
\end{proposition}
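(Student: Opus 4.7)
The plan is to establish the two assertions of \Cref{prop:LP} separately, both by standard arguments.

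For the polynomial-time computation of a welfare-maximizing CCE, I would formulate the problem as a linear program (LP). The decision variables are the $n^2$ entries of the joint distribution $\vmu \in \R^{[n] \times [n]}$, subject to nonnegativity $\vmu_{i,j} \geq 0$ and normalization $\sum_{i,j} \vmu_{i,j} = 1$. The CCE conditions of \Cref{def:cce} with $\epsilon = 0$, taken over all pure-action deviations, yield $2n$ linear inequalities in $\vmu$---one per deviation per player, since in \eqref{align:ccex} and \eqref{align:ccey} both sides are linear in $\vmu$ for each fixed $a_x'$ (resp. $a_y'$). The welfare objective $\sw(\vmu) = \sum_{i,j} \vmu_{i,j} (\mat{R}_{i,j} + \mat{C}_{i,j})$ is also linear. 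Maximizing it over the resulting polytope yields a welfare-optimal CCE, and since the LP has $O(n^2)$ variables and $O(n^2)$ constraints with coefficients polynomial in the input size, it is solvable in polynomial time by any standard LP method.

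For the second assertion, I would exhibit an explicit row-marginalization decomposition. Given $\vmu \in \Delta([n] \times [n])$, set $\alpha^{(i)} \defeq \sum_{j=1}^n \vmu_{i,j}$ for $i \in [n]$, so that $(\alpha^{(1)}, \dots, \alpha^{(n)}) \in \Delta([n])$. For every $i$ with $\alpha^{(i)} > 0$, let $\vx^{(i)} \in \Delta([n])$ be the point mass on action $i$ and let $\vy^{(i)} \in \Delta([n])$ be the conditional $\vy^{(i)}_j \defeq \vmu_{i,j}/\alpha^{(i)}$. A direct check gives $\vmu = \sum_{i=1}^n \alpha^{(i)} (\vx^{(i)} \otimes \vy^{(i)})$, where indices with $\alpha^{(i)} = 0$ are padded with arbitrary $\vx^{(i)}, \vy^{(i)}$ and contribute zero mass; this matches \Cref{def:sparse} with $T = n$.

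I do not anticipate any genuine obstacle: the LP formulation of CCE is classical (and has appeared in prior work such as \citet{Papadimitriou08:Computing}), while the sparsity bound is essentially the observation that any nonnegative matrix with unit total mass equals the convex combination of the $n$ rank-one matrices supported on its individual rows, weighted by the row marginals. The only minor point requiring care is handling rows with zero marginal, which is resolved by discarding (or padding) those terms before normalizing.
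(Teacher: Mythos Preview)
Your proposal is correct and mirrors the paper's own argument almost verbatim: the paper also appeals to the standard LP formulation for the first part, and for the second part defines $\alpha^{(t)} = \sum_{j} \vmu_{t,j}$, takes $\vx^{(t)}$ to be the point mass on $t$, and sets $\vy^{(t)} \propto (\vmu_{t,1},\dots,\vmu_{t,n})$. Your explicit handling of rows with zero marginal is a minor clarification that the paper leaves implicit.
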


Indeed, regarding the first part of the claim, it is well-known that computing a CCE maximizing the welfare can be expressed as a (polynomial) linear program---this holds more generally in normal-form games represented explicitly and under any linear objective. To justify the claim that any correlated distribution $\vmu$ is $n$-sparse, we define, for all $t \in [n]$,  $\alpha^{(t)} = \sum_{j=1}^n \vmu_{t, j}$; $\vx^{(t)}_i = 1$ for $i = t$ and $0$ otherwise; and $\vy^{(t)} \propto (\vmu_{t, 1}, \dots, \vmu_{t, n})$. By definition, $\sum_{t=1}^n \alpha^{(t)} (\vx^{(t)} \otimes \vy^{(t)}) = \vmu$, thereby implying that $\vmu$ is $n$-sparse. On the other hand, $\vmu$ cannot necessarily be expressed as a \emph{uniform} mixture of $n$ product distributions (\emph{e.g.}, when $\vmu$ is supported only on the diagonal).

\subsection{Maximum clique and planted clique}
\label{sec:max-planted}

Consider an undirected $n$-node graph $G = (V, E)$, where we let $V \defeq [n]$. A \emph{clique} in $G$ is a subset of the nodes $K \subseteq V$ such that for $\{i, j \} \in E$ for all $i, j \in K$ with $i \neq j$. The \emph{maximum clique} problem ($\maxclique$) takes as input an $n$-node graph, and asks for the maximum clique in $G$. The decision version of $\maxclique$ was featured in the original list of $\NP$-complete problems compiled by~\citet{Karp72:Reducibility}. Accordingly, commencing with a celebrated connection with probabilistic checkable proofs (PCPs), its (in)approximability received tremendous attention in theoretical computer science (\emph{e.g.},~\citep{Arora98:Proof,Bellare94:Improved,Bellare98:Free,Feige96:Interactive,Hastad99:Clique,Zuckerman07:Linear}), culminating in the hardness results of~\citet{Hastad99:Clique} and~\citet{Zuckerman07:Linear} stated below. We recall that if $\omega(G)$ is the size of the maximum clique in $G$, a (multiplicative) approximation of $\rho \geq 1$ is a clique of size at least $\omega(G)/\rho$. For our purposes, we will make use of the following hardness result~\citep[Theorem 1.1]{Zuckerman07:Linear}.

\begin{theorem}[\nakedcite{Zuckerman07:Linear}]
    \label{theorem:clique-hardness}
    It is \NP-hard to approximate $\maxclique$ with respect to an $n$-node graph to a factor of $n^{1 - \epsilon}$ for any constant $\epsilon > 0$.
\end{theorem}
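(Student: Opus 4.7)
The plan is to derive NP-hardness of $n^{1-\epsilon}$-approximating $\maxclique$ via the FGLSS-style reduction from a sufficiently strong PCP, with the main work going into soundness amplification. I would start from the PCP theorem, which gives a polynomial-time verifier for any NP language using $r = O(\log n)$ bits of randomness, $q = O(1)$ queries, perfect completeness, and soundness $s < 1$. The FGLSS graph has one vertex for every pair consisting of a randomness string $\rho$ and a query-answer assignment on $\rho$ that causes the verifier to accept; two vertices are joined iff their assignments agree on any shared query positions. In YES instances, the clique number equals $2^r$ (one accepting configuration per $\rho$), while in NO instances it is at most $s \cdot 2^r$. Since the graph has $n = 2^r \cdot 2^q = \poly(\text{input size})$ vertices, this already yields a $1/s$-factor inapproximability.

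To sharpen this to $n^{1-\epsilon}$, I would amplify the soundness gap so that $1/s \geq n^{1-\epsilon}$ while keeping $n = 2^{r+q}$ polynomial. Naive sequential or parallel repetition drives $s$ down to roughly $s^k$ but multiplies either the randomness or the query complexity by $k$, so both $r$ and $q$ scale badly and one cannot reach $s \approx 2^{-(1-\epsilon)r}$ without blowing up $q$ beyond $\epsilon r$. The main obstacle, and the technical heart of the theorem, is carrying out this amplification with only a near-optimal amount of fresh randomness. I would follow the disperser/extractor-based amplification: fix an explicit disperser $D : \{0,1\}^{r'} \times \{0,1\}^d \to \{0,1\}^r$ and construct a new verifier that, on randomness $\rho'$, simulates the base verifier on $D(\rho',i)$ for each $i \in \{0,1\}^d$ and accepts iff all $2^d$ simulations accept. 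By the disperser property, the set of $\rho'$ whose expanded runs entirely avoid the rejecting base randomness has density at most roughly $s^{\Omega(2^d)}$, so soundness drops to essentially $s^{2^d}$ while the new randomness is only $r' = r + O(1)$.

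The key quantitative point is that one needs explicit dispersers (or extractors for almost-all min-entropy sources) with a seed length $d$ and entropy loss small enough to make the FGLSS graph, now on $n = 2^{r' + 2^d q}$ vertices, exhibit a gap of $2^{(1-\epsilon) r'}$ between YES and NO cases, which translates exactly to $n^{1-\epsilon}$ after balancing $r'$ and $2^d q$. I expect the hardest part to be the explicit extractor construction that attains these parameters deterministically for arbitrary $\epsilon > 0$: the combinatorial backbone (PCP theorem, FGLSS reduction, and disperser-based gap amplification) is well-understood and would be essentially reused verbatim, but closing the last $\epsilon$-gap in the approximation ratio is precisely what Zuckerman's explicit high-min-entropy extractors accomplish — removing the randomization that Håstad's earlier reduction relied on and turning the hardness from $\mathsf{NP}$-hard under randomized reductions into $\mathsf{NP}$-hard in the usual deterministic sense, as stated.
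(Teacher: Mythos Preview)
The paper does not prove this theorem at all: it is stated as a known result, attributed to \citet{Zuckerman07:Linear} (specifically, Theorem~1.1 therein), and used purely as a black box in the reductions of \Cref{sec:main}. There is therefore no ``paper's own proof'' to compare your proposal against.

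That said, your sketch is a reasonable high-level outline of the actual argument in Zuckerman's paper: the FGLSS reduction from a PCP, followed by randomness-efficient soundness amplification via explicit dispersers, is indeed the route taken there, and you correctly identify that the technical contribution is the explicit construction of dispersers with near-optimal entropy loss, which derandomizes H{\aa}stad's earlier $n^{1-\epsilon}$ hardness under randomized reductions. For the purposes of the present paper, however, none of this machinery needs to be reproduced---the theorem is simply invoked.
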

The trivial approximation factor for $\maxclique$ is $n$ (since $\omega(G) \leq n$), and so~\Cref{theorem:clique-hardness} tells us that there is no algorithm significantly better than the trivial one. \citet{Zuckerman07:Linear} also showed, by derandomizing the earlier lower bound of~\citet{Khot01:Improved}, that it is $\NPtilde$-hard to approximate $\maxclique$ to a factor of $n/2^{(\log n)^{1-\gamma}}$ for some constant $\gamma > 0$~\citep[Theorem 1.3]{Zuckerman07:Linear}. $\NPtilde$ here is the quasipolynomial analogue of $\NP$, where quasipolynomial stands for $2^{\polylog n}$.

\paragraph{Planted clique problem} Related to $\maxclique$, the so-called \emph{planted} (aka. hidden) clique problem ($\plantedclique$) was first studied independently by~\citet{Jerrum92:Large} and~\citet{Kucera95:Expected}. Here, a random graph $G = G(n, \frac{1}{2}, k)$ is constructed as follows. First, an arbitrary subset of $k = k(n) \in [n]$ nodes is selected and a clique is placed on them. Then, every other edge is included independently with probability $\frac{1}{2}$, and the goal is to recover the planted clique. In a random Erd\H{o}s-R\'enyi graph the size of the largest clique is $(2 + o_n(1)) \log n$ with high probability (w.h.p.),\footnote{We say that an event $\mathcal{E}$ occurs with high probability if $\pr[\mathcal{E}] \geq 1 - 1/n^{\Theta(1)}$.} and so when $k = (2 + \epsilon) \log n$, finding the planted clique is information-theoretically possible by identifying the maximum clique of $G$. Now, for $k \geq (2 + \epsilon) \log n$, it is not hard to see that the planted clique can be identified in quasipolynomial time $n^{O(\log n)}$; the planted clique conjecture postulates that there is no significantly faster algorithm. In particular, we will use the following form of the conjecture.

\begin{conjecture}
    \label{conj:planted}
    Solving $\plantedclique$ w.h.p. when $k = \polylog n$ requires $n^{\Omega(\log n)}$ time.
\end{conjecture}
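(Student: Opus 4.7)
The statement is the planted clique conjecture, a famous open problem in average-case complexity that has resisted proof attempts for over three decades, so any honest ``proof plan'' must really be a plan for marshalling evidence for the conjecture rather than a deductive proof. The natural starting point is to survey the algorithmic landscape: in a random Erd\H{o}s--R\'enyi graph with a planted clique of size $k$, the spectral algorithm of Alon--Krivelevich--Sudakov (built around the second eigenvector of the adjacency matrix) succeeds in polynomial time whenever $k = \Omega(\sqrt{n})$, while no polynomial-time algorithm is known for $k = o(\sqrt{n})$, and in particular for $k = \polylog n$. The brute-force algorithm that enumerates subsets of size $k$ runs in $n^{O(\log n)}$ time, which exactly matches the conjectured lower bound; the plan is therefore to argue that this trivial upper bound is essentially optimal against every natural family of algorithms.

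First I would invoke the strongest restricted-model lower bounds in the literature. Sum-of-squares lower bounds of Barak--Hopkins--Kelner--Kothari--Moitra--Potechin show that refuting a planted clique of size $k = n^{1/2 - \delta}$ requires degree $n^{\Omega(\delta)}$, which translates into $n^{\Omega(\log n)}$ runtime at the $k = \polylog n$ scale; analogous low-degree polynomial lower bounds (in the Hopkins--Steurer framework) and statistical-query lower bounds (Feldman--Grigorescu--Reyzin--Vempala--Xiao) yield the same order of magnitude. Collectively, these results show that every algorithmic paradigm we currently understand --- spectral, convex-programming, and query-based --- breaks down precisely at the threshold predicted by \Cref{conj:planted}. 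A complementary angle is to provide black-box evidence by constructing reductions from $\plantedclique$ to other average-case problems whose hardness has been independently posited, thereby embedding the conjecture within a coherent web of assumptions.

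The main obstacle, and the reason the conjecture remains open, is that an unconditional proof would constitute a major separation in average-case complexity: no worst-case-to-average-case reduction is known for $\maxclique$, so the $\NP$-hardness of \Cref{theorem:clique-hardness} cannot be transferred to the planted distribution $G(n, \tfrac{1}{2}, k)$ by standard means. Closing this gap would require either a genuinely new structural handle on random graphs that goes beyond what sum-of-squares and low-degree analyses capture, or a reduction framework that relates the random model $G(n, \tfrac{1}{2}, k)$ to a worst-case source --- both of which would be breakthrough contributions in their own right. In lieu of such a framework, the pragmatic route --- and the one implicitly adopted by the paper, which invokes \Cref{conj:planted} only as an assumption to derive \Cref{theorem:main2} --- is to keep strengthening the restricted-model lower bounds and to cross-validate them against ongoing algorithmic progress, thereby accumulating the evidence that makes the conjecture a safe platform for downstream hardness results.
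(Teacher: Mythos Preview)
Your assessment is correct: \Cref{conj:planted} is stated in the paper as a conjecture, not a theorem, and the paper offers no proof of it---it is invoked purely as a hardness assumption (citing \citet{Barak19:Nearly} for supporting evidence) on which \Cref{theorem:main2} and \Cref{theorem:plantedclique,theorem:constant} are conditioned. Your recognition that no deductive proof is available, and your summary of the restricted-model evidence (sum-of-squares, low-degree, statistical-query lower bounds) supporting the conjecture, is entirely in line with how the paper treats the statement.
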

In the precise sense of~\Cref{lemma:smallk}, $\plantedclique$ becomes easier as $k$ grows; the current threshold for polynomial-time algorithms is $k = \Theta(\sqrt{n})$~\citep{Alon98:Finding}. It is believed---and there is strong evidence for it~\citep{Barak19:Nearly}---that no polynomial-time algorithm exists when $k \ll \sqrt{n}$; the weaker version of~\Cref{conj:planted} will suffice for our purposes. A number of prior works from diverse areas have relied on the planted clique conjecture to prove different hardness results, such as testing $k$-wise independence~\citep{Alon07:Testing}, finding clusters in social networks~\citep{Balcan13:Finding}, public-key encryption~\citep{Applebaum10:Public}, and sparse PCA~\citep{Berthet13:Complexity}. Closely related to our work is the result of~\citet{Hazan11:How} pertaining to the complexity of computing Nash equilibria with near-optimal welfare (see also~\citep{Minder09:Small,Austrin13:Inapproximability}), which we extend to $\sparsecce$ (\Cref{theorem:main2}).
\section{The complexity of computing near-optimal sparse CCE}
\label{sec:main}

In this section, we establish our main results concerning the complexity of $\sparsecce$ and other related problems in the $\poly(1/n)$ regime. The low-precision regime is treated in~\Cref{sec:low-precision}.


\subsection{The basic reduction from approximate \texorpdfstring{$\maxclique$}{maximum clique}}
 \label{sec:maxcliqueReduction}
We first establish a reduction from approximating $\maxclique$ to the problem of computing sparse (approximate) CCE with near-optimal welfare (\Cref{alg:maxclique}). In what follows, we recall that we refer to the latter computational problem as $\sparsecce(\cG, T, \epseq, \epswel)$, which takes as input a (two-player) game $\cG$ in normal form; a sparsity parameter $T \in \N$ per~\Cref{def:sparse}; the (CCE) equilibrium gap $\epseq$ per~\Cref{def:cce}; and the optimality gap $\epswel$, also in an additive sense (multiplicative (in)approximability is the subject of~\Cref{cor:inapprox}). Then, it returns (deterministically) a $T$-sparse $\epseq$-CCE with welfare at least $\opt - \epswel$, where $\opt$ is the welfare of the optimal $T$-sparse CCE. Before proceeding, it is worth noting that without the equilibrium constraint the problem becomes trivial: one can just return the maximum entry of $\mat{R} + \mat{C}$, which is of course $1$-sparse.

\paragraph{The reduction} Suppose that we are given as input an undirected graph $G = ([n], E)$ with \emph{adjacency matrix} $\mat{A} = \mat{A}(G) \in \R^{n \times n}$, defined such that
\[
    \mat{A}_{i, j} = 
    \begin{cases}
        1 & \text{if } \{i, j\} \in E \lor (i = j), \text{ and} \\
        0 & \text{otherwise}.
    \end{cases}
\]
(It is convenient to assume that each entry in the diagonal of $\mat{A}$ is also $1$.) The following reduction mostly relies on the earlier construction of~\citet{Gilboa89:Nash}, but with some new, distinct ingredients. In particular, our reduction is given as~\Cref{alg:maxclique}. \Cref{alg:maxclique} takes as input $G$, and constructs for each $k$ in~\Cref{line:for} a two-player game (\Cref{line:game}) defined by the $(2n) \times (2n)$ payoff matrices 
\begin{equation}
    \label{eq:game}
    \R^{(2n) \times (2n)} \ni \mat{R} \defeq 
    \frac{1}{2}  
    \begin{pmatrix}
       \mat{A} + \gamma \mat{I}_{n} & - k \mat{I}_{n} \\
        k \mat{I}_{n} & \mat{0}_{n \times n}
    \end{pmatrix}
    \text{ and }
    \R^{(2n) \times (2n)} \ni \mat{C} \defeq
    \frac{1}{2} 
    \begin{pmatrix}
       \mat{A} + \gamma \mat{I}_{n} & k \mat{I}_{n} \\
        - k \mat{I}_{n} & \mat{0}_{n \times n}
    \end{pmatrix},
\end{equation}
where we recall that $\mat{I}_{n} \in \R^{n \times n}$ denotes the identity matrix, $\mat{0}_{n \times n} \in \R^{n \times n}$ denotes the all-zeroes matrix, and parameter $\gamma \ll 1$ is specified in~\Cref{line:gamma}.\footnote{One can normalize the above payoff matrices so that all entries are in $[-1, 1]$, which is typically assumed when additive approximations are considered. All our hardness results apply as stated under that assumption.} (An example of $\cG = \cG(G)$ for the $4$-node graph $G$ of~\Cref{fig:graph} is given in~\Cref{fig:adj-matrix}.) Then, \Cref{line:sparsecce} invokes $\sparsecce(\cG, T, \epseq, \epswel)$ for a suitable choice of $\epseq$ and $\epswel$ (\Cref{line:eps}). It is worth pointing out that game~\eqref{eq:game} is \emph{symmetric}, in that $\mat{R} = \mat{C}^\top$, which will only make the lower bound stronger.

\begin{figure}[!ht]
    \centering
    \begin{minipage}{0.35\textwidth} 
    \begin{tikzpicture}
            \foreach \i in {1,...,4} {
                    \node[circle, draw, scale=1] (n\i) at (\i*90:1.5) {\i}; 
            }
            
            \foreach \i in {1,...,4} {
                    \foreach \j in {1,...,4} {
                            \ifnum\i=\j
                                    \draw[darkgray] (n\i) -- (n\j);
                            \else
                                    \ifnum\i=2 \ifnum\j=4 \else \draw (n\i) -- (n\j); \fi \else \ifnum\i=4 \ifnum\j=2 \else \draw (n\i) -- (n\j); \fi \else \draw (n\i) -- (n\j); \fi \fi
                            \fi
                    }
            }
    \end{tikzpicture}
    \caption{A $4$-node graph $G$.}
    \label{fig:graph}
    \end{minipage}
    \hspace{1cm} 
    \begin{minipage}{0.55\textwidth}
    \scriptsize 
    \begin{tabular}{ccccc|cccc}
     & \begin{tikzpicture}
         \node[draw, circle, scale=0.7] {1};
     \end{tikzpicture}  & \begin{tikzpicture}
         \node[draw, circle, scale=0.7] {2};
     \end{tikzpicture}  & \begin{tikzpicture}
         \node[draw, circle, scale=0.7] {3};
     \end{tikzpicture}  & \begin{tikzpicture}
         \node[draw, circle, scale=0.7] {4};
     \end{tikzpicture}  &   &   &   \\
       \begin{tikzpicture}
         \node[draw, circle, scale=0.7] {1};
     \end{tikzpicture} & \cellcolor{darkgray}$1 + \gamma$ & \cellcolor{darkgray}1 & 1 & 1 & $-k$ & 0 & 0 & 0 \\
      \begin{tikzpicture}
         \node[draw, circle, scale=0.7] {2};
     \end{tikzpicture}  & \cellcolor{darkgray}1 & \cellcolor{darkgray}$1 + \gamma$ & 1 & 0 & 0 & $-k$ & 0 & 0 \\
       \begin{tikzpicture}
         \node[draw, circle, scale=0.7] {3};
     \end{tikzpicture} & 1 & 1 & \cellcolor{darkgray}$1 + \gamma$ & \cellcolor{darkgray}1 & 0 & 0 & $-k$ & 0 \\
       \begin{tikzpicture}
         \node[draw, circle, scale=0.7] {4};
     \end{tikzpicture} & 1 & 0 & \cellcolor{darkgray}1 & \cellcolor{darkgray}$1 + \gamma$ & 0 & 0 & 0 & $-k$ \\
    \hhline{~|--------}
        & $k$ & 0 & 0 & 0 & 0 & 0 & 0 & 0 \\
        & 0 & $k$ & 0 & 0 & 0 & 0 & 0 & 0 \\
        & 0 & 0 & $k$ & 0 & 0 & 0 & 0 & 0 \\
        & 0 & 0 & 0 & $k$ & 0 & 0 & 0 & 0 \\
    \end{tabular}
    \caption{The payoff matrix $2 \cdot \mat{R} = 2 \cdot \mat{R}(G)$ per~\eqref{eq:game}.}
    \label{fig:adj-matrix}
    \end{minipage}
\end{figure}

\begin{algorithm}[!ht]
\caption{Reducing (approximate) $\maxclique$ to $\sparsecce$
}
\label{alg:maxclique}
\KwData{$n$-node graph $G$, sparsity $T \in \N$}
\KwResult{A $2T$-approximation to the maximum clique of $G$}
Set $K \leftarrow \{i \}$ for any $i \in [n]$ \\
\For{$k = 2T, 3T, \dots, \lfloor n/T \rfloor T $ \label{line:for} }{ 
    Set $\gamma < \frac{1}{40^2 k^6 (k+1)^2}$ \label{line:gamma} \\
    Construct the $(2n) \times (2n)$ (two-player) game $\cG = \cG(k, \gamma)$ per~\eqref{eq:game} \label{line:game}\\
    Set the optimality gap $\epswel \leftarrow \gamma^2 T/ 2$ and the equilibrium gap $\epsilon  \leftarrow k \gamma /2$ \label{line:eps}  \\
    $\sum_{t=1}^T \alpha^{(t)} ( \vx^{(t)} \otimes \vy^{(t)} ) \leftarrow \sparsecce(\cG, T, \epseq, \epswel)$ \label{line:sparsecce} \\
    Set $\hatvx_{i \leq n}^{(t)} = \nicefrac{\vx^{(t)}_{i \leq n}}{ \sum_{i=1}^n \vx_i^{(t)}}$ and $\hatvy_{j \leq n}^{(t)} = \nicefrac{\vy^{(t)}_{j \leq n}} {\sum_{j=1}^n \vy_j^{(t)}}$ for all $t \in [T]$ 
    \label{line:til} \\
    Determine $t^\star \in \arg \max_{1 \leq t \leq T} (\alpha^{(t)} )$ \label{line:tstar} \\
    Let $S \leftarrow \{ i \in [n] : \hatvx_i^{(t^\star)} \geq \frac{1}{\ell} - 40 \ell k \sqrt{\gamma} \}$, where $\ell \defeq \lfloor \alpha^{(t^\star)} k \rfloor$ \label{line:S} \\
    \If{ $S$ induces an $\ell$-clique in $G$}{
        Set $K \leftarrow S$
    }
}
\KwOut{$K \subseteq [n]$}
\end{algorithm}

It is clear that, with the possible exception of~\Cref{line:sparsecce}, all other steps of~\Cref{alg:maxclique} can be immediately implemented in time $\poly(n)$ ($T$ of course can be safely assumed to be polynomial in $n$). The main claim regarding~\Cref{alg:maxclique} is summarized below.

\begin{theorem}
    \label{theorem:algorithm1}
    For any $n$-node graph $G$ and sparsity parameter $T \in \N$, \Cref{alg:maxclique} returns a $2T$-approximation to the maximum clique of $G$.
\end{theorem}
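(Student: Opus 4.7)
I plan to extend the Gilboa--Zemel reduction from optimal Nash (the $1$-sparse case) to $T$-sparse CCE by isolating the heaviest component $t^\star$ selected in \Cref{line:tstar}. The argument splits into a completeness direction establishing $\opt(\cG(k,\gamma)) \geq 1 + \gamma/k$ whenever $\omega(G) \geq k$, and a soundness direction showing that any $\epseq$-CCE with welfare at least $\opt - \epswel$ yields (through the extraction in \Cref{line:S}) a clique of size $\lfloor \alpha^{(t^\star)} k \rfloor$ in $G$. Completeness is essentially the classical check: uniform play on any $k$-clique $K$, supported on the first $n$ actions, is a Nash equilibrium of $\cG(k,\gamma)$ with welfare exactly $1 + \gamma/k$, since the only deviation of concern---to a punishment action $n+i$---yields utility $\tfrac{k}{2}\cdot \tfrac{1}{k} = \tfrac{1}{2}$ when $i \in K$ (dominated by $\tfrac{1}{2}(1+\gamma/k)$) and $0$ otherwise. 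This equilibrium is $1$-sparse, hence $T$-sparse.

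For soundness, let $\vmu = \sum_t \alpha^{(t)}(\vx^{(t)} \otimes \vy^{(t)})$ be the returned CCE and write $p^{(t)} = \sum_{i \leq n}\vx^{(t)}_i$, $q^{(t)} = \sum_{j \leq n}\vy^{(t)}_j$ for the first-block mass of each component. Three structural facts drive the analysis. \emph{(i)~Mass concentration.} The welfare matrix $\mat{R}+\mat{C}$ vanishes outside the top-left block, so each per-component welfare is at most $p^{(t)} q^{(t)}(1+\gamma)$; matching the welfare lower bound forces $\sum_t \alpha^{(t)} p^{(t)} q^{(t)} \geq 1 - O(\gamma)$, and a Markov-type argument using $\alpha^{(t^\star)} \geq 1/T$ gives $p^{(t^\star)} q^{(t^\star)} \geq 1 - O(T\gamma)$. \emph{(ii)~Marginal spreading.} The CCE constraints against the punishment deviations $a_x' = n+i$ and $b_y' = n+i$ translate to $\tfrac{k}{2}\bar{\vy}_i \leq \bar{U}_x + \epseq$ and $\tfrac{k}{2}\bar{\vx}_i \leq \bar{U}_y + \epseq$ respectively, where bars denote mixture-averaged marginals/utilities; summing and using $\bar{U}_x + \bar{U}_y \leq 1+\gamma$ together with $\epseq = k\gamma/2$ yields $\bar{\vx}_i + \bar{\vy}_i \leq 2/k + O(\gamma)$. \emph{(iii)~Near-uniformity of $t^\star$.} Since $\alpha^{(t^\star)} \geq 1/T$, dividing $\alpha^{(t^\star)} \vy^{(t^\star)}_i \leq \bar{\vy}_i$ by $\alpha^{(t^\star)} q^{(t^\star)}$ and plugging in (i), (ii) gives $\hatvy^{(t^\star)}_i \leq 1/\ell + O(T\gamma)$ for $\ell = \lfloor \alpha^{(t^\star)} k\rfloor$, and symmetrically for $\hatvx^{(t^\star)}$, so each support contains at least $\ell$ coordinates. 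Furthermore, every other component contributes welfare at most $\alpha^{(t)}(1+\gamma)$, so the welfare of $t^\star$ must itself be close to $1$, forcing $(\hatvx^{(t^\star)})^\top \mat{A}\, \hatvy^{(t^\star)} \geq 1 - O(T\gamma)$.

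Since $\mat{A}$ is $0/1$, this last inequality says that the total $\hatvx^{(t^\star)} \otimes \hatvy^{(t^\star)}$-mass on non-edge pairs is $O(T\gamma)$. The aggressive scaling $\gamma < 1/(40^2 k^6(k+1)^2)$ from \Cref{line:gamma} makes this strictly below $1/\ell^2$, which---together with the uniform lower bound $\hatvx^{(t^\star)}_i \geq 1/\ell - 40\ell k\sqrt\gamma$ on every coordinate selected in \Cref{line:S}---rules out any non-edge within the extracted set $S$, so $S$ is an exact clique of size $\ell$. For the approximation ratio, the outer loop sweeps $k$ over multiples of $T$ in $\{2T, 3T, \dots, \lfloor n/T\rfloor T\}$: whenever $\omega(G) \geq 2T$, some $k^\star$ in this range lies in $[\omega(G)/2, \omega(G)]$, and the argument above produces a clique of size $\lfloor \alpha^{(t^\star)} k^\star\rfloor \geq \lfloor k^\star/T\rfloor \geq \omega(G)/(2T)$; if $\omega(G) < 2T$, the initialization $K \leftarrow \{i\}$ already achieves the $2T$-approximation trivially.

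The main obstacle is fact (iii) combined with exact-clique recovery: CCE constraints bound only \emph{mixture-averaged} quantities, yet we require per-component control on the single heaviest index $t^\star$. The lower bound $\alpha^{(t^\star)} \geq 1/T$ is precisely what allows us to pass from weighted averages to per-component bounds at multiplicative cost $T$---this is the source of the $2T$ approximation ratio. A secondary subtlety is that the extraction reads off only the $x$-marginal of $t^\star$, so one must verify that $\hatvy^{(t^\star)}$ concentrates on the same clique vertices as $\hatvx^{(t^\star)}$; this is where the game's symmetric structure $\mat{R} = \mat{C}^\top$ and the analogous CCE bound on $\bar{\vx}_i$ become essential, and the aggressive polynomial scaling of $\gamma$ is exactly calibrated to push all residual errors below $1/\ell^2$, converting a near-clique guarantee into an exact one.
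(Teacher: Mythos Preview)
Your soundness skeleton is right in spirit, but the completeness bound you state is too weak and this propagates into a genuine gap in the extraction step. You prove $\opt \geq 1 + \gamma/k$ via the $1$-sparse uniform-on-clique Nash equilibrium; the paper instead proves $\opt \geq 1 + \gamma T/k$ (\Cref{lemma:optimalCCE}) by \emph{partitioning} the $k$-clique into $T$ blocks of size $k/T$ and letting the $t$th component be uniform on the $t$th block. That extra factor of $T$ in the diagonal bonus is not cosmetic: it is exactly what survives after subtracting the refined upper bound $\sw(\hatvx^{(\tau)}\otimes\hatvy^{(\tau)}) \leq 1 + \gamma/(\alpha^{(\tau)}k) + O(\gamma^2)$ on the other $T-1$ components, leaving $\sw(\hatvx^{(t^\star)}\otimes\hatvy^{(t^\star)}) \geq 1 + \gamma/\ell - O(\gamma^2 T^2)$. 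Your cruder bound $\leq 1+\gamma$ on the other components, combined with only $\opt \geq 1+\gamma/k$, yields per-component welfare $\geq 1 - O(T\gamma)$, strictly \emph{below} $1$ for $T\geq 2$.

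This matters because the $+\gamma/\ell$ surplus is the entire mechanism forcing $|S|=\ell$ and $\hatvx^{(t^\star)}\approx\hatvy^{(t^\star)}$. From welfare $\geq 1+\gamma/\ell$ one gets $\langle\hatvx^{(t^\star)},\hatvy^{(t^\star)}\rangle \geq 1/\ell$, which together with the coordinatewise upper bounds forces $\|\hatvx^{(t^\star)}-\hatvy^{(t^\star)}\|^2 = O(k^2\gamma/\ell)$ and then $\|\hatvx^{(t^\star)}\|^2 \geq (1-O(k\sqrt\gamma))/\ell$; only \emph{then} does the threshold set $S$ in \Cref{line:S} have exactly $\ell$ elements. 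Your argument establishes that the support has at least $\ell$ coordinates and that $S$ (whatever it is) must be a clique, but it does not rule out, e.g., $\hatvx^{(t^\star)}$ uniform on $2\ell$ clique vertices, in which case every coordinate has mass $1/(2\ell) < 1/\ell - 40\ell k\sqrt\gamma$ and $S=\emptyset$. The small non-edge mass bound $O(T\gamma)$ says nothing against this. Likewise, your remark that symmetry handles the $\hatvx$-vs-$\hatvy$ alignment is insufficient: symmetry gives the same upper bound on each, but it is the diagonal inner-product lower bound that pins them to the \emph{same} $\ell$ vertices. Both issues are repaired once you prove and use the $1+\gamma T/k$ completeness bound.
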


The proof of this theorem is the subject of~\Cref{subsec:prooftheorem} below. Assuming~\Cref{theorem:algorithm1}, we arrive at the following polynomial-time reduction from approximate $\maxclique$ to $\sparsecce$.

\begin{corollary}
    There exists a polynomial-time reduction from $2T$-approximate $\maxclique$ to $\sparsecce(\cG, T, n^{-c}, n^{-c})$ for some constant $c$.
\end{corollary}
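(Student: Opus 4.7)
The plan is to take~\Cref{alg:maxclique} itself as the reduction. Given an input graph $G$ on $n$ nodes, the algorithm makes at most $\lfloor n/T \rfloor$ oracle calls to $\sparsecce$ and outputs a set $K$ which, by~\Cref{theorem:algorithm1}, is already a $2T$-approximation to the maximum clique of $G$. Two things must then be verified: (i) the oracle is called only with parameters of the form $(T, n^{-c}, n^{-c})$ for a single absolute constant $c$; and (ii) every step of~\Cref{alg:maxclique} outside the oracle calls runs in time $\poly(n)$.

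For (i), the loop index $k$ satisfies $k \leq n$, so the upper bound on $\gamma$ demanded in~\Cref{line:gamma} is at least $1/(6400\, n^8)$. I would therefore set $\gamma \defeq n^{-c_0}$ for a sufficiently large absolute constant $c_0$, which satisfies $\gamma < 1/(40^2 k^6 (k+1)^2)$ for every $k$ considered in the loop. Substituting into~\Cref{line:eps} gives $\epseq = k\gamma/2 \leq n^{-(c_0-1)}/2$ and $\epswel = \gamma^2 T/2 \leq n^{-(2c_0-1)}/2$, so both accuracies are bounded above by $n^{-c}$ for $c \defeq c_0 - 1$. Moreover, every entry of the matrices in~\eqref{eq:game} lies in $\{0,\pm k/2, 1/2, (1+\gamma)/2\}$ and is representable with $O(\log n)$ bits, so $\cG$ has polynomial bit complexity as required of an input to $\sparsecce$.

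For (ii), each iteration of the outer loop constructs $\cG$ in $O(n^2)$ time, reads the $T$-sparse representation output by the oracle and computes the conditional distributions $\hatvx^{(t)}, \hatvy^{(t)}$ of~\Cref{line:til} in $O(Tn)$ time, identifies $t^\star$ in $O(T)$ time, assembles $S$ from~\Cref{line:S} in $O(n)$ time, and checks whether $S$ induces an $\ell$-clique in $O(n^2)$ time. Multiplying by the $O(n/T)$ iterations of the outer loop gives total overhead $\poly(n)$ outside of the oracle calls.

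The only substantive difficulty would be justifying why the most heavily weighted component $\vx^{(t^\star)} \otimes \vy^{(t^\star)}$ must concentrate on the vertices of a large clique of $G$; but that is precisely the content of~\Cref{theorem:algorithm1}, which the corollary is allowed to assume. Hence no additional technical obstacle arises beyond the parameter bookkeeping above, and the reduction follows.
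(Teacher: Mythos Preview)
Your approach is exactly the paper's: take \Cref{alg:maxclique} as the reduction, invoke \Cref{theorem:algorithm1} for correctness, and observe that all non-oracle steps run in $\poly(n)$ time. The paper does not prove the corollary separately; it simply states it as an immediate consequence of \Cref{theorem:algorithm1} together with the earlier remark that every line of \Cref{alg:maxclique} other than \Cref{line:sparsecce} is polynomial-time.

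There is one slip in your bookkeeping for part (i). To replace the calls $\sparsecce(\cG,T,\epseq,\epswel)$ by a single oracle $\sparsecce(\cG,T,n^{-c},n^{-c})$, you need the oracle's guarantee (equilibrium gap and welfare gap at most $n^{-c}$) to \emph{imply} the guarantee \Cref{theorem:algorithm1} requires (gap at most $\epseq=k\gamma/2$ and $\epswel=\gamma^2T/2$). That means you need $n^{-c}\le\epseq$ and $n^{-c}\le\epswel$, i.e., \emph{lower} bounds on $\epseq,\epswel$, not the upper bounds you derived. With your uniform choice $\gamma=n^{-c_0}$ these lower bounds are immediate: $\epseq=k\gamma/2\ge T\,n^{-c_0}\ge n^{-c_0}$ and $\epswel=\gamma^2T/2\ge n^{-2c_0}/2$, so taking $c\defeq 2c_0+1$ suffices. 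Once this inequality is flipped, your argument goes through as written.
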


Thus, combining with the inapproximability of $\maxclique$ (\Cref{theorem:clique-hardness}), we obtain the following main implications, the first of which was stated earlier in the introduction.

\main*

\begin{corollary}
    \label{cor:quasihardness}
    $\sparsecce(\cG, n/2^{(\log n)^{1-\gamma}}, n^{-c}, n^{-c})$ with respect to $n \times n$ games is $\NPtilde$-hard for some constants $\gamma > 0$ and $c$.
\end{corollary}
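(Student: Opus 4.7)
The plan is to follow the exact strategy used to prove Theorem 3.3, but substituting Zuckerman's quasipolynomial inapproximability of $\maxclique$ for the polynomial one. Recall that it is $\NPtilde$-hard to approximate $\maxclique$ on $n$-node graphs to within factor $n/2^{(\log n)^{1-\gamma_0}}$ for some constant $\gamma_0 > 0$. By Theorem \ref{theorem:algorithm1}, Algorithm \ref{alg:maxclique} reduces computing a $2T$-approximation of $\maxclique$ on an $n$-node graph to solving $\sparsecce(\cG, T, n^{-c}, n^{-c})$ on a $(2n) \times (2n)$ game $\cG$ in polynomial time, for some absolute constant $c$. The corollary then follows once we verify that the target sparsity $T$ keeps $2T$ inside Zuckerman's $\NPtilde$-hard regime.

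Pick any constant $\gamma$ with $0 < \gamma < \gamma_0$, say $\gamma = \gamma_0/2$, and set $T = n/2^{(\log n)^{1-\gamma}}$. The required inequality
\[
2T \;=\; \frac{2n}{2^{(\log n)^{1-\gamma}}} \;\leq\; \frac{n}{2^{(\log n)^{1-\gamma_0}}}
\]
is equivalent to $(\log n)^{1-\gamma} - (\log n)^{1-\gamma_0} \geq 1$; since $1-\gamma > 1-\gamma_0 > 0$, the left-hand side grows unboundedly in $n$, so the inequality holds for all sufficiently large $n$. Consequently, a hypothetical polynomial-time algorithm for $\sparsecce(\cG, T, n^{-c}, n^{-c})$ at this sparsity would, by Theorem \ref{theorem:algorithm1}, yield a polynomial-time $2T$-approximation---and thus a polynomial-time $n/2^{(\log n)^{1-\gamma_0}}$-approximation---to $\maxclique$, contradicting $\NPtilde$-hardness.

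A final cosmetic step translates the construction from games of $2n$ rows/columns produced by Algorithm \ref{alg:maxclique} to the $N \times N$ games in the statement (with $N \defeq 2n$). Using $(\log(N/2))^{1-\gamma} \leq (\log N)^{1-\gamma}$, one obtains $T \geq N/2^{1 + (\log N)^{1-\gamma}} \geq N/2^{(\log N)^{1-\gamma'}}$ for any constant $\gamma' \in (0, \gamma)$ and all sufficiently large $N$; the precision exponent $c$ transfers likewise up to a constant factor. I do not expect any real obstacle here---the heart of the argument is already contained in Theorem \ref{theorem:algorithm1}, and this corollary is obtained simply by plugging the stronger inapproximability of $\maxclique$ into that reduction.
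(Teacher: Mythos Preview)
Your proposal is correct and follows exactly the approach the paper intends: the paper states this corollary immediately after Theorem~\ref{theorem:algorithm1} as a direct consequence of combining that reduction with Zuckerman's $\NPtilde$-hardness of approximating $\maxclique$ to within $n/2^{(\log n)^{1-\gamma}}$, without spelling out the details you provide. Your handling of the constants (choosing $\gamma<\gamma_0$ so that $2T$ stays inside the hard regime, and the $n\mapsto 2n$ game-size translation) is the routine bookkeeping the paper leaves implicit.
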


To put those results into context, and as we pointed out earlier in~\Cref{prop:LP}, there is a polynomial-time algorithm for solving $\sparsecce(\cG,n,0,0)$ via linear programming.

\subsubsection{Proof of Theorem~\ref{theorem:algorithm1}}
\label{subsec:prooftheorem}

We split the main argument of the proof of~\Cref{theorem:algorithm1} into~\Cref{lemma:optimalCCE,lemma:bad-mass,lemma:small-prob,lemma:induced-clique} below. Let us first provide a basic high-level overview of the approach for the special case of $T = 1$, which goes back to~\citet{Gilboa89:Nash}. Perhaps the first attempt that comes to mind when trying to reduce from $\maxclique$ consists of using the adjacency matrix as the payoff matrix of each player. Of course, this reduction gives rise to uninteresting (pure) equilibria with optimal welfare: for an edge in $G$, say $\{1, 2\} \in E$, Player $x$ can select $1$ and Player $2$ can select $2$. This is where the component $k \mat{I}_n$ in~\eqref{eq:game} comes into play: it forces each player to select each action with at most a probability of $\approx 1/k$---for otherwise the other player could profitably deviate to that component. Further, since only the joint strategies corresponding to $\mat{A}$ have positive welfare, any Nash equilibrium with high welfare has to allocate most of its mass in that component. Anti-concentration, however, is still not enough, for the players may still end up supporting their strategies on a complete bipartite subgraph in $G$ (that is, each player on a different part). Incorporating the diagonal term $\gamma \mat{I}_n$ in the adjacency matrix turns out to address that issue as well, as it incentives players to select strategies that are close to each other. With that basic skeleton in mind, we next analyze the reduction under $\sparsecce$.

In what follows, we shall consider a fixed value of $k$, and show that if $G$ contains a clique of size $k$, then~\Cref{alg:maxclique} indeed outputs a clique of size $k/T$. We begin with a characterization of the set of optimal CCE of game~\eqref{eq:game}, $(\mat{R}, \mat{C})$, as a function of the sparsity parameter $T$. 

\begin{restatable}{lemma}{optimalCCE}
    \label{lemma:optimalCCE}
    Suppose that graph $G$ contains a clique of size $k \in \N$. Then, assuming that $k/T \in \N$, there is a (uniform) $T$-sparse CCE in $(\mat{R}, \mat{C})$ with welfare $1 + \gamma T/k$.
\end{restatable}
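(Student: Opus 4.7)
The plan is to exhibit an explicit uniform $T$-sparse CCE by partitioning a $k$-clique of $G$ into $T$ equal parts, using the uniform distribution over each part as one of the product components of the mixture. Concretely, fix a clique $K \subseteq [n]$ of size $k$ in $G$, and, using $k/T \in \N$, partition $K$ arbitrarily into $T$ disjoint blocks $K_1, \ldots, K_T$ of size $k/T$ each. For each $t \in [T]$, let $\vx^{(t)} = \vy^{(t)} \in \Delta([2n])$ be the uniform distribution over $K_t$, placing zero mass on the coordinates $\{n+1, \ldots, 2n\}$. The candidate equilibrium is $\vmu \defeq \frac{1}{T} \sum_{t=1}^T \vx^{(t)} \otimes \vy^{(t)}$, which is uniform $T$-sparse by construction.

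Next I would compute the welfare. Because the $\pm k \mat{I}_{n}$ off-diagonal blocks cancel in $\mat{R} + \mat{C}$, and every $\vx^{(t)}, \vy^{(t)}$ is supported on the first $n$ coordinates, each summand reduces to $\langle \vx^{(t)}_{i \leq n}, (\mat{A} + \gamma \mat{I}_{n}) \vy^{(t)}_{j \leq n} \rangle$. Since $K_t \subseteq K$ is itself a clique and the diagonal entries of $\mat{A}$ are all $1$, the adjacency contribution equals $1$ for every $t$, while the diagonal term contributes $\gamma \cdot (k/T) \cdot (T/k)^2 = \gamma T/k$. Averaging over $t$ gives $\sw(\vmu) = 1 + \gamma T/k$.

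The main step is to verify that $\vmu$ is a (zero-)CCE. By the symmetry $\mat{C} = \mat{R}^{\top}$ together with $\vx^{(t)} = \vy^{(t)}$, it suffices to check deviations for Player $x$; the on-path utility is $\frac{1}{2}(1 + \gamma T/k)$ by the same computation. I would split deviations $a_x' \in [2n]$ into three cases: (i) if $a_x' \in K$, then for each $t$, $a_x'$ is adjacent to every vertex in $K_t$ (since $K$ is a clique), so the adjacency part contributes $1$, and $a_x'$ lies in exactly one block, so the diagonal term averages to $\gamma/k$, yielding deviation utility $\frac{1}{2}(1 + \gamma/k) \leq \frac{1}{2}(1 + \gamma T/k)$; (ii) if $a_x' \in [n] \setminus K$, the adjacency part is at most $1$ and the diagonal part vanishes, so the utility is $\leq \frac{1}{2}$; (iii) if $a_x' > n$, only the $\frac{k}{2} \mat{I}_{n}$ block contributes, giving $\frac{k}{2} \cdot \frac{1}{T} \cdot \frac{T}{k} = \frac{1}{2}$ when $a_x' - n \in K$ and $0$ otherwise. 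Each case is bounded by the on-path utility, so no profitable deviation exists.

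There is no real obstacle once the construction is chosen correctly; the only conceptual subtlety is ensuring that partitioning the clique into $T$ pieces amplifies the diagonal contribution by a factor of $T$ (because probabilities within each block increase from $1/k$ to $T/k$) while leaving the adjacency contribution pinned at $1$. This mismatch is exactly what makes higher sparsity valuable for welfare, and it is the feature that the matching upper bound in the subsequent lemmas will leverage to extract a clique of size $k/T$ from any near-optimal $T$-sparse CCE.
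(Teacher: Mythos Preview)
Your proposal is correct and follows essentially the same approach as the paper: partition the $k$-clique into $T$ equal blocks, take each product component to be uniform over a block, compute the welfare as $1+\gamma T/k$ via the diagonal term, and verify the CCE constraints by noting that the marginal is uniform over $K$ so any deviation yields at most $\tfrac{1}{2}(1+\gamma/k)$ (for actions in $[n]$) or $\tfrac{1}{2}$ (for actions in $\{n+1,\dots,2n\}$). The only cosmetic difference is that you split the $a_x' \in [n]$ case into $a_x' \in K$ and $a_x' \notin K$, whereas the paper handles them together.
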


As a result, the welfare bound attained by $T$-sparse CCE according to~\Cref{lemma:optimalCCE} improves as $T$ increases, which will be crucial for the upcoming argument. Indeed, focusing on the $n \times n$ submatrix of~\eqref{eq:game} corresponding to the adjacency matrix $\mat{A}$, we observe that the diagonal entries have strictly larger utility than all other entries. The basic idea now is that one can use the multiple mixtures of the CCE in order to assign more probability mass in the diagonal elements than what is possible under a product distribution, without violating the equilibrium constraint. Namely, we partition the $k$-clique into $T$ (disjoint) subsets, and every mixture is set to be the uniform distribution over the corresponding subset. In turn, this is reflected in attaining higher welfare. An instance of this reasoning when $n = 4 = k$ and $T = 2$ is illustrated in~\Cref{fig:adj-matrix} (although that graph does not admit a $4$-clique). We now formalize this argument.

\begin{proof}[Proof of~\Cref{lemma:optimalCCE}]
    Let $K \subseteq V$ be a clique in $G$ of size $k$, which is assumed to exist. We construct a mixture of product distributions as follows. We partition $K$ into $T$ (disjoint) subsets $K^{(1)}, K^{(2)}, \dots, K^{(T)}$, each of size $k/T$. For every $t \in [T]$, we define $\vx^{(t)}$ and $\vy^{(t)}$ to be the uniform distributions over $K^{(t)}$. Then, we let $\vmu \defeq \frac{1}{T} \sum_{t=1}^T \vx^{(t)} \otimes \vy^{(t)}$.

    We first claim that $\vmu$ attains a welfare of $1 + \gamma T/k$. Indeed, by the above definition of $(\vx^{(t)}, \vy^{(t)})$, $\pr_{(i, j) \sim (\vx^{(t)}, \vy^{(t)}) }[i = j] = T/k$ for any $t \in [T]$. So, $\sw(\vx^{(t)} \otimes \vy^{(t)}) = 1 + \gamma T/k$, where we used the fact that $K^{(t)} \subseteq [n]$ forms a clique on $G$ (by construction). In turn, this implies that $\sw(\vmu) = \frac{1}{T} \sum_{t=1}^T \sw(\vx^{(t)} \otimes \vy^{(t)}) = 1 + \gamma T/k$.

    Next, we show that $\vmu$ is a CCE. By symmetry, it suffices to consider deviations of Player $y$. We first observe that the marginal of $\vmu$ with respect to Player $x$, $\frac{1}{T} \sum_{t=1}^T \vx^{(t)}$, is the uniform distribution over the clique $K$. Thus, any deviation of Player $y$ to an action $j \in [n]$ can yield an (expected) utility of at most $\frac{1}{2}(1 + \gamma/k)$; this is at most the utility obtained under $\vmu$, which we saw was $\frac{1}{2}(1 + \gamma T/k)$ (since the players receive the same utility when the game is restricted on $\mat{A}$). Further, any deviation to an action $n+1 \leq j \leq 2n$ can lead to a utility of at most $\frac{1}{2}$. This is again strictly worse than what obtained under $\vmu$ since $\gamma > 0$, completing the proof.
\end{proof}

It is interesting to note that, when $T \geq 2$, the distribution described above is \emph{not} a \emph{correlated equilibrium} (but merely a \emph{coarse} correlated equilibrium). Let us consider, for example, the extreme scenario where $T = k$, in which case $\vmu$ is supported solely on the diagonal elements of the clique $K$. Considering Player $y$, we now see that any deviation $j \mapsto n + j$ for $j \in K$ secures a utility of $\frac{k}{2}$, which is much higher than the utility obtained under $\vmu$. The difference here is that Player $y$ receives its own component of $\vmu$ before contemplating a deviation; but since $\vmu$ is supported only on the diagonal, Player $y$ knows the action of Player $x$ with probability $1$.\looseness-1

Next, to extract interesting information about $G$ from a CCE in~\eqref{eq:game}, we need to ensure that players are typically playing on the $n \times n$ submatrix corresponding to $\mat{A}$. To do so, the observation is that upon excluding that submatrix, the rest of the game is zero-sum, and hence the attained welfare is nullified. But we saw earlier in~\Cref{lemma:optimalCCE} that there is a CCE with high welfare, and so a near-optimal CCE cannot afford to assign too much mass outside of that $n \times n$ component, as reflected in the lemma below.\looseness-1

\begin{restatable}{lemma}{badmass}
    \label{lemma:bad-mass}
    Consider a $T$-sparse $\epseq$-CCE $\vmu = \sum_{t=1}^T \alpha^{(t)} (\vx^{(t)} \otimes \vy^{(t)})$ of $(\mat{R}, \mat{C})$ with $\sw(\vmu) \geq 1$. For each $t \in [T]$, we define $\hatvmu = \sum_{t=1}^T \alpha^{(t)} (\hatvx^{(t)} \otimes \hatvy^{(t)})$, where $\hatvx_{i \leq n}^{(t)} \defeq \nicefrac{\vx^{(t)}_{i \leq n}}{\sum_{i=1}^n \vx_i^{(t)}}$ and $\hatvy_{j \leq n}^{(t)} \defeq \nicefrac{ \vy^{(t)}_{j \leq n} }{\sum_{j=1}^n \vy_j^{(t)}}$. Then, it holds that $\hatvmu$ is an $(\epseq + 2 k \gamma)$-CCE of $(\mat{R}, \mat{C})$ with $\sw(\hatvmu) \geq \sw(\vmu)$. Further, by construction, $\hatvmu$ is supported only on $[n] \times [n]$.
\end{restatable}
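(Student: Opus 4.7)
The plan is to view each product $\hatvx^{(t)} \otimes \hatvy^{(t)}$ as the restriction of $\vx^{(t)} \otimes \vy^{(t)}$ to the top-left $[n] \times [n]$ block, renormalized by $\beta^{(t)}_x \beta^{(t)}_y$, where $\beta^{(t)}_x \defeq \sum_{i \leq n} \vx^{(t)}_i$ and $\beta^{(t)}_y \defeq \sum_{j \leq n} \vy^{(t)}_j$ (components with $\beta^{(t)}_x \beta^{(t)}_y = 0$ contribute nothing to welfare and can be discarded from the mixture). The welfare hypothesis $\sw(\vmu) \geq 1$ will be invoked to argue that these $\beta^{(t)}$'s are close to $1$ on average, which is precisely what allows the CCE guarantee to transfer from $\vmu$ to $\hatvmu$. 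That $\hatvmu$ is supported on $[n] \times [n]$ is immediate from the definition.

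For the welfare comparison, I exploit the fact that $\mat{R} + \mat{C}$ is nonzero only on the top-left $n \times n$ block, where it equals $\mat{A} + \gamma \mat{I}_n$. Defining $M^{(t)} \defeq \langle \hatvx^{(t)}_{i \leq n}, (\mat{A} + \gamma \mat{I}_n) \hatvy^{(t)}_{j \leq n} \rangle \in [0, 1+\gamma]$, this yields $\sw(\vmu) = \sum_t \alpha^{(t)} \beta^{(t)}_x \beta^{(t)}_y M^{(t)}$ versus $\sw(\hatvmu) = \sum_t \alpha^{(t)} M^{(t)}$, so $\sw(\hatvmu) \geq \sw(\vmu)$ because $\beta^{(t)}_x \beta^{(t)}_y \in [0,1]$ and $M^{(t)} \geq 0$. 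Combining $\sw(\vmu) \geq 1$ with $M^{(t)} \leq 1 + \gamma$ gives $\sum_t \alpha^{(t)}(1 - \beta^{(t)}_x \beta^{(t)}_y) \leq \gamma/(1+\gamma) \leq \gamma$; using $1 - \beta_x \beta_y \geq 1 - \beta_x$ (since $\beta_y \leq 1$) then produces $\sum_t \alpha^{(t)}(1 - \beta^{(t)}_x) \leq \gamma$ and, symmetrically, $\sum_t \alpha^{(t)}(1 - \beta^{(t)}_y) \leq \gamma$.

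For the $(\epseq + 2k\gamma)$-CCE claim, by symmetry it suffices to treat deviations of Player $x$ to an action $a_x' \in [2n]$. Using the block form of $\mat{R}$, I expand
\[
\langle \vx^{(t)}, \mat{R} \vy^{(t)} \rangle = \tfrac{1}{2} \beta^{(t)}_x \beta^{(t)}_y M^{(t)} - \tfrac{k}{2}\langle \vx^{(t)}_{i \leq n}, \vy^{(t)}_{j > n}\rangle + \tfrac{k}{2}\langle \vx^{(t)}_{i > n}, \vy^{(t)}_{j \leq n}\rangle,
\]
while $\langle \hatvx^{(t)}, \mat{R} \hatvy^{(t)} \rangle = \tfrac{1}{2} M^{(t)}$. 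Bounding the negative cross term by $\tfrac{k}{2}(1 - \beta^{(t)}_x)$ in absolute value and applying the estimate from the previous step gives $\E_{\hatvmu}[\mat{R}] - \E_\vmu[\mat{R}] \geq -k\gamma/2$. An analogous expansion of the deviation payoffs yields $\E_{\hatvmu}[\mat{R}_{a_x',\cdot}] - \E_\vmu[\mat{R}_{a_x',\cdot}] \leq k\gamma$ in all cases; combined with the CCE property of $\vmu$, this delivers $\E_{\hatvmu}[\mat{R}] \geq \E_{\hatvmu}[\mat{R}_{a_x',\cdot}] - (\epseq + \tfrac{3k\gamma}{2})$, which is stronger than required. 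The main obstacle is the deviation $a_x' = n+j$: the renormalization inflates $\vy^{(t)}_j$ by a factor of $1/\beta^{(t)}_y$, so Player $x$'s profit from switching to column $n+j$ can look strictly larger under $\hatvmu$ than under $\vmu$. The welfare hypothesis is essential here, as the bound $\sum_t \alpha^{(t)}(1 - \beta^{(t)}_y) \leq \gamma$ is exactly what caps the inflation of this deviation by $k\gamma/2$ and closes the argument.
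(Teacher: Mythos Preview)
Your argument is correct and follows essentially the same route as the paper's proof: both extract the key estimate $\sum_t \alpha^{(t)}(1-\beta^{(t)}_x\beta^{(t)}_y)\leq \gamma$ from the welfare hypothesis, and then transfer the CCE guarantee from $\vmu$ to $\hatvmu$ by controlling the discrepancy in payoffs through this quantity. The only cosmetic difference is that the paper bounds the deviation term via $\|\vy^{(t)}-\hatvy^{(t)}\|_1 = 2(1-\beta^{(t)}_y)$ and a H\"older-type inequality with $\max_{i,j}|\mat{R}_{i,j}|=k/2$, whereas you expand the block structure of $\mat{R}$ explicitly; both routes give the same $\epseq+\tfrac{3k\gamma}{2}$ (written as $\epseq+2k\gamma$ in the statement). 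One small caveat: your parenthetical that components with $\beta^{(t)}_x\beta^{(t)}_y=0$ ``can be discarded from the mixture'' is fine for the welfare comparison but would be incorrect if applied to the CCE analysis, since such components still contribute to the marginals; fortunately your actual CCE calculation does not discard them, and the uniform bound $\hatvy_j^{(t)}-\vy_j^{(t)}\leq 1-\beta^{(t)}_y$ covers the degenerate case as well.
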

We clarify that a distribution $\vmu \in \Delta([2n] \times [2n])$ is said to be supported only on $[n] \times [n]$ if $\vmu_{i, j} = 0$ when $i \notin [n]$ or $j \notin [n]$. We also note that the understanding above is that $\hatvx, \hatvy \in \Delta^{2n}$, although they are supported only on $[n]$. Further, if $\sum_{i=1}^n \vx_i^{(t)} = 0$, $\hatvx_{i \leq n}^{(t)}$ can be taken to be any point in $\Delta^{n}$; the same convention is followed with respect to $\hatvy_{j \leq n}^{(t)}$.

\begin{proof}[Proof of~\Cref{lemma:bad-mass}]
    We define $\delta^{(t)} \defeq 1 - \sum_{i=1}^n \vx_i^{(t)} \sum_{j=1}^n \vy_j^{(t)}$ for each $t \in [T]$. We will first show that $\sum_{t=1}^T \alpha^{(t)} \delta^{(t)} \leq \gamma$. Indeed, we can bound the welfare of each individual product distribution as
    \[
        \sw(\vx^{(t)} \otimes \vy^{(t)}) = \langle \vx^{(t)}, (\mat{R} + \mat{C} ) \vy^{(t)} \rangle = \sum_{i=1}^n \sum_{j=1}^n \vx_i^{(t)} (\mat{A}_{i, j} + \gamma \mathbbm{1} \{ i = j\} ) \vy_j^{(t)} \leq (1 + \gamma) (1 - \delta^{(t)}),
    \]
    by definition of $\mat{R}$ and $\mat{C}$ in~\eqref{eq:game}. Thus, we have
    \[
        1 \leq \sw(\vmu) = \sum_{t=1}^T \alpha^{(t)} \sw(\vx^{(t)} \otimes \vy^{(t)}) \leq (1 + \gamma) \left(1 - \sum_{t=1}^T \alpha^{(t)} \delta^{(t)} \right).
    \]
    So, $\sum_{t=1}^T \alpha^{(t)} \delta^{(t)} \leq \gamma$. We now define $\hatvx \in \Delta^{2n}$ such that $\hatvx_{i \leq n}^{(t)} = \vx^{(t)}_{i \leq n}/\sum_{i=1}^n \vx_i^{(t)}$ and $\hatvy \in \Delta^{2n}$ such that $\hatvy_{j \leq n}^{(t)} = \vy^{(t)}_{j \leq n}/\sum_{j=1}^n \vy_j^{(t)}$ for all $t \in [T]$. Accordingly, we define $\hatvmu \defeq \sum_{t=1}^T \alpha^{(t)} (\hatvx^{(t)} \otimes \hatvy^{(t)})$; by construction, each product of $\hatvmu$ is supported solely on $[n] \times [n]$. It is also clear that $\sw(\hatvmu) \geq \sw(\vmu)$ since, also by construction, $\hatvx_i^{(t)} \geq \vx_i^{(t)}$ and $\hatvy_j^{(t)} \geq \vy_j^{(t)}$ for all $t \in [T]$ and $i, j \in [n]$. As a result, it suffices to bound the CCE gap of $\hatvmu$. To that end, we will consider only deviations by Player $x$, as deviations by Player $y$ can be treated similarly. Since $\vmu$ is assumed to be an $\epsilon$-CCE, we have that for any $\vx \in \Delta^{2n}$,
    \begin{equation}
        \label{eq:CCE-def}
        \sum_{t=1}^T \alpha^{(t)} \langle \vx - \vx^{(t)}, \mat{R} \vy^{(t)}  \rangle \leq \epsilon.
    \end{equation}
    Now, we bound
    \begin{align}
        \langle \vx^{(t)}, \mat{R} \vy^{(t)} \rangle &\leq \sum_{i=1}^n \sum_{j=1}^n \vx_i^{(t)} \mat{R}_{i, j} \vy_j^{(t)} + \sum_{i=n+1}^{2n} \sum_{j=1}^n \vx_i^{(t)} \mat{R}_{i, j} \vy_j^{(t)} \notag \\
        &= \frac{1}{2} \sum_{i=1}^n \sum_{j=1}^n \vx_i^{(t)} (\mat{A}_{i, j} + \gamma \mathbbm{1} \{ i = j \} ) \vy_j^{(t)} + \frac{k}{2} \left( 1 - \sum_{i=1}^n \vx_i^{(t)} \right) \left(\sum_{j=1}^n \vy_j^{(t)} \right) \notag \\
        &\leq \frac{1}{2} \sum_{i=1}^n \sum_{j=1}^n \hatvx_i^{(t)} (\mat{A}_{i, j} + \gamma \mathbbm{1} \{ i = j \} ) \hatvy_j^{(t)} + \frac{k}{2} \left( 1 - \sum_{i=1}^n \sum_{j=1}^n \vx_i^{(t)} \vy_j^{(t)} \right) \notag \\
        &= \langle \hatvx^{(t)}, \mat{R} \hatvy^{(t)} \rangle + \frac{k}{2} \delta^{(t)}.\label{eq:ub-cce}
    \end{align}
Also, for any $\vx \in \Delta^{2n}$,
\begin{equation}
    \label{eq:lb-cce}
    \langle \vx, \mat{R} \vy^{(t)} \rangle = \langle \vx, \mat{R} \hatvy^{(t)} \rangle - \sum_{i=1}^{2n} \sum_{j=1}^{2n} \vx_i \mat{R}_{i, j} (\vy^{(t)}_j - \hatvy^{(t)}_j) \geq \langle \vx, \mat{R} \hatvy^{(t)} \rangle - k \delta^{(t)},
\end{equation}
where the inequality above uses the fact that $\| \vy^{(t)} - \hatvy^{(t)} \|_1 = 2 ( 1 - \sum_{j=1}^n \vy_j^{(t)})$. Indeed, when $\sum_{j=1}^n \vy_j^{(t)} = 0$, we have $\| \vy^{(t)} - \hatvy^{(t)} \|_1 = \|\vy^{(t)} \|_1 + \|\hatvy^{(t)} \|_1 = 2 = 2 ( 1 - \sum_{j=1}^n \vy_j^{(t)})$; in the contrary case, we have
\begin{align*}
    \|\vy^{(t)} - \hatvy^{(t)} \|_1 &= \|\vy^{(t)}_{j \geq n+1} \|_1 + \left\|\vy^{(t)}_{j \leq n} - \frac{\vy^{(t)}_{j \leq n}}{\sum_{j=1}^n \vy_j^{(t)}} \right\|_1 \\
    &= 1 - \sum_{j=1}^n \vy_j^{(t)} + \left( 1 - \frac{1}{\sum_{j=1}^n \vy_j^{(t)}} \right) \sum_{j=1}^n \vy_j^{(t)} = 2 \left( 1 - \sum_{j=1}^n \vy_j^{(t)} \right).
\end{align*}
Combining~\eqref{eq:ub-cce} and~\eqref{eq:lb-cce} with~\eqref{eq:CCE-def}, it follows that for any $\vx \in \Delta^{2n}$,
\[
    \sum_{t=1}^T \alpha^{(t)} \langle \vx - \hatvx^{(t)}, \mat{R} \hatvy^{(t)} \rangle \leq \epsilon + 2 k \sum_{t=1}^T \alpha^{(t)} \delta^{(t)} \leq \epsilon + 2k \gamma,
\]
and the same bound applies for $\sum_{t=1}^T \alpha^{(t)} \langle \vy - \hatvy^{(t)}, \mat{C}^\top \hatvx^{(t)} \rangle$. This shows that $\hatvmu$ is an $(\epsilon + 2k \gamma)$-CCE, as claimed. This completes the proof.
\end{proof}

\Cref{lemma:bad-mass} thus allows us to transform a near-optimal CCE into one solely supported on the $n \times n$ submatrix corresponding to $\mat{A}$, while incurring only a small degradation in the equilibrium quality---so long as $\gamma$ is sufficiently small. Assuming now a CCE with such a property, the next lemma shows that \emph{for each individual product distribution} comprising the underlying CCE, no action can be played with too high probability---at least when the corresponding weight $\alpha^{(t)}$ is sufficiently large. Otherwise, when $k$ is large, a deviating player could benefit by transitioning to the part of the game where that player's utility is $\frac{k}{2} \mat{I}_{n}$. This is made precise in the following lemma.

\begin{restatable}{lemma}{smallprob}
    \label{lemma:small-prob}
    Consider a $T$-sparse $\epsilon'$-CCE $\vmu \defeq \sum_{t=1}^T \alpha^{(t)} (\vx^{(t)} \otimes \vy^{(t)})$ of $(\mat{R}, \mat{C})$ such that each distribution $\vx^{(t)} \otimes \vy^{(t)}$ is supported only on $[n] \times [n]$. Then, it holds that $\vx^{(t)}_i, \vy_j^{(t)} \leq (1 + \gamma + 2 \epsilon')/(k \alpha^{(t)})$ for all $t \in [T]$ and $i, j \in [n]$.
\end{restatable}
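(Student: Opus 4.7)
The plan is to exploit the $\frac{k}{2}\mat{I}_n$ blocks appearing in the top-right of $\mat{C}$ and the bottom-left of $\mat{R}$. For fixed $t \in [T]$ and $i \in [n]$, test Player $y$'s $\epsilon'$-CCE inequality against the pure deviation $a_y' = n+i$. Since each $\vx^{(t')}$ is supported on $[n]$ by hypothesis, the only nonzero entry of $\mat{C}_{\cdot,\, n+i}$ that is hit by $\vx^{(t')}$ is $\mat{C}_{i,\,n+i} = \frac{k}{2}$, so the deviation payoff at round $t'$ equals $\langle \vx^{(t')}, \mat{C}\vec{e}_{n+i}\rangle = \frac{k}{2}\vx^{(t')}_i \geq 0$.

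Writing out the full weighted $\epsilon'$-CCE inequality $\sum_{t'} \alpha^{(t')} \langle \vx^{(t')}, \mat{C}(\vec{e}_{n+i} - \vy^{(t')}) \rangle \leq \epsilon'$, the next step on the deviation side is to retain only the index-$t$ summand, which is legitimate because every term $\alpha^{(t')} \cdot \frac{k}{2}\vx^{(t')}_i$ is non-negative. On the on-equilibrium side, because both $\vx^{(t')}$ and $\vy^{(t')}$ are supported on $[n]$, the inner product $\langle \vx^{(t')}, \mat{C}\vy^{(t')}\rangle$ only sees the top-left block $\frac{1}{2}(\mat{A} + \gamma\mat{I}_n)$, whose entries lie in $[0, \tfrac{1+\gamma}{2}]$; hence each such inner product, and therefore their $\alpha^{(t')}$-weighted average, is at most $\frac{1+\gamma}{2}$.

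Combining the two sides, $\alpha^{(t)} \cdot \frac{k}{2}\vx^{(t)}_i \leq \frac{1+\gamma}{2} + \epsilon'$, which rearranges exactly to the claim $\vx^{(t)}_i \leq (1+\gamma+2\epsilon')/(k\alpha^{(t)})$. The analogous bound on $\vy^{(t)}_j$ follows by the fully symmetric argument: test Player $x$'s $\epsilon'$-CCE inequality against the pure deviation $a_x' = n+j$ and invoke the bottom-left $\frac{k}{2}\mat{I}_n$ block of $\mat{R}$. There is essentially no technical obstacle beyond bookkeeping the block structure of~\eqref{eq:game}; the one conceptual point worth emphasizing is that non-negativity of the deviation-side contributions is what lets us convert a single \emph{aggregate} $T$-fold CCE inequality into a \emph{per-$t$} bound, and that the hypothesis of support in $[n]\times[n]$ is what simultaneously makes the deviation payoff clean (suppressing contributions from the bottom rows of $\mat{C}$) and keeps the on-equilibrium payoff uniformly bounded by $\tfrac{1+\gamma}{2}$.
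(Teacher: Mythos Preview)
Your proposal is correct and follows essentially the same argument as the paper: both test Player~$y$'s $\epsilon'$-CCE constraint against the deviation to column $n+i$, use the support assumption to cap the on-equilibrium payoff by $\tfrac{1+\gamma}{2}$, compute the deviation payoff as $\tfrac{k}{2}\sum_{t'}\alpha^{(t')}\vx^{(t')}_i$, and then drop all but the $t$th nonnegative summand to pass from the aggregate bound to a per-$t$ one (the paper writes this as $\max_{t}\alpha^{(t)}\vx^{(t)}_i \le (1+\gamma+2\epsilon')/k$). The symmetric treatment for $\vy^{(t)}_j$ via Player~$x$'s deviation to row $n+j$ is likewise identical.
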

An important aspect of the lemma above is that one can obtain meaningful information about each product distribution comprising $\vmu$ using only properties of $\vmu$. Assuming for simplicity that $\alpha^{(1)} = \dots = \alpha^{(T)}$, we see that the upper bound on the probabilities in~\Cref{lemma:small-prob} grows roughly as $T/k$---that is, it quickly deteriorates when $T \gg 1$. 

\begin{proof}[Proof of~\Cref{lemma:small-prob}]
    Fix some $i \in [n]$. We see that a deviation from Player $y$ to the $n + i$ column would give that player a utility of $\mat{C}_{i, n+i} \left( \sum_{t=1}^T \alpha^{(t)} \vx^{(t)}_i \right)$. But the current utility of Player $y$ under $\vmu$ is at most $\frac{1}{2} (1 + \gamma) $ since $\vmu$ is assumed to be supported only on $[n] \times [n]$. As a result, since $\vmu$ is an $\epsilon'$-CCE, this implies that $k \sum_{t=1}^T \alpha^{(t)} \vx_i^{(t)} \leq (1 + \gamma) + 2 \epsilon'$, which in turn yields
    \[
         \max_{t \leq T} \alpha^{(t)} \vx_i^{(t)} \leq \frac{1 + \gamma + 2 \epsilon'}{k}.
    \]
    Similar reasoning gives the same bound on $\max_{ t \leq T} \alpha^{(t)} \vy_j^{(t)}$ for any $j \in [n]$.
\end{proof}

Finally, the next lemma shows how one can extract a clique on $G$ using a product distribution with a suitable lower bound on its welfare (which follows from~\Cref{lemma:optimalCCE}) and an upper bound on the probability of its action (as in~\Cref{lemma:small-prob}).

\begin{restatable}{lemma}{inducedclique}
    \label{lemma:induced-clique}
    Consider a product distribution $\vx \otimes \vy$ supported only on $[n] \times [n]$ with welfare at least $1 + \gamma/\ell - 6 \gamma^2 T^2$ and $\vx_i, \vy_j \leq (1 + 6 k \gamma)/\ell$ for any $i, j \in [n]$, where $\ell \defeq \alpha^{(t)} k$ with $\alpha^{(t)} \geq \frac{1}{T}$. If $\gamma < \frac{1}{40^2 \ell^4 (\ell +1 )^2 k^2}$ and $k \geq 2 T$, the $\ell$ largest coordinates of $\vx$ induce an $\ell$-clique on $G$.
\end{restatable}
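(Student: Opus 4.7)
The welfare of $\vx \otimes \vy$ restricted to $[n] \times [n]$ decomposes as
\[
\sw(\vx \otimes \vy) \;=\; 1 \;-\; \sum_{i \neq j,\, \{i,j\} \notin E} \vx_i \vy_j \;+\; \gamma \sum_i \vx_i \vy_i,
\]
via $\sum_{i,j} \vx_i \vy_j = 1$ and the definition of $\mat{R}+\mat{C}$ on the $[n]\times[n]$ block. Combining the hypothesized welfare bound with the elementary inequality $\sum_i \vx_i \vy_i \leq \max_i \vy_i \leq (1+6k\gamma)/\ell$, I would extract two quantitative facts that drive the rest of the argument: the diagonal mass is near-maximal, $\sum_i \vx_i \vy_i \geq 1/\ell - 6\gamma T^2$, and the total mass placed on non-edges is tiny, $\sum_{i \neq j,\,\{i,j\}\notin E} \vx_i \vy_j \leq 6k\gamma^2/\ell + 6\gamma^2 T^2 = O(\gamma^2 k^2)$.

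Next I introduce the ``good'' index sets $S_\vx \defeq \{i \in [n] : \vx_i \geq (1-\delta)/\ell\}$ and $S_\vy$ analogously, with threshold $\delta \defeq 40\ell^2 k \sqrt{\gamma}$. The hypothesis $\gamma < 1/(40^2\ell^4(\ell+1)^2 k^2)$ is tailored precisely so that $\delta < 1/(\ell+1)$, which immediately yields $|S_\vx| \leq \ell/(1-\delta) < \ell+1$, i.e.\ $|S_\vx| \leq \ell$. For the matching lower bound $|S_\vx| \geq \ell$, I split the diagonal mass as
\[
\sum_i \vx_i \vy_i \;\leq\; \frac{1+6k\gamma}{\ell}\cdot p + \frac{1-\delta}{\ell}\cdot (1-p), \qquad p \defeq \sum_{i \in S_\vx} \vy_i,
\]
which together with the diagonal lower bound forces $p \geq 1 - O(k\sqrt{\gamma}/\ell)$; since $\max_i \vy_i \leq (1+6k\gamma)/\ell$, this yields $|S_\vx| \geq p\ell/(1+6k\gamma) > \ell - 1$, hence $|S_\vx|=\ell$. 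By symmetry $|S_\vy|=\ell$, and because $\vy$ places mass $1-o(1)$ on $S_\vx$ while each coordinate of $\vy$ on $S_\vy$ is at least $(1-\delta)/\ell$, the set $S_\vy\setminus S_\vx$ must be empty, so $S_\vx = S_\vy$.

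To conclude that $S_\vx$ induces an $\ell$-clique, I observe that any non-edge $\{i,j\}\notin E$ with $i\neq j$ inside $S_\vx = S_\vy$ would contribute $\vx_i \vy_j \geq ((1-\delta)/\ell)^2 = \Theta(1/\ell^2)$ to the non-edge mass, contradicting the $O(\gamma^2 k^2)$ bound under our choice of $\gamma$. Since $|S_\vx|=\ell$ and the threshold $(1-\delta)/\ell$ strictly separates the coordinates of $\vx$ inside $S_\vx$ from those outside, $S_\vx$ must coincide with the top $\ell$ coordinates of $\vx$, completing the proof. The main technical delicacy is the calibration of $\delta$: it must simultaneously satisfy $\delta < 1/(\ell+1)$ (for $|S_\vx|\leq \ell$) and $((1-\delta)/\ell)^2 \gg O(\gamma^2 k^2)$ (for the clique contradiction), yet be large enough that the forced lower bound on $p$ strictly exceeds $(\ell-1)(1+6k\gamma)/\ell$ (for $|S_\vx|\geq \ell$). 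The $\sqrt{\gamma}$ scaling of $\delta$ (rather than a linear-in-$\gamma$ scaling) is what reconciles these constraints and explains the quadratic form of the hypothesis $\gamma < 1/\mathrm{poly}(\ell,k)^2$.
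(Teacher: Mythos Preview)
Your proof is correct and in fact takes a cleaner route than the paper's. Both arguments use the same threshold $\delta = 40\ell^2 k\sqrt{\gamma}$ to define the candidate set, and both exploit the welfare lower bound to force $\sum_i \vx_i\vy_i \geq 1/\ell - 6\gamma T^2$. From there the arguments diverge. The paper passes through an $\ell_2$ analysis: it invokes \Cref{claim:squared-bound} to bound $\sum_i \vx_i^2$ and $\sum_i \vy_i^2$, deduces $\|\vx-\vy\|^2 = O(k^2\gamma/\ell)$, then uses Cauchy--Schwarz to lower-bound $\|\vx\|^2$, and finally applies \Cref{claim:squared-bound} again (split over $S$ and its complement) to show $p \defeq \sum_{i\in S}\vx_i$ is near $1$; the smallness of $\|\vx-\vy\|$ is also what transfers the lower bound on $\vx_j$ to $\vy_j$ on $S$ for the clique contradiction. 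You bypass all of the $\ell_2$ machinery by splitting the diagonal mass $\sum_i \vx_i\vy_i$ directly over $S_\vx$ using the coordinate bounds on $\vx$, which forces $p \defeq \sum_{i\in S_\vx}\vy_i$ to be near $1$ and hence $|S_\vx| \geq p\ell/(1+6k\gamma) > \ell-1$; symmetry and the mass bound then give $S_\vx = S_\vy$. Your separate bookkeeping of the non-edge mass ($\leq 6k\gamma^2/\ell + 6\gamma^2 T^2$) also yields the clique contradiction in one line, whereas the paper re-derives it from the welfare inequality. The upshot is that your argument is more elementary (no \Cref{claim:squared-bound}, no Cauchy--Schwarz) and makes the role of the two welfare constraints---diagonal mass near $1/\ell$, non-edge mass $O(\gamma^2 k^2)$---transparent; the paper's route, on the other hand, yields the quantitative closeness $\|\vx-\vy\| = O(k\sqrt{\gamma/\ell})$ as a byproduct, which is not needed here but could be useful elsewhere.
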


Before we proceed with the proof, we will make use of a simple claim stated and proven below.

\begin{claim}
    \label{claim:squared-bound}
    Let $\vx \in \R^{n}$ be such that $0 \leq \vx_i \leq 1/\ell$ for all $i \in [n]$ and $\sum_{i=1}^n \vx_i = p \in (0, n/\ell]$. Then, $\sum_{i=1}^n \vx_i^2 \leq p/\ell$.
\end{claim}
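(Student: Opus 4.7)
The plan is extremely short because the claim is a one-line convexity/monotonicity observation. The idea is that the hypothesis $0 \leq \vx_i \leq 1/\ell$ already pins down $\vx_i^2$ by a linear majorant of $\vx_i$, and summing is immediate.

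More precisely, I would simply note that for each $i \in [n]$, the bound $\vx_i \leq 1/\ell$ together with $\vx_i \geq 0$ yields
\[
\vx_i^2 = \vx_i \cdot \vx_i \leq \frac{\vx_i}{\ell}.
\]
Summing this inequality over $i \in [n]$ and invoking $\sum_{i=1}^n \vx_i = p$ gives
\[
\sum_{i=1}^n \vx_i^2 \leq \frac{1}{\ell} \sum_{i=1}^n \vx_i = \frac{p}{\ell},
\]
which is precisely the desired conclusion.

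There is no real obstacle to anticipate. As a sanity check, one can observe that the bound is tight whenever $p\ell \in \N$ and $\vx$ puts mass $1/\ell$ on exactly $p\ell$ coordinates and $0$ elsewhere, which is consistent with the convex-maximization intuition (the convex function $\vx \mapsto \|\vx\|_2^2$ is maximized at the extreme points of the polytope $\{0 \leq \vx_i \leq 1/\ell,\ \sum_i \vx_i = p\}$, namely assignments that push every coordinate to $0$ or $1/\ell$ with at most one fractional coordinate). The condition $p \leq n/\ell$ only ensures feasibility of the constraint set; it plays no role in the above one-line argument.
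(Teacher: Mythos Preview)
Your proof is correct, and it is genuinely shorter than the paper's own argument. The paper proceeds by an extremal/optimization argument: it considers a maximizer of $\sum_i \vx_i^2$ over the feasible polytope, argues by a local exchange (shifting mass from a smaller coordinate to a larger one that is still below $1/\ell$) that any maximizer must have all nonzero coordinates equal, or all but one equal to $1/\ell$, and then verifies the bound $p/\ell$ in each of these two cases. Your approach bypasses this entirely with the pointwise inequality $\vx_i^2 \le \vx_i/\ell$, which holds because $0 \le \vx_i \le 1/\ell$, and then sums. What the paper's longer route buys is an explicit description of the maximizing configurations (which you recover anyway in your sanity check); what your route buys is a one-line proof that makes no appeal to the structure of optimizers and shows clearly that the constraint $p \le n/\ell$ is used only for feasibility, not for the bound itself.
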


\begin{proof}
    Let $\vx \in \R^n$ be a maximizer of $\sum_{i=1}^n \vx_i^2$ subject to the constraints specified above. Let $S \subseteq [n]$ be the nonzero indexes of $\vx$. For the sake of contradiction, suppose that there exist $i, i' \in S$ with $i \neq i'$ such that $\vx_{i} > \vx_{i'}$ and $\vx_i < 1/\ell$. Then, there is a sufficiently small $\epsilon > 0$ such that $\vx_{i} + \epsilon \leq 1/\ell$ and $\vx_{i'} - \epsilon \geq 0$, and at the same time $(\vx_i + \epsilon)^2 + (\vx_{i'} - \epsilon)^2 > \vx_i^2 + \vx_{i'}^2$. This contradicts the optimality of $\vx$. We conclude that there are two possible cases. First, if $\vx_i = p/|S|$ for all $i \in S$, we have $\sum_{i \in S} \vx_i^2 = p^2/|S| \leq p/\ell$ (since $p/|S| \leq 1/\ell$ by feasibility). Otherwise, there is a coordinate $i \in S$ such that $\vx_i < 1/\ell$ while $\vx_{i'} = 1/\ell$ for $i' \in S \setminus \{i\}$. Thus, $\sum_{i \in S} \vx_i^2 \leq (|S| - 1)/\ell^2 + ( p - (|S| - 1)/\ell)^2 < (|S| - 1)/\ell^2 + ( p - (|S| - 1)/\ell) / \ell = p/\ell$.
\end{proof}

\begin{proof}[Proof of~\Cref{lemma:induced-clique}]
    Given that $\vx \otimes \vy$ is supported only on $[n] \times [n]$, we have
    \[
        (1 + \gamma) \sum_{i=1}^n \vx_i \vy_i + \left(1 - \sum_{i=1}^n \vx_i \vy_i \right) \geq \sw(\vx \otimes \vy) \geq 1 + \gamma / \ell - 6 \gamma^2 T^2,
    \]
    which implies that $\sum_{i=1}^n \vx_i \vy_i \geq 1/\ell - 6 \gamma T^2$. Further, \Cref{claim:squared-bound} implies that $\sum_{i=1}^n \vx_i^2 \leq (1 + 6 k \gamma) /\ell $ and $\sum_{i=1}^n \vy_i^2 \leq (1 + 6 k \gamma) /\ell$. Thus, combining the previous inequalities,
    \begin{equation}
        \label{eq:dist-xy}
        \|\vx - \vy \|^2 = \sum_{i=1}^n \vx_i^2 + \sum_{i=1}^n \vy_i^2 - 2 \sum_{i=1}^n \vx_i \vy_i \leq \frac{12 k \gamma + 12 \gamma T^2 }{\ell} \leq \frac{15 k^2 \gamma}{\ell} .
    \end{equation}
    Moreover,
    \begin{align*}
        \frac{1}{\ell} \leq \langle \vx, \vy \rangle = \langle \vx, \vy - \vx \rangle + \|\vx\|^2 &\leq \|\vx\| \|\vx - \vy\| + \|\vx\|^2 \\
        &\leq \sqrt{ \frac{1 + 6 k \gamma}{\ell}} \sqrt{\frac{15 k^2 \gamma}{\ell}} + \|\vx\|^2 \\
        &\leq \frac{4 k \sqrt{\gamma}}{\ell} + \|\vx\|^2
    \end{align*}
    since $\sqrt{1 + 4 k \gamma} \leq \sqrt{16/15}$. Rearranging,
    \begin{equation}
        \label{eq:lb-x2norm}
        \|\vx\|^2 \geq \frac{1}{\ell} \left( 1 - 4 k \sqrt{\gamma}\right).
    \end{equation}
Now, let $S \defeq \{i \in [n] : \vx_i \geq (1 - \gamma')/\ell \}$ for $\gamma' \defeq 40 \ell^2 k \sqrt{\gamma}$, and $p \defeq \sum_{i \in S} \vx_i$. Applying~\Cref{claim:squared-bound} to both $S$ and $[n] \setminus S$,
\[
     \|\vx\|^2 = \sum_{i \in S} \vx_i^2 + \sum_{ i \in [n] \setminus S} \vx_i^2 
     \leq p \frac{1 + 6 k \gamma}{\ell} + (1 -p) \frac{1 - \gamma'}{\ell}.
\]
Combining with~\eqref{eq:lb-x2norm}, we get
\[
     p \geq \frac{\gamma' - 4 k \sqrt{\gamma} }{6 k \gamma + \gamma'} \geq 1 - \frac{1}{4 \ell^2},
\]
where we used that $\gamma' = 40 \ell^2 k \sqrt{\gamma}$. We now claim that $|S| = \ell$. Indeed, since $(1 - \gamma')/\ell \leq \vx_i \leq (1 + 6 k \gamma) /\ell$ for all $i \in S$, we have
\[
    \frac{1 - \gamma'}{\ell} |S| \leq \sum_{i \in S} \vx_i \leq 1 \implies |S| \leq \frac{\ell}{1 - \gamma'} = \frac{\ell}{ 1 - 40 \ell^2 k \sqrt{\gamma} } < \ell + 1
\]
for $\gamma < \frac{1}{40^2 \ell^4 (\ell+1)^2 k^2}$, and (since $p = \sum_{i \in S} \vx_i \geq 1 - 1/(4 \ell^2)$)
\[
    1 - \frac{1}{4 \ell^2} \leq \sum_{i \in S} \vx_i \leq |S| \frac{1 + 6 k \gamma}{ \ell} \implies |S| > \ell - 1.
\]
Thus, $|S| = \ell$. Finally, we show that $S$ is an $\ell$-clique on $G$. We first note that, by~\eqref{eq:dist-xy},
\[
  \vy_j \geq \vx_j - \sqrt{ \frac{15 k^2 \gamma}{\ell}} \geq \frac{1 - \gamma'}{\ell} - 4 k \sqrt{\frac{\gamma}{\ell}}  \geq \frac{1}{\ell} - \frac{1.5}{\ell(\ell+1)} \geq \frac{1}{2\ell}
\] 
 for all $j \in S$, where the penultimate inequality follows from our choice of $\gamma$ and $\gamma'$, and the last inequality uses that $\ell \geq 2$. If we assume for the sake of contradiction that there exist $i', j' \in S$ such that $\mat{A}_{i', j'} = 0$, that is, $S$ is not a clique on $G$, then
 \begin{align*}
     \sw(\vx \otimes \vy) &= \sum_{i=1}^n \sum_{j = 1}^n \vx_i (\mat{A}_{i, j} + \gamma \mathbbm{1} \{ i = j \} ) \vy_j \\ &\leq (1 + \gamma) \sum_{i=1}^n \vx_i \vy_i + \sum_{i=1}^n \sum_{j \neq i} \mathbbm{1} \{\{i, j\} \neq \{i', j'\}\} \vx_i \mat{A}_{i, j} \vy_j + 2 
\vx_{i'} \mat{A}_{i', j'} \vy_{j'} \\ &\leq (1 + \gamma) \frac{1 + 6 k \gamma}{\ell} + 1 - \frac{1 + 6 k\gamma}{\ell} - \frac{1}{2 \ell^2} \\
     &\leq 1 + \frac{\gamma}{\ell} - \frac{1}{\ell^2} < 1 + \frac{\gamma}{\ell} - 6 \gamma^2 T^2,
 \end{align*}
 where the last inequality follows from our choice of $\gamma$. This is a contradiction since $\sw(\vx \otimes \vy) \geq 1 + \gamma/\ell - 6 \gamma^2 T^2$.
 \end{proof}

We are now ready to combine~\Cref{lemma:optimalCCE,lemma:bad-mass,lemma:small-prob,lemma:induced-clique} in order to prove~\Cref{theorem:algorithm1}.

\begin{proof}[Proof of~\Cref{theorem:algorithm1}]
    It suffices to show that if $G$ contains a $k$-clique, then~\Cref{line:S} outputs a $k/T$-clique. Indeed, if $\omega(G) \leq 2 T$, then~\Cref{alg:maxclique} trivially returns a $2T$-approximation; otherwise, it suffices to consider the largest iteration of the algorithm such that $k \leq \omega(G)$. Consider now for that iteration the $T$-sparse $\epsilon$-CCE, $\vmu = \sum_{t=1}^T \alpha^{(t)} (\vx^{(t)} \otimes \vy^{(t)})$, given as output by $\sparsecce$ in~\Cref{line:sparsecce}; we can assume here that $\alpha^{(t)} > 0$ for all $t \in [T]$. Let $\hatvmu \defeq \sum_{t=1}^T \alpha^{(t)} (\hatvx^{(t)} \otimes \hatvy^{(t)})$, where $(\hatvx^{(t)})_{t \leq T}$ and $(\hatvy^{(t)})_{
t \leq T}$ are as in~\Cref{line:til}. By~\Cref{lemma:optimalCCE}, we know that there is a $T$-sparse CCE with welfare at least $1 + \gamma T/k$, and so since the optimality gap $\epswel$ is small enough (by \Cref{line:eps}), we get that $\sw(\vmu) \geq 1$. Thus, we can apply~\Cref{lemma:bad-mass} to conclude that $\hatvmu$ is an $(\epsilon + 2k \gamma)$-CCE, which is a $(5 k \gamma /2)$-CCE by our choice of $\epsilon$ (\Cref{line:eps}). Next, \Cref{lemma:small-prob} implies that $\hatvx_i^{(t)} \leq (1 + \gamma + 5 k \gamma)/(\alpha^{(t)} k)$ for all $i \in [n]$ and $t \in [T]$. Using~\Cref{claim:squared-bound} and Cauchy-Schwarz, we have
\[
    \sum_{i=1}^n \hatvx_i^{(t)} \hatvy_i^{(t)} \leq \sqrt{ \left( \sum_{i=1}^n (\hatvx_i^{(t)})^2 \right) \left( \sum_{i=1}^n (\hatvy_i^{(t)})^2 \right) } \leq \frac{1 + \gamma + 5 k \gamma}{\alpha^{(t)} k }.
\]
Thus, for all $t \in [T]$,
\begin{equation}
    \label{eq:sw-bound}
    \sw(\hatvx^{(t)} \otimes \hatvy^{(t)}) \leq 1 + \frac{\gamma + \gamma^2 + 5 k \gamma^2}{\alpha^{(t)} k}.
\end{equation}
By~\Cref{lemma:optimalCCE,lemma:bad-mass}, we have
\begin{align}
    \alpha^{(t)} \sw(\hatvx^{(t)} \otimes \hatvy^{(t)}) &\geq 1 + \frac{\gamma T}{k} - \epswel - \sum_{\tau \neq t} \alpha^{(\tau)} \sw(\hatvx^{(\tau)} \otimes \hatvy^{(\tau)}) \notag \\
    &\geq \alpha^{(t)} + \frac{\gamma T}{k} - \epswel - (T-1) \frac{\gamma + \gamma^2 + 5 k \gamma^2}{k} \label{align:sw-bound} \\
    &\geq \alpha^{(t)} + \frac{\gamma}{k} - 5 \gamma^2 T - \frac{T}{k} \gamma^2 - \epswel \geq \alpha^{(t)} + \frac{\gamma}{k} - 6 \gamma^2 T, \label{align:epswel}
\end{align}
where~\eqref{align:sw-bound} uses~\eqref{eq:sw-bound} and the fact that $\alpha^{(t)} = 1 - \sum_{\tau \neq t} \alpha^{(\tau)} $, and~\eqref{align:epswel} follows from our choice of $\epswel$ and the fact that $k \geq 2$. Now, let $t^\star \in [T]$ be such that $\alpha^{(t^\star)} \geq 1/T$; \Cref{line:tstar} returns such a $t^\star$. Then, \eqref{align:epswel} implies that $\sw(\hatvx^{(t^\star)} \otimes \hatvy^{(t^\star)}) \geq 1 + \frac{\gamma}{\alpha^{(t^\star)} k} - 6 \gamma^2 T^2$ since $\alpha^{(t^\star)} \geq 1/T$. Finally, we can use \Cref{lemma:induced-clique} with respect to $\hatvx^{(t^\star)} \otimes \hatvy^{(t^\star)}$ to obtain an $(\alpha^{(t^\star)} k)$-clique, and thereby a $k/T$-clique in $G$ since $\alpha^{(t^\star)} \geq 1/T$. This completes the proof.
\end{proof}


\subsection{Further implications}
\label{sec:further-impl}

Building on our previous reduction, we first obtain some further lower bounds for natural problems related to sparse CCE, extending some corresponding results of~\citet{Gilboa89:Nash}. The reduction here serves as a warm-up for the upcoming one in~\Cref{sec:inapprox-obj}, which will enable us to capture other natural objectives, beyond welfare, and preclude any multiplicative approximation.

In this context, the problem $\uniquesparsecce(\cG, T)$ asks whether the two-player game $\cG$ given as input admits a unique \emph{uniform} $T$-sparse CCE (recall~\Cref{def:sparse}); no approximation error is allowed here, for otherwise there are trivially multiple sparse CCE. The reason why we need to posit a uniform $T$-sparse distribution is explained more after the proof of~\Cref{theorem:uniqueness}. We further point out that $\uniquesparsecce$ does not hinge on an underlying objective. As before, it is easy to see that without the sparsity constraint there is a polynomial-time algorithm based on linear programming. We will show that obtaining strongly sublinear sparsity is again computationally hard (\Cref{theorem:uniqueness}).

The basic idea here is to augment the game given in~\eqref{eq:game} with an additional action for each player, say $\newaction$, such that the following property holds: if the resulting game has a unique uniform $T$-sparse CCE, namely $(\newaction, \newaction)$, then $G$ does not contain a clique of size $k$; otherwise, $G$ must contain a clique of size $\Omega(k/T)$. Relying then on the hardness result of~\citet{Zuckerman07:Linear}, we arrive at the following theorem.

\begin{figure}[!ht]
\tiny
\centering
\begin{tabular}{ccccc|cccc|c}
     & \begin{tikzpicture}
         \node[draw, circle, scale=0.7] {1};
     \end{tikzpicture}  & \begin{tikzpicture}
         \node[draw, circle, scale=0.7] {2};
     \end{tikzpicture}  & \begin{tikzpicture}
         \node[draw, circle, scale=0.7] {3};
     \end{tikzpicture}  & \begin{tikzpicture}
         \node[draw, circle, scale=0.7] {4};
     \end{tikzpicture}  &   &   &  & & $\newaction$  \\
       \begin{tikzpicture}
         \node[draw, circle, scale=0.7] {1};
     \end{tikzpicture} & \cellcolor{darkgray}$(1 + \gamma, 1 + \gamma)$ & \cellcolor{darkgray}(1, 1) & (1, 1) & (1, 1) & $(-k, k)$ & (0, 0) & (0, 0) & (0, 0) & $(-2r, 2r)$ \\
      \begin{tikzpicture}
         \node[draw, circle, scale=0.7] {2};
     \end{tikzpicture}  & \cellcolor{darkgray}(1, 1) & \cellcolor{darkgray}$(1 + \gamma, 1 + \gamma)$ & (1, 1) & (0, 0) & 0 & $(-k, k)$ & (0, 0) & (0, 0) & $(-2r, 2r)$ \\
       \begin{tikzpicture}
         \node[draw, circle, scale=0.7] {3};
     \end{tikzpicture} & (1, 1) & (1, 1) & \cellcolor{darkgray}$(1 + \gamma, 1 + \gamma)$ & \cellcolor{darkgray}(1,1) & (0, 0) & (0, 0) & $(-k, k)$ & (0, 0) & $(-2r, 2r)$ \\
       \begin{tikzpicture}
         \node[draw, circle, scale=0.7] {4};
     \end{tikzpicture} & (1, 1) & (0, 0) & \cellcolor{darkgray}(1, 1) & \cellcolor{darkgray}$(1 + \gamma, 1 + \gamma)$ & (0, 0) & (0, 0) & (0, 0) & $(-k, k)$ & $(-2r, 2r)$ \\
    \hhline{~|--------}
        & $(k, -k)$ & (0, 0) & (0, 0) & (0, 0) & (0, 0) & (0, 0) & (0, 0) & (0, 0) & $(-2r, 2r)$ \\
        & (0, 0) & $(k, -k)$ & (0, 0) & (0, 0) & (0, 0) & (0, 0) & (0, 0) & (0, 0) & $(-2r, 2r)$ \\
        & (0, 0) & (0, 0) & $(k, -k)$ & (0, 0) & (0, 0) & (0, 0) & (0, 0) & (0, 0) & $(-2r, 2r)$ \\
        & (0, 0) & (0, 0) & (0, 0) & $(k, -k)$ & (0, 0) & (0, 0) & (0, 0) & (0, 0) & $(-2r, 2r)$ \\
    \hhline{~|---------}
    $\newaction$ & $(2r, -2r)$ & $(2r, -2r)$ & $(2r, -2r)$ & $(2r, -2r)$ & $(2r, -2r)$ & $(2r, -2r)$ & $(2r, -2r)$ & $(2r, -2r)$ & \cellcolor{darkgray}$(2r, 2r)$ 
    \end{tabular}
    \caption{Payoff matrices of $\cG'$ (after multiplying each entry by $2$) based on graph $G$ of~\Cref{fig:graph}.}
    \label{fig:new-matrix}
\end{figure}

\begin{restatable}{theorem}{uniqueness}
    \label{theorem:uniqueness}
    $\uniquesparsecce(\cG, n^{1 - \epsilon})$ is $\NP$-hard with respect to $n \times n$ games for any constant $\epsilon > 0$.
\end{restatable}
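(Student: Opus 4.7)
The plan is to reduce from (approximate) $\maxclique$ by augmenting the game~\eqref{eq:game} with one additional action $\newaction$ per player, producing the $(2n+1) \times (2n+1)$ game $\cG'$ depicted in~\Cref{fig:new-matrix}. We retain $k$ and $\gamma$ as in~\Cref{alg:maxclique}, introduce a parameter $r > 0$ (to be set slightly below $\tfrac{1}{2}(1 + \gamma T/k)$), and target sparsity $T = N^{1 - \epsilon'}$ for the game size $N = 2n+1$; by~\Cref{theorem:clique-hardness}, any constant $\epsilon' > 0$ will suffice. The first step is to observe that $(\newaction, \newaction)$ is a pure Nash equilibrium of $\cG'$---each player receives utility $r$ there, while unilaterally deviating to any action in $[2n]$ fetches only $-r$. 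Hence $(\newaction, \newaction)$ provides a trivial uniform $T$-sparse CCE for every input $G$, so proving uniqueness hardness amounts to deciding whether a \emph{second} uniform $T$-sparse CCE of $\cG'$ exists.

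For the completeness side, suppose $G$ contains a $k$-clique. I would invoke~\Cref{lemma:optimalCCE} to obtain a uniform $T$-sparse CCE $\vmu^\star$ of the unaugmented game in which each player's expected utility equals $\tfrac{1}{2}(1 + \gamma T/k)$. Since $r < \tfrac{1}{2}(1 + \gamma T/k)$, no deviation to $\newaction$ is profitable, and so $\vmu^\star$ remains a CCE in $\cG'$---distinct from $(\newaction, \newaction)$---giving the non-unique side of the reduction.

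The technical heart is soundness: assuming $\cG'$ admits a uniform $T$-sparse CCE $\vmu \neq (\newaction, \newaction)$, extract an $\Omega(k/T)$-clique in $G$. The CCE constraint obtained by deviating to $\newaction$ forces each player's expected utility to be at least $r$, hence $\sw(\vmu) \geq 2r$. Since $\mat{R}' + \mat{C}'$ vanishes off the $[n] \times [n]$ block and the $(\newaction, \newaction)$ diagonal entry, we have
\[
    \sw(\vmu) = \sum_{i, j \in [n]} \vmu_{i,j} \bigl( \mat{A}_{i,j} + \gamma \mathbbm{1}\{i = j\} \bigr) + 2 r \, \vmu_{\newaction, \newaction},
\]
which, together with $\vmu_{\newaction, \newaction} < 1$, forces $\vmu$ to place substantial mass on $[n] \times [n]$ with near-optimal per-entry welfare. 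From here I would adapt the chain~\Cref{lemma:bad-mass,lemma:small-prob,lemma:induced-clique}: the normalization of~\Cref{lemma:bad-mass} now zeros out both $\{n+1, \ldots, 2n\}$ and $\{\newaction\}$; the probability upper bound of~\Cref{lemma:small-prob} acquires an additional correction proportional to the marginal mass placed on $\newaction$ in the deviation-to-$(n+i)$ inequality; and~\Cref{lemma:induced-clique} then applies to a product component of $\vmu$ whose restriction to $[n] \times [n]$ has near-optimal welfare, yielding a $(k/T)$-clique since uniformity fixes $\alpha^{(t)} = 1/T$ throughout.

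Combining the two directions with~\Cref{theorem:clique-hardness} completes the argument: distinguishing $\omega(G) \geq k$ from $\omega(G) < k/n^{1-\epsilon}$ translates into distinguishing non-uniqueness from uniqueness of uniform $T$-sparse CCE in $\cG'$, giving $\NP$-hardness of $\uniquesparsecce(\cG', T)$ at sparsity $T = N^{1 - \epsilon'}$ for any constant $\epsilon' > 0$. The main obstacle will be re-running the technical chain~\Cref{lemma:bad-mass,lemma:small-prob,lemma:induced-clique} in the presence of action $\newaction$, which pollutes the deviation analysis for the $(n+i)$ columns with cross-terms proportional to the marginal probability of $\newaction$; these must be absorbed into the error budget by taking $r$ and $\gamma$ small enough that they remain of lower order than the dominant $k$-scale terms driving anti-concentration.
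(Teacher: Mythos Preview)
Your high-level reduction matches the paper's exactly---augment~\eqref{eq:game} with an action $\newaction$ so that $(\newaction,\newaction)$ is always a pure Nash equilibrium, invoke~\Cref{lemma:optimalCCE} for completeness, and for soundness extract an $\Omega(k/T)$-clique from any second uniform $T$-sparse CCE. The paper in fact sets $r = \tfrac12(1+\gamma T/k)$ \emph{exactly}, not slightly below; this is immaterial for completeness, but pinning down $r$ is what makes the soundness welfare bound $\sw(\vmu)\ge 2r = 1+\gamma T/k$ tight enough to drive the rest of the argument.

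There is, however, a genuine gap in your soundness sketch. Your plan is to apply the normalization of~\Cref{lemma:bad-mass} to the \emph{entire} CCE $\vmu$, zeroing out $\{n{+}1,\dots,2n\}\cup\{\newaction\}$, and then run~\Cref{lemma:small-prob,lemma:induced-clique}. This fails: nothing prevents some products $\vx^{(t)}\otimes\vy^{(t)}$ from sitting entirely on $(\newaction,\newaction)$, in which case the mass outside $[n]\times[n]$ is $\Theta(1)$ rather than $O(\gamma)$, and the CCE-gap degradation in~\Cref{lemma:bad-mass} blows up by a factor of $k$. Relatedly, your proposed fix of ``taking $r$ small enough'' to absorb the $\newaction$ cross-terms is wrong and contradicts your own choice $r\approx \tfrac12$: with $r\ll 1$ the welfare lower bound $\sw(\vmu)\ge 2r$ becomes vacuous and you can no longer force near-optimal conditional welfare on $[n]\times[n]$. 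The paper proceeds differently: it first isolates a \emph{single} product $t$ that maximizes $\sw(\vx^{(t)}\otimes\vy^{(t)})$ subject to having positive mass on $[n]\times[n]$ (such $t$ exists since welfare off $[n]\times[n]$ is at most $2r$ and $\vmu\neq(\newaction,\newaction)$), and then uses a product-structure argument---specifically $p+p_\newaction+2pp_\newaction\le 1$ combined with $\sw\ge 2r$---to conclude $p_\newaction\le\gamma/2$ for \emph{that} product. Only then are analogs of~\Cref{lemma:small-prob,lemma:induced-clique} applied to this single product; the $\newaction$ cross-term in the deviation-to-$(n{+}i)$ inequality is $O(r)=O(1)$, not small, but it is dominated by the $k/(2T)$ term and merely weakens the anti-concentration bound to $\hatvx_i^{(t)}\le 2/\ell$ (and ultimately the extracted clique to size $\ell/8$ rather than $\ell$).
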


\begin{proof}
    
We will use the fact that distinguishing whether an $n$-node graph $G$ has a clique of size at most $n^{\epsilon}$ or at least $n^{1 - \epsilon}$ is \NP-hard for any constant $\epsilon > 0$~\citep{Zuckerman07:Linear}. We proceed as follows. We augment the game given in~\eqref{eq:game} with an additional action, $\newaction$, for each player. We will refer to the resulting game as $\cG'$. When both players select $\newaction$, they both receive a utility of $r \defeq (1 + \gamma T/k)/2$. When only one player selects $\newaction$, that player receives $r$ while the other player receives $-r$ (see~\Cref{fig:new-matrix}). It is clear that $(\newaction, \newaction)$ is a (pure) Nash equilibrium, and hence a $T$-sparse CCE. Moreover, \Cref{lemma:optimalCCE} shows that when $G$ contains a $k$-clique, there is a uniform $T$-sparse correlated distribution $\vmu$ in $\cG$ supported only on the $n \times n$ submatrix corresponding to $\mat{A}$, in which each player obtains a utility of $r$. It is easy to see that $\vmu$ induces a (uniform) $T$-sparse CCE in $\cG'$ as well. What remains to show thus is that when $G$ does not contain an $\Omega(k/T)$ clique, $(\newaction, \newaction)$ is the only uniform $T$-sparse CCE of $\cG'$.

Indeed, since a player can always switch to playing $\newaction$, which secures that player a utility of $r$, any $T$-sparse CCE in $\cG'$ must have welfare at least $2r = 1 + \gamma T / k$. Following the argument of~\Cref{subsec:prooftheorem}, when $\gamma$ is sufficiently small, we will show that this can only happen under $(\newaction, \newaction)$ or when $G$ contains an $\Omega(k/T)$-clique. In proof, let $\vmu = \frac{1}{T} \sum_{t=1}^T \vx^{(t)} \otimes \vy^{(t)}$ be a CCE in $\cG'$ distinct from $(\newaction, \newaction)$, which we argued must satisfy $\sw(\vmu) \geq 1 + \gamma T / k$. We consider now the product $t \in [T]$ that maximizes $\sw(\vx^{(t)} \otimes \vy^{(t)})$ subject to the constraint that the probability mass assigned to the $n \times n$ submatrix corresponding to $\mat{A}$ is (strictly) positive. It is clear that such $t$ exists, and $\sw(\vx^{(t)} \otimes \vy^{(t)}) \geq 1 + \gamma T/k$. We will now argue that $p \defeq \pr_{i \sim \vx^{(t)}, j \sim \vy^{(t)}} [i \leq n, j \leq n ] \geq 1 - 2\gamma$. Let $p_{\newaction} = \pr_{i \sim \vx^{(t)}, j \sim \vy^{(t)}} [i = \newaction, j = \newaction]$. Then, it follows that $\sw(\vx^{(t)} \otimes \vy^{(t)}) \leq p (1 + \gamma) + p_{\newaction} (1 + \gamma T /k)$ since $\cG'$ is, by construction, zero-sum when excluding $(\newaction, \newaction)$ and the $n \times n$ submatrix corresponding to $\mat{A}$. Thus, $p + p_{\newaction} \geq 1 - \gamma$. Further, a union bound implies that $p + p_{\newaction} + 2 p p_{\newaction} \leq 1$. Indeed, we have
\begin{align*}
    \pr_{i \sim \vx^{(t)}, j \sim \vy^{(t)}} [i = \newaction, j \neq \newaction] &= \pr_{i \sim \vx^{(t)}} [i = \newaction] \pr_{j \sim \vy^{(t)}} [j \neq \newaction] \\
    &\geq \pr_{i \sim \vx^{(t)}, j \sim \vy^{(t)}} [i = \newaction, j = \newaction] \pr_{i \sim \vx^{(t)}, j \sim \vy^{(t)}} [i \leq n, j \neq \newaction]  \\
    &\geq p_{\newaction} p,
\end{align*}
and the same lower bound applies to $\pr_{i \sim \vx^{(t)}, j \sim \vy^{(t)}} [i \neq \newaction, j = \newaction]$. Combining with the fact that $\sw(\vx^{(t)} \otimes \vy^{(t)}) \geq 1 + \gamma T/k$, we conclude that
\[
    1 + \gamma \frac{T}{k} \leq 1 - 2 p p_{\newaction} + p \gamma + p_{\newaction} \gamma \frac{T}{k} \leq 1 + \gamma \frac{T}{k} - 2 p p_{\newaction} + p \gamma \implies p_{\newaction} \leq \frac{\gamma}{2},
\]
where we used that $p > 0$ (by assumption of how $t \in [T]$ was chosen). Next, we define $\hatvx_{i \leq n}^{(t)} =  \nicefrac{\vx^{(t)}_{i \leq n}}{\sum_{i=1}^n \vx_i^{(t)}}$ and $\hatvy_{j \leq n}^{(t)} =  \nicefrac{\vy^{(t)}_{j \leq n}}{\sum_{j=1}^n \vy_j^{(t)}}$. Similarly to~\Cref{lemma:small-prob}, we will now show that $\hatvx_{i}^{(t)}, \hatvy_j^{(t)} \leq 2/\ell$ for any $(i, j) \in [n] \times [n]$, where $\ell \defeq k/T$ and $\gamma$ is sufficiently small. Indeed, by construction of $\cG'$,
\[
    1 + \frac{\gamma}{\ell} \leq \sw(\vmu) \leq (1 + \gamma) \pr_{(i , j) \sim \vmu } [ (i \leq n, j \leq n) \lor (i = \newaction, j = \newaction) ].
\]
Thus,
\[
     \pr_{(i, j ) \sim \vmu} [ (i \leq n, j \leq n) \lor (i = \newaction, j = \newaction) ] \geq 1 - \gamma.
\]
This implies that the expected utility of each player under $\vmu$ is at most $(1 + \gamma + k \gamma)/2$. Further, since $\vmu$ is also assumed to be a CCE, considering the deviation of Player $y$ to column $n + i$, we have that for any $i \in [n]$,
\[
    \frac{1 + \gamma + k \gamma}{2} \geq \frac{k}{2T} \vx_i^{(t)} - \frac{T-(1 - \gamma )}{T} r \implies \hatvx_i^{(t)} \leq \frac{2}{\ell}
\]
when $\gamma$ is sufficiently small, where we used the fact that $\sum_{i=1}^n \vx_i^{(t)} \geq p \geq 1 - \gamma$. Similar reasoning yields that $\hatvy_j^{(t)} \leq 2/\ell$ for any $j \in [n]$. Moreover, since $\sw(\vx^{(t)} \otimes \vy^{(t)}) \geq 1 + \gamma/\ell$, it is easy to see that $\sw(\hatvx^{(t)} \otimes \hatvy^{(t)}) \geq 1 + \gamma/\ell$, which in turn implies that $\sum_{i=1}^n \hatvx_i^{(t)} \hatvy_i^{(t)} \geq 1/\ell$. To conclude the argument, we will show---analogously to~\Cref{lemma:induced-clique}---how to extract an $\Omega(\ell)$-clique from that product distribution. Namely, we identify the set $S$ consisting of all $i \in [n]$ such that $\hatvx^{(t)}_i, \hatvy^{(t)}_i \geq 1/(16 \ell)$. Then,
\begin{align}
    \frac{1}{\ell} \leq \sum_{i=1}^n \hatvx_i^{(t)} \hatvy_i^{(t)} &= \sum_{i \in S} \hatvx_i^{(t)} \hatvy_i^{(t)} + \sum_{i \notin S} \hatvx_i^{(t)} \hatvy_i^{(t)} \notag \\
    &= \sum_{i \in S} \hatvx_i^{(t)} \hatvy_i^{(t)} + \sum_{i \notin S} \max(\hatvx_i^{(t)}, \hatvy_i^{(t)}) \min (\hatvx_i^{(t)}, \hatvy_i^{(t)}) \notag \\
    &\leq |S| \frac{4}{\ell^2} + \sqrt{ \sum_{i \notin S} \max(\hatvx_i^{(t)}, \hatvy_i^{(t)})^2 \sum_{i \notin S} \min ( \hatvx_i^{(t)}, \hatvy_i^{(t)})^2 } \label{align:bound-maxmin} \\
    &\leq |S| \frac{4}{\ell^2} + \sqrt{ \frac{4}{\ell} \frac{1}{16 \ell}},\label{align:claim-bound}
\end{align}
where~\eqref{align:bound-maxmin} uses the Cauchy-Schwarz inequality and the fact that $\hatvx_i^{(t)}, \hatvy^{(t)}_i \leq 2/\ell$, and~\eqref{align:claim-bound} is a consequence of~\Cref{claim:squared-bound}; to apply~\Cref{claim:squared-bound}, we note that $\sum_{i \notin S} \max(\hatvx_i^{(t)}, \hatvy_i^{(t)}) \leq \sum_{i \notin S} ( \hatvx_i^{(t)} + \hatvy_i^{(t)}) \leq 2$; $\sum_{i \notin S} \min(\hatvx_i^{(t)}, \hatvy_i^{(t)}) \leq \sum_{i \notin S} \hatvx_i^{(t)} \leq 1$; $\max(\hatvx_i^{(t)}, \hatvy_i^{(t)}) \leq 2/\ell$ for all $i \in [n]$; and $\min(\hatvx_i^{(t)}, \hatvy_i^{(t)}) \leq 1/(16\ell)$ for all $i \in [n] \setminus S$ (by definition of $S$). As a result, we conclude that $|S| \geq \ell/8$. Finally, similarly to the argument of~\Cref{lemma:induced-clique}, it is easy to see that $S$ induces a clique on $G$---the contrary case would contradict the fact that $\sw(\hatvx^{(t)} \otimes \hatvy^{(t)}) \geq 1 + \gamma/\ell$ when $\gamma$ is sufficiently small.

Therefore, under uniqueness we know that $G$ does not contain a clique of size $k$, while in the contrary case $G$ must contain a clique of size $\Omega(k/T)$. Taking $k = n^{1 - \epsilon}$ and $T = \Theta(n^{1 - 2 \epsilon})$ in conjunction with the hardness result of~\citet{Zuckerman07:Linear} completes the proof.
\end{proof}

It is not clear how to extend~\Cref{theorem:uniqueness} when we allow non-uniform mixtures. In particular, taking two mixtures on $(1, 1)$ and $(\newaction,\newaction)$ with weights $\alpha^{(1)} \approx 0$ and $\alpha^{(2)} \approx 1$, respectively, yields a ($2$-sparse) CCE for $\cG'$, but clearly does not contain any useful information about $G$. (Interestingly, the previously described correlated distribution is not a CE: a player can profitably deviate according to $1 \mapsto n+1$ and $\newaction \mapsto \newaction$.)


Next, the problem $\subsetsparsecce(\cG, T, S)$ asks whether there is a uniform $T$-sparse CCE supported solely on the joint actions given as input in $S$; this problem also does not hinge on an underlying objective. Using the previous reduction, by excluding the pair $(\newaction, \newaction)$, we obtain a similar \NP-hardness result.

\begin{theorem}
    \label{theorem:subset}
    $\subsetsparsecce(\cG, n^{1 - \epsilon}, S)$ is $\NP$-hard with respect to $n \times n$ games for any constant $\epsilon > 0$.
\end{theorem}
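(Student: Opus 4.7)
The plan is to reuse the game $\cG'$ from the proof of Theorem~\ref{theorem:uniqueness} essentially verbatim, together with the set $S \defeq (\cA_x \times \cA_y) \setminus \{(\newaction, \newaction)\}$. Concretely, given an $n$-node graph $G$, I would construct the $(2n+1) \times (2n+1)$ game of Figure~\ref{fig:new-matrix} with $k = n^{1-\epsilon}$ and $T = \Theta(n^{1-2\epsilon})$, and invoke $\subsetsparsecce(\cG', T, S)$. The goal is to show that a uniform $T$-sparse CCE of $\cG'$ supported on $S$ exists if and only if $G$ contains a clique of size $\Omega(k/T)$.

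For completeness, whenever $G$ admits a $k$-clique, Lemma~\ref{lemma:optimalCCE} furnishes a uniform $T$-sparse CCE $\vmu$ of $(\mat{R}, \mat{C})$ whose support lies entirely in the $n \times n$ submatrix corresponding to $\mat{A}$. As noted in the proof of Theorem~\ref{theorem:uniqueness}, the same $\vmu$ remains a CCE of the augmented game $\cG'$---each player already collects utility $r$, so deviating to $\newaction$ is not profitable---and by construction $\supp(\vmu) \subseteq S$.

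For soundness, suppose a uniform $T$-sparse CCE $\vmu$ of $\cG'$ satisfies $\supp(\vmu) \subseteq S$. The entire soundness argument of Theorem~\ref{theorem:uniqueness} then applies with essentially no changes. The welfare lower bound $\sw(\vmu) \geq 2r = 1 + \gamma T/k$ exploited there is in fact a consequence of the equilibrium condition itself---each player may unilaterally deviate to $\newaction$ and secure $r$---and is unaffected by the support constraint, since deviations into \emph{arbitrary} actions remain unrestricted. The role of the uniqueness hypothesis in Theorem~\ref{theorem:uniqueness} was solely to exclude the trivial CCE $\vmu = \delta_{(\newaction, \newaction)}$; here that is accomplished automatically by $\supp(\vmu) \subseteq S$. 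From that point one picks a product $\vx^{(t)} \otimes \vy^{(t)}$ with positive mass on the $\mat{A}$-submatrix and welfare at least $1 + \gamma T/k$, bounds the per-coordinate probabilities by $2/\ell$ with $\ell \defeq k/T$ via an analogue of Lemma~\ref{lemma:small-prob}, and extracts an $\Omega(\ell)$-clique of $G$ through Cauchy--Schwarz and Claim~\ref{claim:squared-bound}. Combined with Theorem~\ref{theorem:clique-hardness}, this yields the claimed $\NP$-hardness. The only step requiring care---though I do not expect a genuine obstacle---is verifying that the uniqueness hypothesis enters the proof of Theorem~\ref{theorem:uniqueness} \emph{only} through the exclusion of $(\newaction, \newaction)$, so that substituting the support restriction for it preserves every downstream inequality; once this is confirmed, the adaptation is essentially cosmetic.
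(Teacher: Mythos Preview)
Your proposal is correct and follows exactly the route the paper intends: it states only that ``using the previous reduction, by excluding the pair $(\newaction, \newaction)$, we obtain a similar \NP-hardness result,'' and your write-up spells out precisely this---taking $S = (\cA_x \times \cA_y) \setminus \{(\newaction, \newaction)\}$ and reusing the soundness argument of Theorem~\ref{theorem:uniqueness}, where the role of uniqueness is indeed solely to rule out $\delta_{(\newaction,\newaction)}$.
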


\subsection{Inapproximability and other objectives}
\label{sec:inapprox-obj}

We will next show that a small adjustment to the reduction of~\Cref{sec:further-impl} has two further important consequences: it enables capturing other natural objectives (\Cref{cor:egal-wel,cor:oneutil}), beyond welfare, and it precludes any multiplicative approximation with respect to the objective (\Cref{cor:inapprox}).

The driving force behind those results is~\Cref{theorem:basic-emb} below. For simplicity, in what follows we prove lower bounds concerning CCE (that is, with $\epsilon = 0$), although it is straightforward to extend the argument so as to account for some imprecision in the approximation of the equilibrium.

\begin{theorem}
    \label{theorem:basic-emb}
    Given as input an $n$-node graph $G$, $k \in [n]$ and $\epsilon \ll 1$, we can construct a $(2n+1) \times (2n+1)$ (two-player) game with the following properties:
    \begin{enumerate}
        \item it always admits a (pure) Nash equilibrium under which both players obtain a utility of $\epsilon$; \label{item:bad-Nash}
        \item when $G$ contains a clique of size $k$, there is a uniform $T$-sparse CCE such that each player obtains a utility of at least $1/2$; and \label{item:good-CCE}
        \item when $G$ does not contain a clique of size $k/(8T)$, there is no other uniform $T$-sparse CCE.\label{item:gap}
    \end{enumerate}
\end{theorem}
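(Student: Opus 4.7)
My construction is a small modification of the game $\cG'$ used in the proof of~\Cref{theorem:uniqueness}. I keep the $(2n)\times(2n)$ submatrix from~\eqref{eq:game} and the ``threat'' block on the extra action $\newaction$---$\tilde{\mat{R}}_{\newaction, j}=r$ and $\tilde{\mat{R}}_{i, \newaction}=-r$ for $i,j\in[2n]$, with $\tilde{\mat{C}}$ defined symmetrically, where $r=(1+\gamma T/k)/2$---and the only change is that I set the $(\newaction,\newaction)$ payoff to $\epsilon$ rather than $r$. Property~\ref{item:bad-Nash} is then immediate: at $(\newaction,\newaction)$ each player earns $\epsilon$, and any unilateral deviation pays only $-r<\epsilon$, so the profile is a strict pure Nash equilibrium. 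For~\ref{item:good-CCE} I invoke~\Cref{lemma:optimalCCE} to obtain a uniform $T$-sparse distribution supported on $[n]\times[n]$ that gives each player utility $r\geq 1/2$; deviating to $\newaction$ yields exactly $r\cdot\bar{\vy}_{[2n]}=r$, so the CCE inequality survives the augmentation.

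The bulk of the work is~\ref{item:gap}. Fix a uniform $T$-sparse CCE $\vmu=\tfrac{1}{T}\sum_t\vx^{(t)}\otimes\vy^{(t)}$ different from $(\newaction,\newaction)$; I will extract a $k/(8T)$-clique of $G$. First I rule out the degenerate case $\bar{\vx}_{\newaction}=1$ (and symmetrically $\bar{\vy}_{\newaction}=1$): if all $\vx^{(t)}=e_{\newaction}$ but $\bar{\vy}_{\newaction}<1$, Player $y$'s utility is $-r(1-\bar{\vy}_{\newaction})+\epsilon\bar{\vy}_{\newaction}$ while the deviation to $\newaction$ pays $\epsilon$ (using $\bar{\vx}_{\newaction}=1$), so the CCE inequality forces $\bar{\vy}_{\newaction}=1$---a contradiction. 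Hence both $\bar{\vx}_{\newaction},\bar{\vy}_{\newaction}<1$. Next I decompose $U_x=V_x+r(\bar{\vx}_{\newaction}-\bar{\vy}_{\newaction})+\epsilon\,\vmu_{\newaction,\newaction}$, where $V_x\defeq\tfrac{1}{T}\sum_t\langle\vx^{(t)}_{[2n]},\mat{R}\,\vy^{(t)}_{[2n]}\rangle$ denotes Player $x$'s expected payoff from the $[2n]\times[2n]$ portion of $\vmu$. Combining this identity with the CCE deviation to $\newaction$ and the inequality $\vmu_{\newaction,\newaction}\leq\bar{\vy}_{\newaction}$ yields the key per-player estimate $V_x\geq r(1-\bar{\vx}_{\newaction})$, and symmetrically $V_y\geq r(1-\bar{\vy}_{\newaction})$; since the $[2n]\times[2n]\setminus([n]\times[n])$ block is zero-sum, $V_x+V_y$ equals the clique-region welfare $S\defeq\sum_{(i,j)\in[n]^2}\vmu_{ij}(\mat{A}_{ij}+\gamma\mathbbm{1}\{i=j\})$, giving $S\geq r(2-\bar{\vx}_{\newaction}-\bar{\vy}_{\newaction})>0$. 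From here I adapt the final clique-extraction steps of~\Cref{theorem:uniqueness}: select $t^\star$ maximizing $\langle\vx^{(t)}_{[n]},(\mat{A}+\gamma\mat{I}_n)\vy^{(t)}_{[n]}\rangle$; renormalize to $\hatvx^{(t^\star)}\otimes\hatvy^{(t^\star)}$ on $[n]\times[n]$; use the CCE deviations by Player $y$ to column $n+i$ (and the symmetric ones for Player $x$) to derive the coordinate bounds $\hatvx_i^{(t^\star)},\hatvy_i^{(t^\star)}\leq 2T/k$; and combine these with $S$ to conclude $\sum_i\hatvx_i^{(t^\star)}\hatvy_i^{(t^\star)}\geq T/k$. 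The support-filtering step of~\Cref{theorem:uniqueness}, applied to $\{i:\hatvx_i^{(t^\star)},\hatvy_i^{(t^\star)}\geq T/(16k)\}$, then yields an induced clique in $G$ of size at least $k/(8T)$.

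I expect the main difficulty to lie in the last step. Unlike~\Cref{theorem:uniqueness}, where the $\newaction$ threat forced the \emph{total} CCE welfare to be $\geq 2r$, here the analogous deviation only yields $S\geq r(2-\bar{\vx}_{\newaction}-\bar{\vy}_{\newaction})$, which can be far smaller than $2r$ when $\bar{\vx}_{\newaction}+\bar{\vy}_{\newaction}$ is close to $2$ (that is, when $\vmu$ concentrates near $(\newaction,\newaction)$). Ensuring that---after renormalization---the selected product $\hatvx^{(t^\star)}\otimes\hatvy^{(t^\star)}$ still has welfare $\geq 1+\Omega(\gamma T/k)$ on $(\mat{A}+\gamma\mat{I}_n)$, which is the threshold required to invoke the clique-extraction argument, will require tracking $V_x$ and $V_y$ separately so that the ``exploitation'' contribution from the $\pm k/2$ blocks cannot inflate one player's payoff at the other's expense, as well as carefully coupling the choices of $k$, $\gamma$, and $\epsilon$.
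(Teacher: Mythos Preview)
Your construction and the verification of Items~\ref{item:bad-Nash} and~\ref{item:good-CCE} match the paper exactly. The difficulty you flag in Item~\ref{item:gap} is real, and your proposal does not resolve it. Selecting $t^\star$ to maximize the \emph{unnormalized} clique-region welfare $\langle\vx^{(t)}_{[n]},(\mat{A}+\gamma\mat{I}_n)\vy^{(t)}_{[n]}\rangle$ gives you no control over the normalizing factor $p_x^{(t^\star)}p_y^{(t^\star)}=\|\vx^{(t^\star)}_{[n]}\|_1\|\vy^{(t^\star)}_{[n]}\|_1$, so neither the renormalized welfare $\sw(\hatvx^{(t^\star)}\otimes\hatvy^{(t^\star)})\geq 1+\gamma T/k$ nor the coordinate bounds $\hatvx_i^{(t^\star)}\leq 2T/k$ follow from your estimates. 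Indeed, your global bound $S\geq r(2-\bar\vx_{\newaction}-\bar\vy_{\newaction})$ is compatible with a product that carries, say, half its mass on $\newaction$ and half on $[n]$; after renormalization the coordinate bound degrades by a factor $1/p_x^{(t^\star)}$, and your aggregate inequalities do not force $p_x^{(t^\star)}$ close to $1$. ``Tracking $V_x$ and $V_y$ separately'' is not enough to close this.

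The paper's proof takes a different route to overcome precisely this obstacle. First, by combining a per-$t$ upper bound on $\sw$ with the summed CCE-deviation lower bound, it proves a \emph{structural} statement valid for \emph{every} product: $\pr_{(i,j)\sim\vx^{(t)}\otimes\vy^{(t)}}[(i\leq n,j\leq n)\lor(i=j=\newaction)]\geq 1-2\gamma T$. Second, rather than maximizing unnormalized welfare, it selects $t$ by a pigeonhole argument on the per-$t$ inequality relating the deviation lower bound to the actual contribution (their~\eqref{eq:T-ineq}), choosing a $t$ with $\vx^{(t)}_{\newaction}\neq 1$ or $\vy^{(t)}_{\newaction}\neq 1$ for which the inequality goes the ``right'' way. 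Third, from this specific inequality it derives $(1+\gamma)\tfrac{\vx^{(t)}_{\newaction}+\vy^{(t)}_{\newaction}}{2}-\vx^{(t)}_{\newaction}\vy^{(t)}_{\newaction}\leq\gamma$ and applies AM--GM to conclude $\vx^{(t)}_{\newaction}\vy^{(t)}_{\newaction}\leq\gamma^2$. Together with the structural bound this forces $\pr[i\leq n,j\leq n]\geq 1-\gamma^2-2\gamma T$ for the chosen $t$, which is what makes both the renormalized-welfare bound $\sw(\hatvx^{(t)}\otimes\hatvy^{(t)})\geq 2r$ and the coordinate bound $\hatvx_i^{(t)}\leq 2T/k$ go through. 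Your outline is missing all three of these ingredients.
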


Assuming~\Cref{theorem:basic-emb}, we immediately establish \NP-hardness with respect to other natural objectives, besides the social welfare---any reasonable objective would refrain from selecting the former equilibrium (\Cref{item:bad-Nash}), if there is a choice to do so. In particular, we mention below two notable such implications.

\begin{corollary}
    \label{cor:oneutil}
    Computing a uniform $T$-sparse CCE that maximizes the utility of Player $x$ (or Player $y$) is $\NP$-hard with respect to $n \times n$ games for any $T \leq n^{1 - \epsilon}$ and constant $\epsilon > 0$.
\end{corollary}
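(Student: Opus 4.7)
The plan is to apply the construction of \Cref{theorem:basic-emb} directly as a gap-preserving reduction from approximate $\maxclique$. Given an $n$-node graph $G$ and the desired sparsity parameter $T \leq n^{1-\epsilon}$, first pick $k$ so that the resulting gap between ``$G$ has a $k$-clique'' and ``$G$ has no $k/(8T)$-clique'' falls within the $\NP$-hardness regime of \Cref{theorem:clique-hardness}. Concretely, setting $T \defeq \lfloor n^{1-\epsilon} \rfloor$ and $k \defeq \lceil n^{1-\epsilon/3} \rceil$ yields $k/(8T) \geq n^{\epsilon/3}/8$ for all sufficiently large $n$, so distinguishing those two cases is $\NP$-hard. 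Then I invoke \Cref{theorem:basic-emb} (with some small fixed constant in place of its $\epsilon$, say $\epsilon_0 \defeq 1/4$) to produce in polynomial time a $(2n+1)\times(2n+1)$ game $\cG'$.

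Next, I would argue that a polynomial-time algorithm computing a uniform $T$-sparse CCE that maximizes Player $x$'s utility in $\cG'$ would decide this gap $\maxclique$ problem. Let $u^\star_x$ denote the maximum utility of Player $x$ achievable by any uniform $T$-sparse CCE of $\cG'$. In the yes-case, property~\ref{item:good-CCE} of \Cref{theorem:basic-emb} exhibits a uniform $T$-sparse CCE under which Player $x$ obtains at least $1/2$, so $u^\star_x \geq 1/2$. In the no-case, property~\ref{item:gap} says that the only uniform $T$-sparse CCE is the pure Nash equilibrium from property~\ref{item:bad-Nash}, in which Player $x$ obtains exactly $\epsilon_0 = 1/4$. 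A hypothetical polynomial-time solver for the utility-maximization problem thus tells us which side of the gap $G$ lies on, yielding the claimed $\NP$-hardness. The argument for Player $y$'s utility is identical by symmetry of the construction in \Cref{theorem:basic-emb} (or, if needed, by swapping roles of the two players throughout the proof of that theorem).

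The hard part of this corollary is really already encapsulated in \Cref{theorem:basic-emb}: one must have the bad pure Nash equilibrium at utility $\epsilon_0$ present \emph{in addition} to the clique-encoding CCE, together with the statement that no other uniform $T$-sparse CCE exists in the no-case. Given that theorem as a black box, the only remaining step is the parameter bookkeeping above, namely verifying that $k/(8T) = \omega(1)$ lies below the $\NP$-hardness threshold while $k \leq n$, which holds for every constant $\epsilon > 0$ when $T \leq n^{1-\epsilon}$. I would therefore expect the write-up to be short, consisting of one short paragraph invoking \Cref{theorem:basic-emb} and \Cref{theorem:clique-hardness}, plus a one-line remark that the argument for Player $y$ is symmetric.
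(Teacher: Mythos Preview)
Your proposal is correct and matches the paper's approach: the paper states \Cref{cor:oneutil} as an immediate consequence of \Cref{theorem:basic-emb} without spelling out the details, and you have correctly supplied the missing parameter bookkeeping and the invocation of \Cref{theorem:clique-hardness}. One cosmetic remark: your bound $k/(8T) \geq n^{\epsilon/3}/8$ is actually loose (the true value is $\approx n^{2\epsilon/3}/8$), but this only strengthens the gap and does not affect the argument.
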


For the next implication, we recall that the \emph{egalitarian} social welfare is the expected utility of the player who is worse---that objective is often advocated in the context of fairness.\footnote{Throughout this paper, we follow the convention that (social) welfare without any further specification refers to the utilitarian social welfare.}

\begin{corollary}
    \label{cor:egal-wel}
    Computing a uniform $T$-sparse CCE that maximizes the egalitarian social welfare is $\NP$-hard with respect to $n \times n$ games for any $T \leq n^{1 - \epsilon}$ and constant $\epsilon > 0$.
\end{corollary}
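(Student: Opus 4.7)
The plan is to reduce from the approximate \maxclique{} problem by directly invoking the construction guaranteed by \Cref{theorem:basic-emb}. Given an instance $G$ of \maxclique{} and a target clique size $k$, I would construct the $(2n+1) \times (2n+1)$ game $\cG'$ provided by \Cref{theorem:basic-emb}, fixing $\epsilon$ to any small constant, say $\epsilon = 1/4$. The point is that, as measured by the egalitarian social welfare $\min(u_x, u_y)$, the three conditions listed in \Cref{theorem:basic-emb} create a constant-sized gap between ``good'' and ``bad'' equilibria.

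I would then read off the consequences for egalitarian welfare. The pure Nash equilibrium from \Cref{item:bad-Nash} gives each player a utility of $\epsilon$, so its egalitarian welfare equals $\epsilon < 1/4$. The uniform $T$-sparse CCE of \Cref{item:good-CCE}, which exists whenever $\omega(G) \geq k$, gives each player utility at least $1/2$, so its egalitarian welfare is at least $1/2$. Finally, by \Cref{item:gap}, when $\omega(G) < k/(8T)$ the bad Nash equilibrium is the \emph{only} uniform $T$-sparse CCE of $\cG'$, so the optimum egalitarian welfare collapses to $\epsilon$. Consequently, any polynomial-time algorithm that returns a uniform $T$-sparse CCE maximizing the egalitarian social welfare distinguishes the two regimes $\omega(G) \geq k$ and $\omega(G) < k/(8T)$ by simply comparing the returned egalitarian welfare against the threshold $1/4$.

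To conclude, I would appeal to \Cref{theorem:clique-hardness}: distinguishing $\omega(G) \geq k$ from $\omega(G) < k/(8T)$ is $\NP$-hard whenever the multiplicative gap $8T$ is at most $n^{1 - \gamma}$ for some constant $\gamma > 0$. Under the hypothesis $T \leq n^{1-\epsilon}$, this holds by choosing $\gamma = \epsilon/2$ for $n$ large enough. Since the constructed game has $N = 2n+1$ actions per player, the stated sparsity bound $T \leq N^{1-\epsilon}$ (with respect to the game size) follows after an inconsequential adjustment of the constant in the exponent.

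No step of this argument is an obstacle of any kind: \Cref{theorem:basic-emb} has done all the work of producing a game with a clean separation between the ``trivial'' equilibrium and the ``clique-encoding'' equilibria, and the egalitarian welfare is preserved by the same separation because each player individually attains utility at least $1/2$ in the good CCE and only $\epsilon$ in the bad Nash. The corollary is essentially a one-line reading of \Cref{theorem:basic-emb} once one observes that the egalitarian objective, like (utilitarian) welfare, refuses to select the pure Nash equilibrium of \Cref{item:bad-Nash} whenever a better alternative exists.
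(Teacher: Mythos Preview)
Your proposal is correct and matches the paper's approach: the paper does not give a separate proof for this corollary, stating only that it follows immediately from \Cref{theorem:basic-emb} since ``any reasonable objective would refrain from selecting the former equilibrium (\Cref{item:bad-Nash}), if there is a choice to do so,'' and then invoking the inapproximability of \maxclique{} (\Cref{theorem:clique-hardness}). You have correctly spelled out this immediate implication for the egalitarian objective.
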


Moreover, \Cref{theorem:basic-emb} implies hardness of approximating the (utilitarian) welfare, measured here as a ratio. In particular, to make this problem meaningful, the reduction of~\Cref{theorem:basic-emb} makes sure that any joint action profile has nonnegative welfare.

\begin{corollary}
    \label{cor:inapprox}
    Computing a uniform $T$-sparse CCE that approximates the welfare-optimal one to any positive ratio is \NP-hard with respect to $n \times n$ games for any $T \leq n^{1 - \epsilon}$ and constant $\epsilon > 0$.
\end{corollary}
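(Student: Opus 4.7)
The plan is to observe that Theorem~\ref{theorem:basic-emb} provides a \emph{gap-preserving} reduction whose gap can be amplified arbitrarily by tuning the parameter $\epsilon$ of the construction. Concretely, the three properties of Theorem~\ref{theorem:basic-emb} yield exactly two mutually exclusive regimes depending on the underlying graph $G$: either $G$ contains a $k$-clique and there exists a uniform $T$-sparse CCE with per-player utility at least $1/2$ (hence social welfare at least $1$), or $G$ has no clique of size $k/(8T)$ and the \emph{only} uniform $T$-sparse CCE is the pure Nash of Property~\ref{item:bad-Nash}, whose welfare equals $2\epsilon$. Since $\epsilon$ is a free parameter of the construction, the ratio between the optimal welfare in the two cases, which is at least $1/(2\epsilon)$, can be made as large as desired while the reduction remains of polynomial size.

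Given this, I would fix an arbitrary target approximation ratio $\rho \geq 1$ and pick $\epsilon \defeq 1/(4\rho)$. Then in the YES case the welfare-optimal uniform $T$-sparse CCE satisfies $\opt \geq 1$, so any $\rho$-approximation returns a CCE with welfare at least $1/\rho > 2\epsilon$. In the NO case, by Property~\ref{item:gap} the only uniform $T$-sparse CCE is the one from Property~\ref{item:bad-Nash}, and it has welfare exactly $2\epsilon = 1/(2\rho)$; every algorithm is thus forced to return this CCE. Comparing the welfare of the returned CCE against the threshold $2\epsilon$ therefore distinguishes the two cases.

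Composing with the gap version of $\maxclique$ from Theorem~\ref{theorem:clique-hardness}, which is \NP-hard to decide between clique size $\geq n^{1-\epsilon'}$ and $< n^{\epsilon'}$ for any constant $\epsilon' > 0$, we set $k = n^{1 - \epsilon'}$ and pick $T \leq n^{1-\epsilon}$ so that $k/(8T) \geq n^{\epsilon'}$, which is possible for any constant $\epsilon > 0$ by taking $\epsilon'$ small enough relative to $\epsilon$. This yields \NP-hardness of $\rho$-approximating the welfare-optimal uniform $T$-sparse CCE for every constant $\rho \geq 1$, which is precisely the claimed inapproximability within any positive ratio. The only technical subtlety, and the one place where care is needed, is ensuring that the construction of Theorem~\ref{theorem:basic-emb} can be shifted (by a bounded additive constant absorbed into all payoff entries) so that the welfare of every joint action profile remains nonnegative; this is required for the multiplicative ratio to be well-defined but introduces no change to the CCE structure of the game, since adding a constant to all payoffs leaves the equilibrium set invariant and scales the optimal welfare only by an additive term that is negligible compared to the gap between $2\epsilon$ and $1$.
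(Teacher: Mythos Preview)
Your core argument is correct and is exactly the paper's approach: Corollary~\ref{cor:inapprox} is stated as an immediate consequence of Theorem~\ref{theorem:basic-emb}, using precisely the gap you describe between welfare $\geq 1$ in the YES case and welfare $2\epsilon$ in the NO case, with $\epsilon$ chosen small relative to the target ratio.

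However, your last paragraph about shifting the payoffs is both unnecessary and, as stated, would actually break the argument. If you add a constant $c>0$ to every entry, the bad equilibrium's welfare becomes $2\epsilon+2c$ and the good one's becomes at least $1+2c$; the achievable ratio is then at most $(1+2c)/(2c)$, which is \emph{bounded} independently of $\epsilon$. So shifting would cap the inapproximability factor rather than preserve it. The reason no shift is needed is that the construction of Theorem~\ref{theorem:basic-emb} already has nonnegative welfare at every joint action profile: outside the $\mat{A}+\gamma\mat{I}_n$ block and the cell $(\newaction,\newaction)$ the game is zero-sum (so welfare is $0$), and on those two parts the welfare is in $\{0,1,1+\gamma\}$ and $2\epsilon$ respectively. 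The paper flags this explicitly right before the corollary. Once you drop the shifting remark and instead observe that the welfare is nonnegative by construction, your proof is complete and coincides with the paper's.
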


We mention in passing that~\Cref{theorem:basic-emb} resembles to a certain extent a result of~\citet{Conitzer08:New} concerning Nash equilibria. Their reduction is instead based on \textsc{SAT}, and it would be interesting to understand how it interacts with sparse CCE. Before we proceed with the proof of~\Cref{theorem:basic-emb}, we also point out that~\Cref{theorem:uniqueness,theorem:subset} are implied by~\Cref{theorem:basic-emb}, but the reduction in~\Cref{sec:further-impl} is easier to work with because deviating to $\newaction$ secures the same utility no matter how the other player acts (\Cref{fig:new-matrix}); this will no longer be the case for the reduction behind~\Cref{theorem:basic-emb}, complicating the argument.

\begin{proof}[Proof of~\Cref{theorem:basic-emb}]
The reduction is based on game $\cG'$ in the reduction of~\Cref{theorem:uniqueness}, with the only difference that the utility under $(\newaction, \newaction)$ is now taken to be $0 < \epsilon \ll 1$ for both players. As before, $(\newaction, \newaction)$ is a (pure) Nash equilibrium, and hence a $T$-sparse CCE (implying~\Cref{item:bad-Nash}). \Cref{item:good-CCE} also follows readily. It thus suffices to show that when $G$ does not contain a $k/(8T)$-clique, $(\newaction, \newaction)$ is the only uniform $T$-sparse CCE (\Cref{item:gap}).

Let $\vmu$ be a $T$-sparse CCE distinct from $(\newaction, \newaction)$. Considering the deviation of Player $x$ to $\newaction$ and of Player $y$ to $\newaction$, we have
\begin{align}
    \sw(\vmu) &\geq r \left( 1 - \frac{1}{T} \sum_{t=1}^T \vx_{\newaction}^{(t)} + 1 - \frac{1}{T} \sum_{t=1}^T \vy_{\newaction}^{(t)} \right) + \epsilon \left( \frac{1}{T} \sum_{t=1}^T \vx_{\newaction}^{(t)} + \frac{1}{T} \sum_{t=1}^T \vy_{\newaction}^{(t)} \right) \label{align:first-sw-lb} \\
    &\geq \frac{1}{T} \sum_{t=1}^T \left( (1 + \gamma) \left( 1 - \frac{ \vx^{(t)}_{\newaction} + \vy^{(t)}_{\newaction} }{2} \right) + 2 \epsilon \frac{ \vx^{(t)}_{\newaction} + \vy^{(t)}_{\newaction} }{2} \right) - \gamma \frac{1}{T} \sum_{t=1}^T \left( 1 - \frac{\vx^{(t)}_{\newaction} + \vy^{(t)}_{\newaction}}{2} \right). \label{align:sw-T-lb}
\end{align}
Further, since the game is zero-sum when excluding $(\newaction, \newaction)$ and the joint actions in the $n \times n$ submatrix corresponding to $\mat{A}$,
\begin{align}
    \sw(\vmu) &\leq 2 \epsilon \frac{1}{T} \sum_{t=1}^T \vx_{\newaction}^{(t)} \vy_{\newaction}^{(t)} + (1 + \gamma) \frac{1}{T} \sum_{t=1}^T \pr_{i \sim \vx^{(t)}, j \sim \vy^{(t)}} [i \leq n, j \leq n]  \notag \\
    &= \frac{1}{T} \sum_{t=1}^T \left( (1 + \gamma) ( (1 - \vx^{(t)}_{\newaction} ) ( 1 - \vy^{(t)}_{\newaction} ) - \delta^{(t)}) + 
2\epsilon \vx^{(t)}_{\newaction} \vy^{(t)}_{\newaction} \right),\label{align:sw-T-ub}
\end{align}
where we have defined $\delta^{(t)} \defeq \pr_{i \sim \vx^{(t)}, j \sim \vy^{(t)}} [i \leq 2n, j \leq 2n] - \pr_{i \sim \vx^{(t)}, j \sim \vy^{(t)}} [i \leq n, j \leq n]$ for each $t \in [T]$. Combining~\eqref{align:sw-T-lb} and~\eqref{align:sw-T-ub},
\begin{align*}
    \frac{1}{T} \sum_{t=1}^T \left( 2\epsilon \left( \frac{ \vx^{(t)}_{\newaction} + \vy^{(t)}_{\newaction} }{2} - \vx^{(t)}_{\newaction} \vy^{(t)}_{\newaction} \right)  + (1 + \gamma) \left( 1 - \frac{ \vx^{(t)}_{\newaction} + \vy^{(t)}_{\newaction} }{2} - (1 - \vx^{(t)}_{\newaction})(1 - \vy^{(t)}_{\newaction}) + \delta^{(t)} \right) \right) \\ \leq \gamma \frac{1}{T} \sum_{t=1}^T \left( 1 - \frac{\vx^{(t)}_{\newaction} + \vy^{(t)}_{\newaction}}{2} \right).
\end{align*}
Thus, for $\epsilon < \frac{1}{2}$,
\begin{align}
    \frac{1}{T} \sum_{t=1}^T \left( \frac{ \vx^{(t)}_{\newaction} + \vy^{(t)}_{\newaction} }{2} - \vx^{(t)}_{\newaction} \vy^{(t)}_{\newaction} + 
\frac{1}{2} \delta^{(t)} \right) &\leq \frac{\gamma}{2\epsilon + 1 + \gamma} \frac{1}{T} \sum_{t=1}^T \left( 1 - \frac{\vx^{(t)}_{\newaction} + \vy^{(t)}_{\newaction}}{2} \right) \notag \\
    &\leq \gamma \frac{1}{T} \sum_{t=1}^T \left( 1 - \frac{\vx^{(t)}_{\newaction} + \vy^{(t)}_{\newaction}}{2} \right).\notag 
\end{align}
Given that $\vx_{\newaction}^{(t)}, \vy^{(t)}_{\newaction} \in [0, 1]$ and $\delta^{(t)} \geq 0$, it follows that for any $t \in [T]$,
\[
    \frac{\vx_{\newaction}^{(t)} + \vy_{\newaction}^{(t)} }{2} - \vx^{(t)}_{\newaction} \vy^{(t)}_{\newaction} + \frac{1}{2} \delta^{(t)} \leq \gamma T \iff \delta^{(t)} + \vx^{(t)}_{\newaction} ( 1 - \vy^{(t)}_{\newaction} ) + \vy^{(t)}_{\newaction} (1 - \vx^{(t)}_{\newaction}) \leq 2 \gamma T.
\]
That is, we have shown that for any $t \in [T]$,
\begin{equation}
    \label{eq:diag-strong}
    \pr_{i \sim \vx^{(t)}, j \sim \vy^{(t)}} [ (i \leq n, j \leq n) \lor (i = \newaction, j = \newaction)] \geq 1 - 2 \gamma T. 
\end{equation}
Now, from the assumption that $\vmu$ is not supported only on $(\newaction, \newaction)$, it follows that there exists $t \in [T]$ such that $\vx^{(t)}_{\newaction} \neq 1 \lor  \vy^{(t)}_{\newaction} \neq 1$. In particular, by~\eqref{align:first-sw-lb}, there exists such $t$ with the property that
\begin{equation}
    \label{eq:T-ineq}
    2r \left( 1 - \frac{\vx^{(t)}_{\newaction} + \vy^{(t)}_{\newaction}}{2} \right) + 2\epsilon \frac{\vx^{(t)}_{\newaction} + \vy^{(t)}_{\newaction}}{2} \leq 2 \epsilon \vx^{(t)}_{\newaction} \vy^{(t)}_{\newaction} + \pr_{i \sim \vx^{(t)}, j \sim \vy^{(t)}}[i \leq n, j \leq n] \sw(\hatvx^{(t)} \otimes \hatvy^{(t)}),
\end{equation}
where again we have defined $\Delta^n \ni \hatvx_{i \leq n}^{(t)} =  \nicefrac{\vx^{(t)}_{i \leq n}}{\sum_{i=1}^n \vx_i^{(t)}}$ and $\Delta^n \ni \hatvy_{j \leq n}^{(t)} =  \nicefrac{\vy^{(t)}_{j \leq n}}{\sum_{j=1}^n \vy_j^{(t)}}$; any $t$ such that $\vx^{(t)}_{\newaction} \neq 1 \lor  \vy^{(t)}_{\newaction} \neq 1$ that satisfies~\eqref{eq:T-ineq} must also satisfy $\sum_{i=1}^n \vx_i^{(t)} > 0$ and $\sum_{j=1}^n \vy_j^{(t)} > 0$. Given that $( \vx_{\newaction}^{(t)} + \vy^{(t)}_{\newaction} )/2 \geq \vx^{(t)}_{\newaction} \vy^{(t)}_{\newaction}$, we have
\[
    2r \left( 1 - \frac{\vx^{(t)}_{\newaction} + \vy^{(t)}_{\newaction}}{2} \right) \leq \pr_{i \sim \vx^{(t)}, j \sim \vy^{(t)}}[i \leq n, j \leq n] \sw(\hatvx^{(t)} \otimes \hatvy^{(t)}).
\]
Further,
\[
  \pr_{i \sim \vx^{(t)}, j \sim \vy^{(t)}}[i \leq n, j \leq n] \leq ( 1 - \vx_{\newaction}^{(t)} )  ( 1 - \vy_{\newaction}^{(t)}) \leq \left( 1 - \frac{\vx_{\newaction}^{(t)} + \vy_{\newaction}^{(t)} }{2} \right) \neq 0.
\]
We thus conclude that $\sw(\hatvx^{(t)} \otimes \hatvy^{(t)}) \geq 2r = 1 + \gamma T/k$. We will now argue that $\hatvx_i^{(t)}, \hatvy_j^{(t)} \leq 2/\ell$ for any $i, j \in [n]$. Continuing from~\eqref{eq:T-ineq}, we bound
\[
    \left( 1 + \gamma \frac{T}{k} \right) \left( 1 - \frac{\vx^{(t)}_{\newaction} + \vy^{(t)}_{\newaction}}{2} \right) + 2\epsilon \frac{\vx^{(t)}_{\newaction} + \vy^{(t)}_{\newaction}}{2} \leq 2 \epsilon \vx^{(t)}_{\newaction} \vy^{(t)}_{\newaction} + ( 1 + \gamma) (1 - \vx^{(t)}_{\newaction} ) (1 - \vy^{(t)}_{\newaction}).
\]
Thus,
\[
    (1 + \gamma) \frac{\vx_{\newaction}^{(t)} + \vy_{\newaction}^{(t)} }{2} - \vx^{(t)}_{\newaction} \vy^{(t)}_{\newaction} \leq \gamma,
\]
By applying the AM-GM inequality, we have
\[
    (1 + \gamma) \sqrt{ \vx_{\newaction}^{(t)} \vy_{\newaction}^{(t)} } - \vx_{\newaction}^{(t)} \vy^{(t)}_{\newaction} \leq \gamma \iff \left( 1- \sqrt{ \vx_{\newaction}^{(t)} \vy_{\newaction}^{(t)} } \right) \left( \sqrt{ \vx_{\newaction}^{(t)} \vy_{\newaction}^{(t)} } - \gamma \right) \leq 0 \iff \sqrt{ \vx_{\newaction}^{(t)} \vy_{\newaction}^{(t)} } \leq \gamma,
\]
where we used the fact that $\vx^{(t)}_{\newaction} \neq 1 \lor  \vy^{(t)}_{\newaction} \neq 1$. Combining with~\eqref{eq:diag-strong}, it follows that $\pr_{i \sim \vx^{(t)}, j \sim \vy^{(t)}} [ i \leq n, j \leq n] \geq 1 - \gamma^2 - 2 \gamma T$. This implies that $\sum_{i=1}^n \vx_i^{(t)} \geq 1 - \gamma^2 - 2 \gamma T$ and $\sum_{j=1}^n \vy_j^{(t)} \geq 1 - \gamma^2 - 2 \gamma T$. Again by~\eqref{eq:diag-strong}, it follows that the expected utility of each player under $\vmu$ is at most $ (1 + \gamma) /2 + \gamma k T$. Moreover, since $\vmu$ is assumed to be a CCE, by considering the deviation of Player $y$ to column $n + i$, we have that for any $i \in [n]$,
\[
    \frac{1 + \gamma + 2 \gamma k T }{2} \geq \frac{k}{2} \frac{1}{T} \sum_{t=1}^T \vx_i^{(t)} - r \frac{1}{T} \sum_{t=1}^T \vx^{(t)}_{\newaction} \geq \frac{k}{2T} \vx_i^{(t)} - \frac{T - (1 - \gamma^2 - 2 \gamma T)}{T} r \implies \hatvx_i^{(t)} \leq \frac{2 T}{k},
\]
when $\gamma$ is sufficiently small. Finally, we let $S \defeq \{i \in [n]: \hatvx^{(t)}_i, \hatvy^{(t)}_i \geq 1/(16 \ell) \}$, where $\ell \defeq k/T$. As in the proof of~\Cref{theorem:uniqueness}, it follows that $|S| \geq \ell/8$, and $S$ is a clique on $G$. As a result, when $G$ does not contain a clique of size $k/(8T)$, $(\newaction, \newaction)$ is the only uniform $T$-sparse CCE. This completes the proof.
\end{proof}

An interesting question, which is left for future work, is whether the construction of~\Cref{theorem:basic-emb} can be extended to hold for non-uniform sparse CCE.

\section{Hardness for low-precision equilibria}
\label{sec:low-precision}

So far, we have established lower bounds for $\sparsecce$ in the regime where both the equilibrium and the optimality gap are $\poly (1/n)$. No-regret learning is often employed in the low-precision regime, where the precision scales as $1/\polylog n$. We will show here how to obtain hardness results in the latter regime, establishing~\Cref{theorem:main2} presented earlier in the introduction.

In that setting, we first remark that, even for $1$-sparse CCE (that is, Nash equilibria), \citet{Lipton03:Playing} famously gave a quasipolynomial-time algorithm based on enumerating all possible strategies of support $O(\log n/\epsilon^2)$. It is thus unlikely---under our current understanding of complexity---that the induced problem is \NP-hard. Instead, we will rely on~\Cref{conj:planted} pertaining to the planted clique problem. 

As before, our first lower bound (\Cref{theorem:plantedclique}) makes use of~\Cref{alg:maxclique}, but with $G$ now being the input of the planted clique problem (presented in~\Cref{sec:max-planted}). We will use the following lemma, which was observed in the form below by~\citet{Hazan11:How}.

\begin{lemma}[\nakedcite{Hazan11:How}]
    \label{lemma:smallk}
    Suppose that there is randomized polynomial-time algorithm that given a planted clique problem with $k \geq c \log n$, for a constant $c > 0$, finds a clique of size $100 \log n$ with probability at least $1/2$. Then, there is a randomized polynomial-time algorithm that solves the planted clique problem for any $k \geq c_0 \log n$, for a constant $c_0 > 0$, with high probability.
\end{lemma}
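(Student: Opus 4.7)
The plan is to use the given weak algorithm $\mathcal{A}$ as a black box to produce a small seed clique that is almost entirely contained in the planted clique $K$, and then grow this seed to recover $K$ in full via thresholded degree tests. The principal technical concern is that $\mathcal{A}$'s output depends on the entire graph $G$, so naively applying Chernoff bounds to the degree tests would entangle them with $\mathcal{A}$'s choice; we would handle this via an up-front random partitioning of the vertex set that funnels $\mathcal{A}$ into a subgraph whose edges are disjoint from those needed downstream.

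Concretely, fix $c_0$ sufficiently large (e.g., $c_0 \geq 4c$ plus a large absolute constant) and randomly partition $V$ into four equal parts $V_1, V_2, V_3, V_4$. A Chernoff bound gives $|K \cap V_i| \geq k/8 \geq c \log |V_i|$ for every $i$ with high probability, so $G[V_1]$ itself is a planted-clique instance to which $\mathcal{A}$ applies. Running $\mathcal{A}$ on $G[V_1]$ for $O(\log n)$ independent rounds and verifying each output amplifies the success probability to $1 - n^{-\Omega(1)}$, producing a clique $C \subseteq V_1$ of size $100 \log |V_1|$. Since $G[V_1 \setminus K]$ is Erd\H{o}s--R\'enyi of density $1/2$, its maximum clique has size $(2 + o(1)) \log n$ with high probability, and hence $|C \cap K| \geq (98 - o(1)) \log n$.

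Next, for each $v \in V_2$ let $d_v \defeq |N(v) \cap C|$. If $v \in K$, then $v$ is adjacent to every vertex of $C \cap K$, so $d_v \geq 98 \log n$; if $v \in V_2 \setminus K$, the edges from $v$ to $C$ are i.i.d.\ $\mathrm{Bernoulli}(1/2)$, independent of the randomness that produced $C$ (which lives entirely inside $G[V_1]$), so by Chernoff $d_v \leq 70 \log n$ except with probability $n^{-\Omega(1)}$. A union bound then implies $\hat{K}_2 \defeq \{v \in V_2 : d_v \geq 80 \log n\} = K \cap V_2$ with high probability. The same degree test, applied to $V_3$ against the seed $C$, yields $\hat{K}_3 = K \cap V_3$; applied to $V_4$ against the seed $\hat{K}_2 \subseteq V_2$ (using the fresh edges between $V_4$ and $V_2$), yields $\hat{K}_4 = K \cap V_4$; and finally applied to $V_1$ against the seed $\hat{K}_4 \subseteq V_4$ (using the fresh edges between $V_1$ and $V_4$), yields $\hat{K}_1 = K \cap V_1$. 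The output $\bigcup_i \hat{K}_i$ equals $K$ with high probability.

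The main obstacle is verifying that the four degree tests rely on pairwise disjoint sets of edges, so that the Chernoff calculation in each step conditions only on randomness independent of the quantities being concentrated. The partition above is designed precisely to make this bookkeeping transparent: $\mathcal{A}$ touches only edges inside $V_1$; the first three degree tests use only edges across $V_1 \times V_2$, $V_1 \times V_3$, and $V_2 \times V_4$, respectively; and the final test uses only edges across $V_1 \times V_4$. Beyond this, everything reduces to standard concentration inequalities together with the elementary fact that the maximum clique of $G(m, 1/2)$ has size $(2 + o(1)) \log m$ with high probability.
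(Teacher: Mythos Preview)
The paper does not give its own proof of this lemma; it is quoted verbatim from \citet{Hazan11:How} and used as a black box. Your sketch is a faithful reconstruction of the standard argument behind that result: run the weak finder on an induced subgraph to obtain a seed clique that is almost entirely inside $K$ (using that the clique number of $G(m,1/2)$ is $(2+o(1))\log_2 m$), then grow the seed by degree thresholding, with the four-way split guaranteeing that each thresholding step queries edges disjoint from everything conditioned on so far. The edge-disjointness bookkeeping you lay out is exactly the right way to make the Chernoff applications legitimate.

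One step deserves more care. Your amplification ``running $\mathcal{A}$ on $G[V_1]$ for $O(\log n)$ independent rounds'' presumes that the $1/2$ success guarantee is over $\mathcal{A}$'s internal coins for a \emph{fixed} input. Under the usual reading---probability jointly over the random instance and the coins---repeating on the same $G[V_1]$ need not help (take $\mathcal{A}$ deterministic, succeeding on exactly half of all instances). This is not a fatal gap: your downstream degree tests succeed with high probability whenever the seed is good, so the whole pipeline recovers $K$ with probability at least $1/2 - o(1)$; boosting search for planted clique from constant probability to $1 - n^{-\Omega(1)}$ is a separate, standard maneuver (and, in any case, the hardness assumption is widely believed already for constant success probability). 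Just be explicit that the amplification is not as immediate as written.
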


Consider now any sparsity parameter $T = \polylog n$, and define $k = 100 T \log n$. By the guarantee given in~\Cref{subsec:prooftheorem} concerning~\Cref{alg:maxclique}, we know that we can then find a clique of size $100 \log n$ by invoking an oracle to $\sparsecce(\cG, T, (\log n)^{-c}, (\log n)^{-c})$, where $\cG = \cG(G)$ is defined as in~\Cref{line:game} and $c$ is a sufficiently large constant (that depends on the sparsity). In turn, \Cref{lemma:smallk} above implies that we can solve the planted clique problem (w.h.p.). We thus arrive at the following.

\begin{theorem}
    \label{theorem:plantedclique}
    Assuming that~\Cref{conj:planted} holds, $\sparsecce(\cG, T, (\log n)^{-c}, (\log n)^{-c})$ requires time $n^{\Omega(\log n)}$ for any sparsity $T = \polylog n$ and some constant $c = c(T)$.
\end{theorem}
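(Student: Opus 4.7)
The plan is to reduce $\plantedclique$ at planted clique size $k = (\log n)^{O(1)}$ to $\sparsecce(\cG, T, (\log n)^{-c}, (\log n)^{-c})$ by running \Cref{alg:maxclique} on a graph drawn from the planted-clique distribution, and then to bootstrap using \Cref{lemma:smallk} to contradict \Cref{conj:planted}. Given a target sparsity $T = (\log n)^{O(1)}$, I would choose the reduction parameter $k^{\star} \defeq \kappa T \log n$ for a sufficiently large absolute constant $\kappa$ (e.g.\ $\kappa \geq 200$), so that a $2T$-approximation to $\maxclique$ on a graph containing a planted clique of size $k^{\star}$ yields a clique of size at least $k^{\star}/(2T) \geq 100 \log n$, matching the hypothesis of \Cref{lemma:smallk}.

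Only the iteration $k = k^{\star}$ of \Cref{alg:maxclique} is relevant. For that iteration the game $\cG(k^{\star}, \gamma)$ of~\eqref{eq:game} has $k^{\star} = (\log n)^{O(1)}$; the choice of $\gamma$ in \Cref{line:gamma} is then of order $1/(\log n)^{O(1)}$; and hence the gaps $\epseq = k^{\star} \gamma/2$ and $\epswel = \gamma^2 T/2$ are both of order $(\log n)^{-c}$ for some constant $c = c(T)$. So it is precisely the problem $\sparsecce(\cG, T, (\log n)^{-c}, (\log n)^{-c})$ that the reduction queries, with $\cG$ an $O(n) \times O(n)$ game constructible in polynomial time from $G$.

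The main argument then assumes, for contradiction, an algorithm for $\sparsecce$ under the stated precision that runs in time $n^{o(\log n)}$. Combined with the polynomial-time outer work of \Cref{alg:maxclique} and the guarantee of \Cref{theorem:algorithm1}, this produces in time $n^{o(\log n)}$ a clique of size at least $100 \log n$ in the planted graph $G \sim G(n, \tfrac12, k^{\star})$ whenever $k^{\star} \geq \kappa T \log n$. \Cref{lemma:smallk} then amplifies this into a solver for $\plantedclique$ at all $k = (\log n)^{O(1)}$ within the same time budget, contradicting \Cref{conj:planted}.

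The main obstacle I anticipate is verifying that the amplification of \Cref{lemma:smallk} carries over from the polynomial regime in which it is literally stated to the quasipolynomial regime needed here; I would inspect its proof to confirm that the reduction is black-box in the running time of the clique-finding subroutine, so that an $n^{o(\log n)}$ clique-finder indeed yields an $n^{o(\log n)}$ solver for $\plantedclique$. Beyond that, the remaining effort is routine bookkeeping with the precision parameters of \Cref{subsec:prooftheorem}, in particular checking that the resulting exponent in $\gamma$ can be absorbed into the dependence $c = c(T)$ since $T, k^{\star}, \gamma^{-1}$ are all $(\log n)^{O(1)}$.
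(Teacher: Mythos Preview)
Your proposal is correct and follows essentially the same route as the paper: choose $k = \Theta(T \log n)$, invoke the guarantee of \Cref{theorem:algorithm1} at that single value of $k$ to extract a clique of size $\geq 100 \log n$ via an oracle call to $\sparsecce$ with precision $(\log n)^{-c}$, and then appeal to \Cref{lemma:smallk}. You even flag the one subtlety the paper glosses over, namely that \Cref{lemma:smallk} is literally stated for polynomial-time subroutines and must be checked to be black-box in the running time; this is the right thing to note, and the paper's own argument implicitly relies on the same observation.
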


Finally, we show hardness in the regime where the equilibrium and the optimality gap are both constants using a different approach. Namely, we extend the reduction of~\citet{Hazan11:How}, which was developed with Nash equilibria in mind. The main result is recalled below.

\begin{restatable}{theorem}{constant}
    \label{theorem:constant}
    Assuming that~\Cref{conj:planted} holds, $\sparsecce(\cG, T, c, c)$ requires time $n^{\Omega(\log n)}$ for any sparsity $T = O(1)$ and some constant $c = c(T)$.
\end{restatable}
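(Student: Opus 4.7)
The plan is to extend the reduction of~\citet{Hazan11:How}---which establishes planted-clique-based hardness for $1$-sparse CCE (optimal Nash equilibria) in the constant-precision regime---to $T$-sparse CCE for any constant $T$. Given a planted clique instance $G = G(n, 1/2, k)$ with $k = \polylog n$, I would construct a two-player game $\cG$ whose payoff matrices embed the adjacency matrix of $G$ together with penalty gadgets; in contrast to~\eqref{eq:game}, all entries must be of constant magnitude so that the welfare gap between YES and NO instances survives as an absolute constant. This precludes reusing~\Cref{alg:maxclique} verbatim, since the $\gamma$ parameter there is forced to be inverse-polynomial in $n$.

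The completeness direction mirrors~\Cref{lemma:optimalCCE}: given a $k$-clique in $G$, I would exhibit a uniform $T$-sparse CCE obtained by partitioning the clique into $T$ disjoint groups and taking each product to be the uniform distribution over the corresponding group, attaining some welfare $W_{1}(T)$. The soundness direction is the delicate part: I need to argue that if $G$ contains no clique of size $\Theta(\log n)$, then no $T$-sparse CCE of $\cG$ attains welfare exceeding $W_{1}(T) - c$ for some constant $c = c(T) > 0$. The key step, in analogy with~\Cref{lemma:small-prob,lemma:induced-clique} but now in the constant-precision regime, is to apply pigeonhole over the $T$ components of the mixture to isolate a single product distribution with weight at least $1/T$; a combination of the equilibrium constraint (which enforces anti-concentration of its marginals) and a Cauchy--Schwarz argument then forces this product's support to induce a clique of size $\Theta(\log n)$ in $G$.

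Combining this clique-extraction procedure with~\Cref{lemma:smallk}, any polynomial-time solver for $\sparsecce(\cG, T, c, c)$ would recover a clique of size $100 \log n$ in $G(n, 1/2, k)$ with probability at least $1/2$, which under~\Cref{conj:planted} rules out such an algorithm and thereby yields the claimed $n^{\Omega(\log n)}$ lower bound.

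The main obstacle will be calibrating the construction so that the welfare gap $c(T)$ remains a positive constant independent of $n$ while simultaneously accommodating the flexibility afforded by $T$ product distributions. The Hazan--Krauthgamer amplification keeps all payoffs bounded and exhibits an $\Omega(1)$ gap for Nash; that gap must be preserved after averaging welfare across $T$ products and absorbing the constant-size equilibrium slack. I expect $c(T)$ to scale inverse-polynomially with $T$, which is tolerable whenever $T = O(1)$ but degenerates rapidly as $T$ grows---explaining why this hardness result is restricted to constant sparsity, whereas the $\polylog n$-sparse regime must settle for the weaker $1/\polylog n$ precision handled by~\Cref{theorem:plantedclique}.
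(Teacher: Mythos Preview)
Your high-level plan---extend the Hazan--Krauthgamer reduction to $T$-sparse CCE by pigeonholing onto a product with weight $\alpha^{(t)} \geq 1/T$ and then extracting a logarithmic clique---matches the paper's strategy. Two points are worth flagging, one cosmetic and one substantive.

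\emph{Completeness.} Partitioning the planted clique into $T$ groups (mirroring~\Cref{lemma:optimalCCE}) is unnecessary here. The Hazan--Krauthgamer construction has no diagonal bonus $\gamma \mat{I}_n$, so the welfare of the uniform distribution over the clique is exactly $1$ regardless of sparsity; a single Nash equilibrium already certifies a $T$-sparse CCE with welfare $1$. Your partitioning would still work, but it buys nothing and the quantity $W_1(T)$ does not actually depend on $T$.

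\emph{Soundness.} The step ``a Cauchy--Schwarz argument then forces this product's support to induce a clique'' is where your analogy with~\Cref{lemma:induced-clique} breaks down. That lemma extracted a clique directly because the $\gamma$-diagonal forced $\sum_i \hatvx_i \hatvy_i \geq 1/\ell$, hence $\|\hatvx - \hatvy\|$ small, hence $\hatvx \approx \hatvy$ supported on a common clique. In the constant-precision construction there is no $\gamma$ term, so nothing pins $\hatvx$ close to $\hatvy$; the equilibrium constraint (enforced via the random matrix $\mat{B}$ with parameter $M = \Theta(T)$) only gives anti-concentration of each marginal over subsets of size $c_2 \log n$. What you actually get from the high-welfare product is a pair of sets $S, T \subseteq [n]$ of size $\Theta(\log n)$ with \emph{bipartite density} $\geq 3/5$---not a clique. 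Converting such a dense bipartite pair into a $100 \log n$ clique requires a separate probabilistic lemma about $G(n, 1/2, k)$ (this is~\Cref{lemma:Hazan-dens}, imported from~\citet{Hazan11:How}), which exploits that the random part of the graph has density concentrated around $1/2$. Without this piece the soundness argument does not close.
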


We dedicate the rest of this section to the proof of~\Cref{theorem:constant}.

\subsection{Proof of Theorem~\ref{theorem:constant}}

The proof mostly follows the argument of~\citet{Hazan11:How}, but with certain modifications. Similarly to~\eqref{eq:game}, we consider a two-player game defined by the payoff matrices
\begin{equation}
    \label{eq:Hazangame}
    \R^{N \times N} \ni \mat{R} \defeq 
    \frac{1}{2}  
    \begin{pmatrix}
       \mat{A} & - \mat{B}^\top \\
        \mat{B} & \mat{0}_{(N - n) \times (N - n)}
    \end{pmatrix}
    \text{ and }
    \R^{N \times N} \ni \mat{C} \defeq
    \frac{1}{2} 
    \begin{pmatrix}
       \mat{A} & \mat{B}^\top \\
        - \mat{B} & \mat{0}_{(N-n) \times (N-n)}
    \end{pmatrix},
\end{equation}
where $N \gg n$ is sufficiently large (to be specified in the sequel), $\mat{A}$ is again the \emph{adjacency} matrix (with $1$ on the diagonal entries) of $G$---the input of the planted clique problem, and $\mat{B} \in \R^{(N - n) \times n}$ is defined by selecting each entry independently as
\[
    \mat{B}_{i, j} =
    \begin{cases}
        M & \text{with probability } \frac{3}{4} \frac{1}{M}, \text{ and }\\
        0 & \text{otherwise}.
    \end{cases}
\]
Here, $M$ is a sufficiently large parameter that will be related to the sparsity (see~\Cref{lemma:bipartite}). In what follows, we recall that $k \in \N$ represents the size of the planted clique. We first note that $(\mat{R}, \mat{C})$ in~\eqref{eq:Hazangame} admits, with high probability, a Nash equilibrium with welfare $1$. This proof of completeness is similar to the one by~\citet{Hazan11:How}.

\begin{lemma}
    Consider $(\mat{R}, \mat{C})$ with $k \geq 96 M^2 \log N $. Then, with high probability, there is a Nash equilibrium with welfare $1$.
\end{lemma}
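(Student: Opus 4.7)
The plan is to exhibit an explicit Nash equilibrium in which both players mix uniformly over the planted $k$-clique $K\subseteq[n]$. Let $\vu_K\in\Delta([N])$ denote the distribution placing weight $1/k$ on each element of $K$ and $0$ elsewhere. First I would compute the welfare at $(\vu_K,\vu_K)$: the $\pm\mat{B}^\top$ and $\pm\mat{B}$ blocks in $\mat{R}+\mat{C}$ cancel identically, so $\sw(\vu_K\otimes\vu_K)=\langle\vu_K,\mat{A}\vu_K\rangle$; since $K$ is a clique and $\mat{A}$ is $1$ on the diagonal, the restriction of $\mat{A}$ to $K\times K$ is the all-ones matrix, and this inner product is exactly $1$.

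The bulk of the argument is the Nash check. Direct inspection of~\eqref{eq:Hazangame} shows that $\mat{C}^\top=\mat{R}$, so the game is symmetric under swapping the two players; a deviation by Player~$y$ to column $i$ yields utility $(\mat{C}^\top\vu_K)_i=(\mat{R}\vu_K)_i$, so it suffices to bound $(\mat{R}\vu_K)_i\leq\tfrac{1}{2}$ for every $i\in[N]$---the current utility of each player at $(\vu_K,\vu_K)$. For $i\leq n$ this holds deterministically: $(\mat{R}\vu_K)_i=\tfrac{1}{2k}|\{j\in K:\{i,j\}\in E\text{ or }i=j\}|\leq\tfrac{1}{2}$, with equality precisely when $i\in K$. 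The nontrivial case is $i>n$, where $(\mat{R}\vu_K)_i=\tfrac{1}{2k}\sum_{j\in K}\mat{B}_{i-n,j}$, so what remains is to show $\sum_{j\in K}\mat{B}_{i-n,j}\leq k$ simultaneously for all such $i$, with high probability over the independent entries of $\mat{B}$.

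This reduces to a standard Chernoff estimate. For fixed $i>n$, $\sum_{j\in K}\mat{B}_{i-n,j}/M$ is a sum of $k$ i.i.d.\ Bernoulli$(3/(4M))$ random variables with mean $\mu=3k/(4M)$, and the threshold $k/M=(1+\tfrac{1}{3})\mu$ sits a constant multiplicative factor above the mean. A multiplicative Chernoff bound therefore gives per-row failure probability at most $\exp(-\Theta(\mu))=\exp(-\Theta(k/M))$, which under the hypothesis $k\geq 96M^2\log N$ is polynomially small in $N$. A union bound over the $N-n$ candidate rows (and, by the symmetry $\mat{C}^\top=\mat{R}$, simultaneously over Player~$y$'s deviation check) yields overall failure probability $O(1/N)$, which is high probability in the sense of~\Cref{sec:max-planted}.

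The main---and really the only---technical obstacle is tuning this concentration inequality: one must verify that the multiplicative slack $4/3$ together with the hypothesis $k\geq 96M^2\log N$ is wide enough to absorb the $\log N$ loss from the union bound over the rows of $\mat{B}$. Everything else---the welfare computation, the deterministic bound for rows $i\leq n$, and the reduction from Player~$y$'s check to Player~$x$'s via $\mat{C}^\top=\mat{R}$---is immediate from the block structure of~\eqref{eq:Hazangame}.
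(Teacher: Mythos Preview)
Your proposal is correct and follows essentially the same route as the paper: both exhibit the uniform distribution over the planted clique, verify welfare $1$, dispose of deviations to $i\leq n$ deterministically, and handle $i>n$ by a concentration-plus-union-bound argument over the rows of $\mat{B}$. The only difference is cosmetic: the paper invokes Hoeffding's inequality with $C=M$ and $t=1/4$ to get per-row failure probability $e^{-k/(32M^2)}$, whereas you use a multiplicative Chernoff bound on the Bernoulli sum to get $e^{-\Theta(k/M)}$; both comfortably absorb the $\log N$ from the union bound under the hypothesis $k\geq 96M^2\log N$.
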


\begin{proof}
    We consider the strategy profile $(\vx, \vy)$ in which both players play uniformly at random over the nodes corresponding to the planted clique. It is clear that $\sw(\vx \otimes \vy) = 1$. It thus suffices to show that $(\vx, \vy)$ is indeed a Nash equilibrium of~\eqref{eq:Hazangame}. By symmetry, we can analyze only deviations of Player $x$. When $i \leq n$, it is clear that there is no profitable deviation for Player $x$ as that player obtains $1/2$ under $(\vx, \vy)$, which is the highest utility attainable in the $n \times n$ submatrix corresponding to $\mat{A}$. Let us consider thus some $i > n$. The obtained utility then is half of the average over $k$ entries of $\mat{B}$, which is highly concentrated around its the mean value. In particular, we will use the following Chernoff–Hoeffding bound~\citep{Hoeffding63:Probability}. Let $X_1, \dots, X_k$ be independent random variables such that $|X_1|, \dots, |X_k| \leq C$ (almost surely). If $\bar{X}$ denotes their average, it holds that $\pr [ \bar{X} - \E[ \bar{X} ] \geq t] \leq e^{ - \frac{k t^2}{2C^2} }$ for any $t > 0$. In our case, if we set $C = M$ and $t = 1/4$, we find that the probability that $i$ is profitable is upper bounded by $e^{- \frac{k}{32 M^2}}$. Thus, by a union bound, the probability that there is a profitable deviation among $i > n$ is upper bounded by $(N-n) e^{- \frac{k}{32 M^2}} \leq 1/n^2$, where we used the assumption that $k \geq 96 M^2 \log N$. This completes the proof.
\end{proof}

We proceed by establishing soundness. The following lemma mirrors~\Cref{lemma:bad-mass} that was shown earlier, and follows from the same observation.

\begin{lemma}
    \label{lemma:bad-mass-new}
    Given a $T$-sparse $\epseq$-CCE of $(\mat{R}, \mat{C})$ with welfare $1 - \epswel$, we can compute in polynomial time an $(\epseq+ 2M \epswel)$-CCE of $(\mat{R}, \mat{C})$ with welfare at least $1 - \epswel$ and supported only on $[n] \times [n]$.
\end{lemma}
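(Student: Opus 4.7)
The approach mirrors that of \Cref{lemma:bad-mass}: take the input $T$-sparse $\epseq$-CCE $\vmu = \sum_{t=1}^T \alpha^{(t)} (\vx^{(t)} \otimes \vy^{(t)})$ and, for each $t$, zero out the mass on rows/columns indexed by $\{n+1, \dots, N\}$ and renormalize, obtaining $\hatvx^{(t)}, \hatvy^{(t)} \in \Delta^{N}$ supported on $[n]$ with $\hatvx_{i}^{(t)} = \vx^{(t)}_i / \sum_{i' \leq n} \vx^{(t)}_{i'}$ (and analogously for $\hatvy^{(t)}$, with an arbitrary convention when the denominator vanishes). Setting $\hatvmu \defeq \sum_{t=1}^T \alpha^{(t)} (\hatvx^{(t)} \otimes \hatvy^{(t)})$ gives a distribution supported only on $[n] \times [n]$; since the renormalization only transfers probability mass into the $n \times n$ block, whose nonnegative payoff matrix $\mat{A}$ strictly dominates any other block entrywise in welfare, we immediately get $\sw(\hatvmu) \geq \sw(\vmu) \geq 1 - \epswel$.

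\textbf{Controlling the total displaced mass.} Define $\delta^{(t)} \defeq 1 - \bigl(\sum_{i \leq n} \vx^{(t)}_i\bigr)\bigl(\sum_{j \leq n} \vy^{(t)}_j\bigr)$. The off-diagonal blocks of $(\mat{R} + \mat{C})$ in~\eqref{eq:Hazangame} cancel exactly (the game is zero-sum once restricted there), so $\sw(\vx^{(t)} \otimes \vy^{(t)}) \leq 1 - \delta^{(t)}$, hence
\begin{equation*}
    1 - \epswel \leq \sw(\vmu) = \sum_{t=1}^T \alpha^{(t)} \sw(\vx^{(t)} \otimes \vy^{(t)}) \leq 1 - \sum_{t=1}^T \alpha^{(t)} \delta^{(t)},
\end{equation*}
so the total displaced mass is bounded by $\sum_{t=1}^T \alpha^{(t)} \delta^{(t)} \leq \epswel$. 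This is the exact analogue of the $\leq \gamma$ bound in~\Cref{lemma:bad-mass}; the only difference is that here the slack is the welfare deficit $\epswel$ rather than the additive perturbation $\gamma$ baked into the diagonal.

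\textbf{Translating into a CCE guarantee.} Repeating the two-sided estimate from the proof of~\Cref{lemma:bad-mass} verbatim, with $\mat{A}$ replaced by the (nonnegative, entrywise bounded by $1$) adjacency matrix and $k$ replaced by $M$ (the maximum magnitude of any entry of $\mat{B}$, which is the largest absolute entry of the off-diagonal blocks of~\eqref{eq:Hazangame}), one obtains, for every deviation $\vx \in \Delta^N$ of Player $x$,
\begin{equation*}
    \sum_{t=1}^T \alpha^{(t)} \langle \vx - \hatvx^{(t)}, \mat{R} \hatvy^{(t)} \rangle \leq \epseq + 2 M \sum_{t=1}^T \alpha^{(t)} \delta^{(t)} \leq \epseq + 2 M \epswel,
\end{equation*}
and symmetrically for Player $y$. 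Here the factor of $2M$ absorbs two contributions, each of magnitude $M \delta^{(t)}$: bounding $\langle \vx^{(t)}, \mat{R} \vy^{(t)} \rangle \leq \langle \hatvx^{(t)}, \mat{R} \hatvy^{(t)} \rangle + (M/2) \delta^{(t)}$ via the entrywise bound on the off-diagonal block, and controlling the discrepancy $\|\vy^{(t)} - \hatvy^{(t)}\|_1 = 2(1 - \sum_{j \leq n} \vy^{(t)}_j) \leq 2\delta^{(t)}$ against the best deviation.

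\textbf{Obstacles.} There is essentially no genuine obstacle: the construction is the same projection used in~\Cref{lemma:bad-mass}, and the only bookkeeping change is that the ``badness'' of an off-diagonal entry is now $M$ (the largest entry of $\mat{B}$) rather than the parameter $k$ used in~\eqref{eq:game}, and the slack driving the displaced mass bound is $\epswel$ instead of $\gamma$. The randomness in $\mat{B}$ plays no role in this lemma---it only matters for the later step showing that the projected CCE yields structural information about the planted clique---so the argument is deterministic and runs in polynomial time.
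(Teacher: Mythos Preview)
Your proposal is correct and follows essentially the same approach as the paper: define $\delta^{(t)} = 1 - (\sum_{i\leq n}\vx_i^{(t)})(\sum_{j\leq n}\vy_j^{(t)})$, use the zero-sum structure off the $\mat{A}$-block to bound $\sum_t \alpha^{(t)}\delta^{(t)}\leq \epswel$, and then repeat the two inequalities from the proof of \Cref{lemma:bad-mass} with $k$ replaced by $M$. The paper's own proof is in fact just a sketch that points back to \Cref{lemma:bad-mass} after recording the $\sum_t \alpha^{(t)}\delta^{(t)}\leq \epswel$ bound, so your write-up is slightly more detailed but otherwise identical.
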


The proof of this lemma is similar to that of~\Cref{lemma:bad-mass}. In particular, if we define $\delta^{(t)} \defeq 1 - \sum_{i=1}^n \vx_i^{(t)} \sum_{j=1}^n \vy_j^{(t)}$ for each $t \in [T]$, we have
\[
1 - \epswel \leq \sum_{t=1}^T \alpha^{(t)} \sw(\vx^{(t)} \otimes \vy^{(t)}) = \sum_{t=1}^T \alpha^{(t)} \sum_{i=1}^n \sum_{j=1}^n \vx_i^{(t)} \mat{A}_{i, j} \vy_j^{(t)} \leq 1 - \sum_{t=1}^T \alpha^{(t)} \delta^{(t)},
\]
in turn implying that $\sum_{t=1}^T \alpha^{(t)} \delta^{(t)} \leq \epswel$. The rest of the argument then is the same as in~\Cref{lemma:bad-mass}, giving the conclusion of~\Cref{lemma:bad-mass-new}.

Next, the lemma below gives a bound analogous to~\Cref{lemma:small-prob}. (In what follows, we can safely assume that $c_2 \log n$ is an integer.)

\begin{lemma}
    \label{lemma:near-uni}
    Let $\vmu = \sum_{t=1}^T \alpha^{(t)} (\vx^{(t)} \otimes \vy^{(t)})$
    be a $\frac{1}{2}$-CCE of $(\mat{R}, \mat{C})$ such that each $\vx^{(t)} \otimes \vy^{(t)}$ is supported only on $[n] \times [n]$. Suppose further that $N = n^{c_1} \geq 2n$ and $c_1  - c_2 \log (4M/3) \geq 2$. Then, with high probability, the overall probability mass every player places on every subset of $c_2 \log n$ pure strategies of the $t$th product distribution is at most $2/(M \alpha^{(t)})$.
\end{lemma}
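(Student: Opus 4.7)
The plan is to argue by contradiction and exploit the randomness of $\mat{B}$. Suppose the bound fails for Player $y$: there exist $t^\star \in [T]$ and $S \subseteq [n]$ with $|S| = c_2 \log n$ such that $\sum_{j \in S} \vy_j^{(t^\star)} > 2/(M \alpha^{(t^\star)})$. Because $\vmu$ is supported only on $[n] \times [n]$, Player $x$'s utility under $\vmu$ is at most $\tfrac12$ (the largest entry of $\mat{A}/2$). The goal is to exhibit, with high probability over the draw of $\mat{B}$, a row $i > n$ whose deviation yields Player $x$ a utility exceeding $1$, contradicting the $\tfrac12$-CCE condition. The analogous statement for $\vx^{(t)}$ follows by symmetry via $\mat{B}^\top$ inside $\mat{C}$.

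For any row $i \in \{n+1, \dots, N\}$ satisfying $\mat{B}_{i-n, j} = M$ for all $j \in S$, the deviation utility of Player $x$ from $\vmu$ to pure action $i$ is
\[
    \sum_{t=1}^T \alpha^{(t)} \sum_{j=1}^n \vy_j^{(t)}\, \mat{R}_{i, j} \;\geq\; \frac{M}{2}\,\alpha^{(t^\star)} \sum_{j \in S} \vy_j^{(t^\star)} \;>\; 1,
\]
where we retained only the $t = t^\star$, $j \in S$ contribution. It therefore suffices to find such a row. Any single row $i > n$ qualifies independently with probability $\bigl(\tfrac{3}{4M}\bigr)^{c_2 \log n} = n^{-c_2 \log(4M/3)}$, so the probability that none of the $N - n \geq n^{c_1}/2$ rows works is at most
\[
    \left(1 - n^{-c_2 \log(4M/3)}\right)^{N-n} \;\leq\; \exp\!\left(-\tfrac{1}{2}\, n^{\,c_1 - c_2 \log(4M/3)}\right) \;\leq\; \exp(-n^2/2),
\]
using the hypothesis $c_1 - c_2 \log(4M/3) \geq 2$.

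The choice of witness row depends only on $S$, not on the CCE $\vmu$. So a union bound over the two players, the $T$ indices $t^\star$, and the $\binom{n}{c_2 \log n} \leq 2^{c_2 (\log n)^2}$ candidate subsets $S$ keeps the overall failure probability at $o(1)$, since $\exp(-n^2/2)$ crushes the quasipolynomial factor $2T \cdot 2^{c_2 (\log n)^2}$. This simultaneously guarantees the claimed bound for every $(t,S)$ and both players with high probability, finishing the proof. The argument is modeled on~\Cref{lemma:small-prob}, which treated singletons using the deterministic block $k\mat{I}_n$; the only substantive new ingredient is the randomness of $\mat{B}$, and the main delicate point—really the only one—is ensuring that the quasipolynomial enumeration over subsets $S$ is dominated by the exponential concentration from the $n^{c_1}$ independent rows of $\mat{B}$, which is exactly the content of the hypothesis on $c_1$ and $c_2$.
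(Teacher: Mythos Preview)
Your proof is correct and follows essentially the same route as the paper's: suppose a player concentrates more than $2/(M\alpha^{(t)})$ mass on a size-$c_2\log n$ subset $S$, show that with probability at least $1-e^{-n^2/2}$ some row of $\mat{B}$ has all $M$'s on $S$ so the opponent can deviate for utility exceeding $1$ (versus at most $1/2$ under $\vmu$), and then union-bound over all $\binom{n}{c_2\log n}$ subsets. Your write-up is in fact a bit more explicit than the paper's---you spell out the nonnegativity needed to drop the $t\neq t^\star$ and $j\notin S$ terms, and you quantify the union bound---but there is no substantive difference.
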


\begin{proof}
    Let $d \defeq c_2 \log n$. For the sake of contradiction, suppose that Player $y$ places on some set of $c_2 \log n$ pure strategies of the $t$th product distribution more than $2/(M \alpha^{(t)})$ probability mass. We will compute the probability that there is a row of $\mat{B}$ in which all entries corresponding to $d$ have a value of $M$. If that event happens, Player $x$ can deviate to that row and obtain a utility larger than $1$, while the current utility is at most $1/2$. The probability that event occurs for a single row is $p^d$ for $p \defeq \frac{3}{4M}$, but $\mat{B}$ has $N-n$ rows each with independent randomization (by construction of $\mat{B}$). Thus, the probability that no such rows exists can be bounded as
    \[
        (1 - p^d)^{N - n} \leq e^{ - p^d (N - n) } = e^{ - (N - n) n^{ - c_2 \log (4M/3)} } \leq e^{-\frac{n^2}{2}},
    \]
    where we used the assumption that $N = n^{c_1} \geq 2n$ and $c_1 - c_2 \log(4M/3) \geq 2$. Finally, we can rule out all subsets with size $c_2 \log n$ via a union bound, leading to the claim.
\end{proof}

In the lemma below, the \emph{density} of sets $S, T \subseteq [n]$ with $|S| = |T|$, denoted by $\dens(S, T)$, is the probability that an edge exists (with self-loops included) between two nodes chosen uniformly at random from $S$ and $T$.

\begin{lemma}
    \label{lemma:bipartite}
    Let $M > 4 T$ and $\epswel \leq 1/(10T)$. Given a $\frac{1}{2}$-CCE of $(\mat{R}, \mat{C})$ supported only on $[n] \times [n]$ with welfare at least $1 - \epswel$, we can find in polynomial time sets $S, T \subseteq [n]$ with $ |S| = |T| = c_2 \log n$ and $\dens(S, T) \geq \frac{3}{5}$.
\end{lemma}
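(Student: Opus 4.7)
The plan is to extract the desired sets from a single ``good'' product distribution inside the given $T$-sparse CCE $\vmu = \sum_t \alpha^{(t)} (\vx^{(t)} \otimes \vy^{(t)})$. Since $\vmu$ is supported on $[n] \times [n]$, its welfare equals $\sum_t \alpha^{(t)} \langle \vx^{(t)}, \mat{A} \vy^{(t)}\rangle \geq 1 - \epswel$. Letting $V \defeq \{t : \langle \vx^{(t)}, \mat{A} \vy^{(t)}\rangle < 1 - 2\epswel\}$, a one-line inequality gives $\sum_{t \in V} \alpha^{(t)} \leq 1/2$; pigeonholing the remaining mass over the at most $T$ mixtures outside $V$ then produces an index $t^\star$ with both $\alpha^{(t^\star)} \geq 1/(2T)$ and $\langle \vx^{(t^\star)}, \mat{A} \vy^{(t^\star)}\rangle \geq 1 - 2\epswel$. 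Set $\vx \defeq \vx^{(t^\star)}$ and $\vy \defeq \vy^{(t^\star)}$; the remainder of the argument operates on these two marginals alone.

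For this $t^\star$, Lemma \ref{lemma:near-uni} guarantees that every $d$-subset (with $d \defeq c_2 \log n$) of the supports of $\vx$ and $\vy$ carries mass at most $2/(M \alpha^{(t^\star)}) \leq 4T/M$. I would then construct $S$ and $T$ probabilistically by drawing $i_1, \ldots, i_d \sim \vx$ and $j_1, \ldots, j_d \sim \vy$ independently and setting $S \defeq \{i_s\}_s$, $T \defeq \{j_t\}_t$. Linearity gives an expected non-edge count over the $d^2$ ordered pairs of at most $d^2 \cdot 2\epswel \leq d^2/(5T)$, while the anti-concentration bound above forces the expected number of collisions within $\{i_s\}$ or within $\{j_t\}$ to be at most $\binom{d}{2} \cdot 4T/M$, which is $o(1)$ as soon as $M = \omega(T d^2)$.

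To obtain a \emph{deterministic} polynomial-time algorithm, I would combine these two estimates into a single weighted potential $\Phi$ that sums the non-edge indicators for pairs $(s, t) \in [d]^2$ (suitably rescaled) with a large multiple of the collision indicators, chosen so that the initial expectation is strictly less than $1$. The standard method of conditional expectations, walked through the $2d = O(\log n)$ sample coordinates, then deterministically produces choices on which the collision contribution is zero (since it is integer-valued and dominated by $\Phi$) and the non-edge fraction is below $2/5$, giving both $|S| = |T| = d$ and $\dens(S, T) \geq 3/5$. Each conditional expectation is a linear combination over $[n]$, so the whole procedure runs in $\poly(n)$ time.

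The main technical obstacle is reconciling the free parameters rather than any single step in the proof. Lemma \ref{lemma:near-uni} requires $c_1 - c_2 \log(4M/3) \geq 2$, so the choice $M = \omega(T d^2) = \omega(T (\log n)^2)$ needed to suppress collisions forces $c_1$ (and hence $N = n^{c_1}$) to grow accordingly, and this must remain compatible with the assumption $\epswel \leq 1/(10T)$ and with the completeness side of the reduction, which constrains $k \geq 96 M^2 \log N$. For $T = O(1)$ these inequalities can be satisfied by taking $c_1$ sufficiently large relative to $c_2$; once this bookkeeping is done, the three steps above complete the proof.
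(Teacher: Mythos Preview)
Your sampling-plus-derandomization approach is a natural idea, but it does not prove the lemma under its stated hypothesis $M > 4T$. The collision step is the problem: you correctly bound $\Pr[i_s = i_t] = \|\vx\|_2^2 \leq \|\vx\|_\infty \leq 4T/M$, giving an expected collision count of at most $\binom{d}{2}\cdot 4T/M$ with $d = c_2\log n$. For the conditional-expectations argument to force zero collisions you need this quantity strictly below $1$, i.e.\ $M = \Omega(T(\log n)^2)$. But the lemma only assumes $M > 4T$, and the paper later instantiates $M = 5T$ (a constant when $T = O(1)$), under which your collision bound is $\Theta(d^2)$ and useless. You acknowledge this as ``reconciling free parameters,'' but it is not a bookkeeping issue: blowing $M$ up to $\omega((\log n)^2)$ would force $k \geq 96 M^2 \log N$ to be superpolylogarithmic, which is incompatible with the planted-clique regime $k = \polylog n$ that \Cref{conj:planted} targets.

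The paper avoids this entirely by never sampling. It picks any $t$ with $\alpha^{(t)} \geq 1/T$, notes $\langle \vx^{(t)}, \mat{A}\vy^{(t)}\rangle \geq 1 - \epswel T$, and then runs a two-step averaging argument: first, by Markov, the set $I_y = \{j : \langle \vx^{(t)}, \mat{A}\vec{e}_j\rangle \geq 4/5\}$ carries $\vy^{(t)}$-mass at least $1 - 5\epswel T > 2T/M$; \Cref{lemma:near-uni} then forces $|I_y| \geq c_2\log n$ (since otherwise a $d$-subset would carry too much mass). Second, repeat with roles swapped against the \emph{uniform} distribution on $I_y$ to get $I_x$ of the same size with $\langle \uni(I_x), \mat{A}\,\uni(I_y)\rangle \geq 3/5$. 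The anti-concentration lemma is used only to certify that these Markov-good sets are large, not to suppress collisions, and this is exactly why the weak hypothesis $M > 4T$ suffices.
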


\begin{proof}
    Let $\vmu = \sum_{t=1}^T \alpha^{(t)} (\vx^{(t)} \otimes \vy^{(t)})$, where with a slight abuse of notation we let $\vx^{(t)}, \vy^{(t)} \in \Delta^n$ for all $t \in [T]$ (since $\vmu$ is assumed to be supported only on $[n] \times [n]$). We select any $t \in [T]$ such that $\alpha^{(t)} \geq 1/T$. Since it is assumed that $\vmu$ has welfare at least $1 - \epswel$, it follows that
    \[
        \langle \vx^{(t)}, \mat{A} \vy^{(t)} \rangle \geq \frac{1}{\alpha^{(t)}} \left( \sw(\vmu) - \sum_{\tau \neq t} \alpha^{(\tau)} \sw(\vx^{(\tau)} \otimes \vy^{(\tau)}) \right) \geq 1 - \delta,
    \]
    where $\delta = \epswel T$. Let $I_y^{(t)} \defeq \{ j : \supp(\vy^{(t)}) : \langle \vx^{(t)}, \mat{A} \vec{e}_j \rangle  \geq \frac{4}{5} \}$, where $\vec{e}_j$ is the $j$th unit vector. Then, $\sum_{j \in I_y^{(t)}} \vy_j^{(t)} \geq 1 - 5 \delta$, for otherwise $\langle \vx^{(t)}, \mat{A} \vy^{(t)} \rangle < (1 - 5 \delta) + (5 \delta) \frac{4}{5} = 1 - \delta$. Using the fact that $\epswel \leq \frac{1}{10T} < \frac{1}{5T} - \frac{2}{5M}$ (since $M > 4 T$), it follows that $\sum_{j \in I_y^{(t)}} \vy_j^{(t)} > \frac{2T}{M} \geq \frac{2}{M\alpha^{(t)}}$. As a result, \Cref{lemma:near-uni} tells us that 
    $|I_y^{(t)}| \geq c_2 \log n$. If $|I_y^{(t)}| > c_2 \log n$, we let $I_y^{(t)}$ be an arbitrary subset of $I_y^{(t)}$ with size $c_2 \log n$. We now consider $\uni(I_y^{(t)}) \in \Delta^n$, the uniform distribution over $I_y^{(t)}$. We have
    \[
        \langle \vx^{(t)}, \mat{A} \uni(I_y^{(t)}) \rangle \geq \frac{4}{5}.
    \]
    Similarly, we define $I_x^{(t)} \defeq \{ i \in \supp(\vx^{(t)}) : \langle \vec{e}_i, \mat{A} \vec{u}(I_y^{(t)}) \rangle \geq \frac{3}{5} \}$. Then, $\sum_{ i \in I_x^{(t)} } \vx_i^{(t)} \geq \frac{1}{2} > \frac{2 T}{M}$ (since $M > 4 T$). Thus, \Cref{lemma:optimalCCE} implies that $|I_x^{(t)}| \geq c_2 \log n$. If $|I_x^{(t)}| > c_2 \log n$, we let $I_x^{(t)}$ be an arbitrary subset of $I_x^{(t)}$ with size $c_2 \log n$. Thus, we have found sets $I_x^{(t)}, I_y^{(t)} \subseteq [n]$ with $|I_x^{(t)}| = |I_y^{(t)}| = c_2 \log n$ and
    \[
        \langle \vec{u}(I_x^{(t)}), \mat{A} \vec{u}(I_y^{(t)}) \rangle \geq \frac{3}{5};
    \]
    that is, $\dens(I_x^{(t)}, I_y^{(t)}) \geq \frac{3}{5}$.
\end{proof}

We finally conclude the proof of soundness by stating a lemma shown by~\citet{Hazan11:How}, which we include without proof.

\begin{lemma}[\nakedcite{Hazan11:How}]
    \label{lemma:Hazan-dens}
    Given sets $S, T \subseteq [n]$ with $|S| = |T| = 2000 \log n$ and $\dens(S, T) \geq \frac{3}{5}$, we can find in polynomial time a clique of size $100 \log n$ with high probability.
\end{lemma}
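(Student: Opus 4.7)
The plan is to translate the density condition into quantitative information about the intersection of $S$ and $T$ with the planted clique $K$, and then recover $K$ (or a sufficiently large subset of it) by degree-based thresholding inside the bipartite subgraph $(S, T)$. Since the host graph is $G(n, 1/2)$ outside $K$, the pairs in $(S \times T) \setminus ((K \cap S) \times (K \cap T))$ contribute independent $\mathrm{Bernoulli}(\tfrac{1}{2})$ edges. Chernoff-Hoeffding plus a union bound then yields that the number of such ``random'' edges concentrates at $\tfrac{1}{2}(|S||T| - |K \cap S||K \cap T|)$ with $o(|S||T|)$ additive slack w.h.p. Combined with the hypothesized bound $\dens(S, T) \geq \tfrac{3}{5}$, this forces $|K \cap S| \cdot |K \cap T| \geq \tfrac{1}{5}|S||T|(1 - o(1))$. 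Since $|K| \leq k = \polylog n$ caps both factors, both intersections must themselves be $\Omega(\log n)$; with the constant $2000$ in $|S| = |T| = 2000 \log n$ chosen large enough, we obtain $|K \cap S|, |K \cap T| \geq 200 \log n$.

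The second step is to isolate $K \cap S$ via a degree threshold. For each $v \in S$, compute $d_T(v) \defeq |N(v) \cap T|$. For $v \notin K$, the edges to $T$ are independent $\mathrm{Bernoulli}(\tfrac{1}{2})$, so Hoeffding gives $d_T(v) \leq |T|/2 + O(\sqrt{|T|\log n})$ with probability $1 - n^{-2}$, and a union bound handles all $n$ candidate vertices. For $v \in K \cap S$, on the other hand, $d_T(v) = |K \cap T| + \mathrm{Bin}(|T \setminus K|, \tfrac{1}{2}) \geq |T|/2 + |K \cap T|/2 - O(\sqrt{|T|\log n})$ w.h.p. The separation $|K \cap T|/2 \geq 100 \log n$ dominates the deviation $O(\sqrt{|T|\log n}) = O(\sqrt{2000}\, \log n)$ precisely because the constant $c_2 = 2000$ was chosen large; hence a threshold at $|T|/2 + 50\log n$ produces a set $S^\star \subseteq K$ containing all of $K \cap S$, and in particular $|S^\star| \geq 200\log n$, with high probability.

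Finally, I would extract the clique via a common-neighbor sweep. Let $W \defeq \bigcap_{v \in S^\star} (N(v) \cup \{v\})$. Since $S^\star \subseteq K$ and $K$ is a clique, $K \subseteq W$. For any fixed $u \notin K$, the edges from $u$ to the $|S^\star| \geq 200 \log n$ vertices of $S^\star$ are independent $\mathrm{Bernoulli}(\tfrac{1}{2})$, so $\Pr[u \in W] \leq 2^{-|S^\star|} \leq n^{-200}$; a union bound over $u \in [n]$ gives $W = K$ w.h.p. Since $|K| = k \geq 100 \log n$ (the regime of Lemma~\ref{lemma:smallk}), we simply output any $100 \log n$ vertices of $W$.

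The principal technical obstacle is calibrating constants so that the clique-induced signal $|K \cap T|/2$ strictly dominates the $O(\sqrt{|T|\log n})$ Chernoff fluctuations; this is precisely why the statement insists on $|S| = |T| = 2000 \log n$ rather than a smaller multiple. A secondary subtlety is that the density bound only gives a product lower bound on $|K \cap S| \cdot |K \cap T|$, which a priori permits a lopsided split; this is ruled out by the absolute cap $|K| \leq k = \polylog n$. Everything else is a routine concentration-plus-union-bound argument, and all steps run in polynomial time.
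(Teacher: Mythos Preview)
The paper itself does not prove this lemma---it is cited from \citet{Hazan11:How} without proof. That said, your argument has a genuine gap.

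The problem is step 2 (degree thresholding). You apply Hoeffding to $d_T(v)$ for $v\notin K$, obtain a deviation of order $\sqrt{|T|\log n}\approx\sqrt{2000}\,\log n$, and union-bound only over the $n$ choices of $v$. But the set $T$ is \emph{adaptive}: it is handed to you by an upstream procedure that has already examined the random graph, so you cannot treat it as fixed when invoking concentration. To make the argument valid you must union-bound over all $\binom{n}{2000\log n}\le n^{2000\log n}$ possible sets $T$. That forces the per-event failure probability down to $n^{-\Theta(\log n)}$, which in turn pushes the Hoeffding deviation up to order $\sqrt{|T|\cdot(\log n)^2}=\Theta\big((\log n)^{3/2}\big)$---already far larger than the signal $|K\cap T|/2=O(\log n)$. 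This is not repairable by tuning the constant: with $|T|=C\log n$ the post-union-bound noise is $\Theta\big((\log n)^{3/2}\big)$ while the signal stays $O(\log n)$ for every fixed $C$, so no threshold separates $K$ from its complement.

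Your step 1, by contrast, is essentially sound once the union bound over all pairs $(S,T)$ is made explicit; the slack becomes a small constant fraction of $|S||T|$ rather than $o(|S||T|)$, but the arithmetic still closes and yields $|K\cap S|\ge 200\log n$ with high probability. The clean way to finish---and the route taken in \citet{Hazan11:How}---is to drop the thresholding entirely and brute-force inside $S$: enumerate all $\binom{2000\log n}{100\log n}\le (20e)^{100\log n}=n^{O(1)}$ subsets of $S$ of size $100\log n$ and output any that forms a clique. Since $K\cap S$ already contains such a subset, this succeeds, and no further concentration (hence no adaptivity headache) is required.
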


As a result, \Cref{theorem:constant} follows from~\Cref{lemma:Hazan-dens,lemma:bipartite,lemma:near-uni,lemma:bad-mass-new,lemma:smallk} by making the following choice of parameters:

\begin{itemize}
    \item $M \defeq 5 T$;
    \item $c_2 \defeq 2000$;
    \item $c_1 \defeq \lceil c_2 \log (4M/3) \rceil + 2$ and $N \defeq n^{c_1}$;
    \item $k \defeq \lceil 96 M^2 \log N \rceil$;
    \item $\epswel \defeq \epseq/(4M)$; and
    \item $\epsilon \defeq 1/4$.
\end{itemize}

This reduction can also be used when $T = \omega(1)$, but then the required precision for $\epsilon$ and $\epswel$ would need to be accordingly small (we recall that one has to ultimately divide the entries of~\eqref{eq:Hazangame} by $M = \Theta(T)$ so as to normalize the utilities in $[-1, 1]$, which would result in $\epsilon = \Theta(1/T)$). It is not clear how to overcome this issue and prove stronger lower bounds with the current approach, but we suspect that such lower bounds should apply.
\section{Future research}

Our results raise a number of interesting avenues for future research. First, does~\Cref{theorem:basic-emb} apply with respect to \emph{non-uniform} sparse CCE? That would readily generalize~\Cref{cor:egal-wel,cor:oneutil,cor:inapprox}. Turning to the low-precision regime, what is the precise threshold of computational tractability for $\sparsecce$? \Cref{theorem:constant} only precludes $T = O(1)$ (under constant precision), but we suspect that it can be significantly strengthened. Furthermore, it seems likely that one can instead prove lower bounds by relying on the exponential-time hypothesis (ETH)---as opposed to the planted clique conjecture (\emph{cf.}~\citet{Braverman15:Approximating}), which would provide further evidence for the intractability of the problem. Yet, extending existing ETH-based reductions to sparse CCE appears to introduce considerable challenges. Last but not least, perhaps the most important question concerns the complexity of computing sparse CCE without imposing further constraints, a problem that will likely require quite different techniques from the ones employed here.



\section*{Acknowledgments}

This material is based on work supported by the Vannevar Bush Faculty Fellowship ONR N00014-23-1-2876, National Science Foundation grants RI-2312342 and RI-1901403, ARO award W911NF2210266, and NIH award A240108S001. Alkis Kalavasis was supported by the Institute for Foundations of Data Science at Yale.

\printbibliography

\end{document}